\documentclass[
  aps,
  pra,
  onecolumn,
  superscriptaddress,
  nofootinbib,
  longbibliography
]{revtex4-2}

\usepackage[T1]{fontenc}
\usepackage[utf8]{inputenc}
\usepackage{graphicx}
\usepackage{bm}
\usepackage{bbm}
\usepackage{amsmath,amssymb,amsfonts,mathtools}
\usepackage{amsthm}
\usepackage{booktabs}
\usepackage{multirow}
\usepackage{array}
\usepackage{makecell}
\usepackage{siunitx}
\usepackage{dcolumn}
\usepackage{braket}
\usepackage{qcircuit}
\usepackage{xcolor}
\usepackage{microtype}
\usepackage{url}
\definecolor{NavyBlue}{HTML}{000080}
\usepackage[colorlinks=true,linkcolor=red,citecolor=NavyBlue,urlcolor=magenta]{hyperref}

\theoremstyle{plain}
\newtheorem{theorem}{Theorem}

\newtheorem{lemma}{Lemma}
\newtheorem{corollary}{Corollary}
\theoremstyle{definition}

\theoremstyle{remark}

\allowdisplaybreaks
\newcommand{\orange}[1]{\textcolor{orange}{#1}}

\newcommand{\blue}[1]{\textcolor{blue}{#1}}

\def\O{\mathcal{O}}

\def\R{\mathbb{R}}

\def\Tr{{\rm Tr}}
\newcommand{\E}{\mathop{\mathbb{E}}}
\newcommand{\Var}{\mathop{\rm Var}}

\def\0{\bm{0}}
\def\1{\bm{1}}
\def\2{\bm{2}}
\def\>{\rangle}
\def\<{\langle}

\def\bmt{\bm{\theta}}
\def\bmo{\bm{\omega}}

\begin{document}

\title{Stochastic Pauli-path simulator for large-scale quantum optimization}

\author{Kaining Zhang}
\affiliation{Generative AI Lab, College of Computing and Data Science, Nanyang Technological University, Singapore 639798, Singapore}

\author{Xinbiao Wang}
\affiliation{Generative AI Lab, College of Computing and Data Science, Nanyang Technological University, Singapore 639798, Singapore}

\author{Kunsheng Li}
\affiliation{Generative AI Lab, College of Computing and Data Science, Nanyang Technological University, Singapore 639798, Singapore}

\author{Qixin Zhang}
\affiliation{Generative AI Lab, College of Computing and Data Science, Nanyang Technological University, Singapore 639798, Singapore}

\author{Yuxuan Du}
\email{yuxuan.du@ntu.edu.sg}
\affiliation{Generative AI Lab, College of Computing and Data Science, Nanyang Technological University, Singapore 639798, Singapore}
\affiliation{School of Physical and Mathematical Sciences, Nanyang Technological University, Singapore 639798, Singapore}

\author{Min-Hsiu Hsieh}
\email{min-hsiu.hsieh@foxconn.com}
\affiliation{Hon Hai (Foxconn) Research Institute, Taipei, Taiwan}

\author{Dacheng Tao}
\email{dacheng.tao@ntu.edu.sg}
\affiliation{Generative AI Lab, College of Computing and Data Science, Nanyang Technological University, Singapore 639798, Singapore}

\date{\today}

\begin{abstract}
Pauli-based simulators offer a promising route to large-scale classical simulation of quantum circuits in the low-magic regime. Yet their applicability remains largely limited to forward simulation, making them inadequate for optimization-driven quantum tasks such as variational state preparation and parameter initialization. Existing approaches either lack native support for gradient-based optimization or suffer from severe gradient bias. Here we propose the stochastic Pauli-path simulator (\texttt{SPPS}), a computational framework for large-scale quantum optimization that enables unbiased stochastic gradient estimation via Pauli-path sampling across optimization iterations. Our theoretical analysis shows that the proposed simulator yields unbiased gradient estimates and admits provable convergence guarantees. We systematically evaluate our proposal, including quantum eigensolver benchmarks with up to 100 qubits and quantum neural network benchmarks with up to 40 qubits. Across these tasks, \texttt{SPPS} faithfully tracks optimization dynamics, converges within minutes, and broadens the role of Pauli-based simulation from forward estimation to large-scale quantum optimization.

\end{abstract}

\maketitle

\section{Introduction}
\label{intro}

Advances in classical simulation are indispensable to the progress of quantum computing~\cite{doi:10.1137/050644756,PhysRevLett.116.250501,XU20254104,10.1145/3762672}. Over the years, continued progress in classical simulation has not only provided practical platforms~\cite{pastaq,broughton2021tensorflowquantumsoftwareframework,Zhang2023tensorcircuit,Gidney2021stimfaststabilizer,10313722,asadi2024hybridquantumprogrammingpennylane} for the design and development of quantum algorithms in the absence of large-scale quantum machines, but also imposed increasingly stringent standards on claims of quantum advantage~\cite{Arute2019,doi:10.1126/science.abe8770,Kim2023,RevModPhys.95.035001}. A notable example is random circuit sampling~\cite{Arute2019}, where newly developed tensor-network methods have repeatedly revisited and in some cases narrowed the claimed quantum advantage~\cite{huang2020classical,10.1145/3458817.3487399,PhysRevLett.128.030501,10.1093/nsr/nwae317}. More recently, Pauli-based simulation ($\texttt{PBS}$) methods~\cite{gao2018efficient,PhysRevA.99.062337,xct1-7kf2,PRXQuantum.6.020302,10.1063/5.0269149,Fontana2025,GonzalezGarcia2025paulipath,j1gg-s6zb,PhysRevLett.133.120603,lh6x-7rc3,Lin2026utilityscalequantum,monaco2025symbolic,rudolph2025pauli,bermejo2024quantum,10.21468_SciPostPhysCodeb,d2025circuit,miller2025simulation,upreti2025interplay,upreti2025quantum,facelli2026fast} have opened a complementary route to classically simulating quantum dynamics and estimating their \textit{mean values}. By propagating Pauli representations of operators during the evolution, they remain effective even in regimes of highly entangling quantum dynamics that are challenging for tensor-network methods. Attributed to this capability, \texttt{PBS} methods redefine the classical boundary for utility-scale experiments~\cite{Kim2023,doi:10.1126/sciadv.adk4321}, and advance fundamental science, e.g.,  exploring thermal states and imaginary-time evolution~\cite{gomez2026pauli,rudolph2026thermal}.

Despite the progress, prior \texttt{PBS} methods remain limited in optimization functionality. To be concrete, most of \texttt{PBS} methods~\cite{gao2018efficient,PhysRevA.99.062337,xct1-7kf2} are designed mainly for fixed-circuit simulation. The only exception is truncation-based \texttt{PBS} (\texttt{Tb-PBS}) methods, a.k.a, Pauli propagation simulators~\cite{j1gg-s6zb,PhysRevLett.133.120603,lh6x-7rc3}. They construct a differentiable map from circuit parameters to target observables, while using truncation strategies to control the proliferation of Pauli terms during propagation. Yet as empirically observed in Ref.~\cite{4kyq-n8jb}, the obtained gradients are often \textbf{systematically biased} from the exact gradients. This fundamental limitation restricts the applicability of \texttt{Tb-PBS} in various quantum optimization tasks, where circuit parameters are updated by gradient-based optimizers to minimize a task-specific objective. 
Typical examples cover finding high-overlap initial states for quantum phase estimation~\cite{PhysRevLett.133.250601}, optimizing reference states for eigenstate filtering~\cite{Lin2020optimalpolynomial}, constructing warm-start parameters for state preparation~\cite{Kuzmin2020variationalquantum,PhysRevApplied.16.054035} and circuit compiling~\cite{Khatri2019quantumassisted,Jones2022robustquantum}, and pre-training variational quantum algorithms~\cite{Cerezo2021}.
The severe limitation of \texttt{PBS} towards such wide applications raises a critical challenge:
\par\noindent{\centering
   \textit{Is there any \texttt{PBS} that can advance quantum optimization tasks at scale?} \par
}
Addressing this challenge would have two important implications: (\textbf{\textit{i}}) it would redefine the classical bar for quantum advantage by bringing optimization dynamics into the scope of efficient simulation; (\textbf{\textit{ii}}) it would open a practical route to classical-first optimization for quantum applications with high experimental overhead, substantially reducing the quantum resources required for deployment.

Here we provide a positive answer to this question by proposing the \textit{stochastic Pauli-Path simulator} (\texttt{SPPS}), an efficient and scalable \texttt{PBS} for large-scale quantum optimization. In contrast to prior \texttt{Tb-PBS} methods, the key idea of \texttt{SPPS} is to dynamically sample propagation paths from the full path space during Heisenberg evolution and correct each sampled contribution by importance reweighting. To this end,  we devise a \textit{path automatic differentiation method}, which effectively converts path contribution into unbiased gradient estimates through simple algebraic score factors. 
On the theoretical side, we prove that unlike \texttt{Tb-PBS}s that suffer from biased gradients (Theorem~\ref{thm_toy_bias}) and sub-optimal convergence (Corollary~\ref{cor_opt_gap}), \texttt{SPPS} enables unbiased gradients with guaranteed accuracy (Theorem~\ref{thm_spp_estimator}), which leads to provably convergent optimization trajectories (Corollary~\ref{cor_spp_sgd}). These theoretical analyses reveal that \texttt{SPPS} can faithfully simulate quantum optimization driven by gradient-based optimizers, with sample complexity depending on the circuit structure and parameters.

To validate the effectiveness of \texttt{SPPS}, we conduct systematic quantum optimization experiments, including pre-training variational quantum eigen-solvers (VQEs)~\cite{Peruzzo2014,rudolph2023synergistic,PhysRevResearch.5.043217,khan2023pre}, pre-training quantum neural networks (QNNs)~\cite{bermejo2024quantum,Benedetti_2019,verdon2019learning}, and preparing quantum state encoding circuits~\cite{PhysRevResearch.4.023136,PhysRevA.109.052423,zhang2026aqer}. Across these standard benchmarks, \texttt{SPPS} consistently achieves better convergence than \texttt{Tb-PBS} methods with substantially less runtime and maintains stable optimization dynamics. Specifically, \texttt{SPPS} completes the pre-training of VQE on 100-qubit Ising model in about one minute and the pre-training of QNN on a synthetic dataset with 40 qubits in less than ten minutes. These results demonstrate \texttt{SPPS} as a route to faithful classical simulation of large-scale quantum optimization.

In summary, our primary contributions are threefold. 
(i) We theoretically characterize the limitation of \texttt{Tb-PBS} methods in quantum optimization, showing the dilemma between the accurate mean-value estimation and the faithful gradient estimation. 
(ii) We develop \texttt{SPPS} to construct unbiased stochastic gradients, with both sample-complexity and convergence guarantees. 
(iii) We conduct systematic experiments up to $100$ qubits, validating that \texttt{SPPS} faithfully tracks optimization dynamics while improving accuracy and runtime over \texttt{Tb-PBS} methods.
The corresponding code is available at \href{https://anonymous.4open.science/r/SPPS-371D/README.md}{\blue{GitHub}} for reproducibility purposes.

\section{Preliminary}
\label{pre}

Here, we present the basics of quantum computing, quantum tasks with optimization, variational quantum algorithms, and \texttt{PBS} methods, followed by related works. Refer to App.~\ref{spp_app_pre} for more details.

\textbf{Basics of quantum computing.} 
The pure state of a single qubit is represented by a normalized vector \(\ket{\psi}\in\mathbb{C}^2\)~\cite{feynman1982simulating}, which admits the computational-basis expansion \(\ket{\psi}=a\ket{0}+b\ket{1}\) with \(a,b\in\mathbb{C}\) and \(|a|^2+|b|^2=1\). More generally, quantum states (including mixed states) are described by a density operator \(\rho\), with pure states as the special case \(\rho=\ket{\psi}\!\bra{\psi}\) and \(\bra{\psi}=\ket{\psi}^\dagger\). An \(n\)-qubit state lives in the tensor product of single-qubit spaces as \((\mathbb{C}^2)^{\otimes n}\).
Quantum operations can be implemented by {gates}~\cite{doi:10.1126/science.273.5278.1073}, i.e., unitary operators acting on one or more qubits, and a {quantum circuit}~\cite{10.1098/rspa.1989.0099} is the composition of a sequence of such gates. Common gates include the Hadamard gate $H = \frac{1}{\sqrt{2}}(\begin{smallmatrix} 1 & 1 \\ 1 & -1 \end{smallmatrix})$ and CNOT gate $ {\rm diag} [ (\begin{smallmatrix} 1 & 0 \\ 0 & 1 \end{smallmatrix}), (\begin{smallmatrix} 0 & 1 \\ 1 & 0 \end{smallmatrix}) ]$. Beyond fixed gates, quantum circuits also employ variational unitaries, e.g., Pauli rotations $R_{P}(\theta)=\exp{- \imath \theta P/2}$, where $P \in \mathcal{P}_n=\{\pm 1\} \times \{I,X,Y,Z\}^{\otimes n}$ is a Pauli operator with $I = (\begin{smallmatrix} 1 & 0 \\ 0 & 1 \end{smallmatrix})$, $X = (\begin{smallmatrix} 0 & 1 \\ 1 & 0 \end{smallmatrix})$, $Y = (\begin{smallmatrix} 0 & -\imath \\ \imath & 0 \end{smallmatrix})$, $Z = (\begin{smallmatrix} 1 & 0 \\ 0 & -1 \end{smallmatrix})$ being Pauli-I, -X, -Y, -Z matrices. A quantum measurement~\cite{Helstrom1969} is the process of extracting classical information from a quantum state, which is specified by a Hermitian operator $O$ known as the observable. Given a state \(\rho\), the measurement of $O$ yields a random variable whose expectation value or mean value is $\Tr[O\rho]$.

\begin{figure*}[t]
    \centering
    \includegraphics[width=0.95\linewidth]{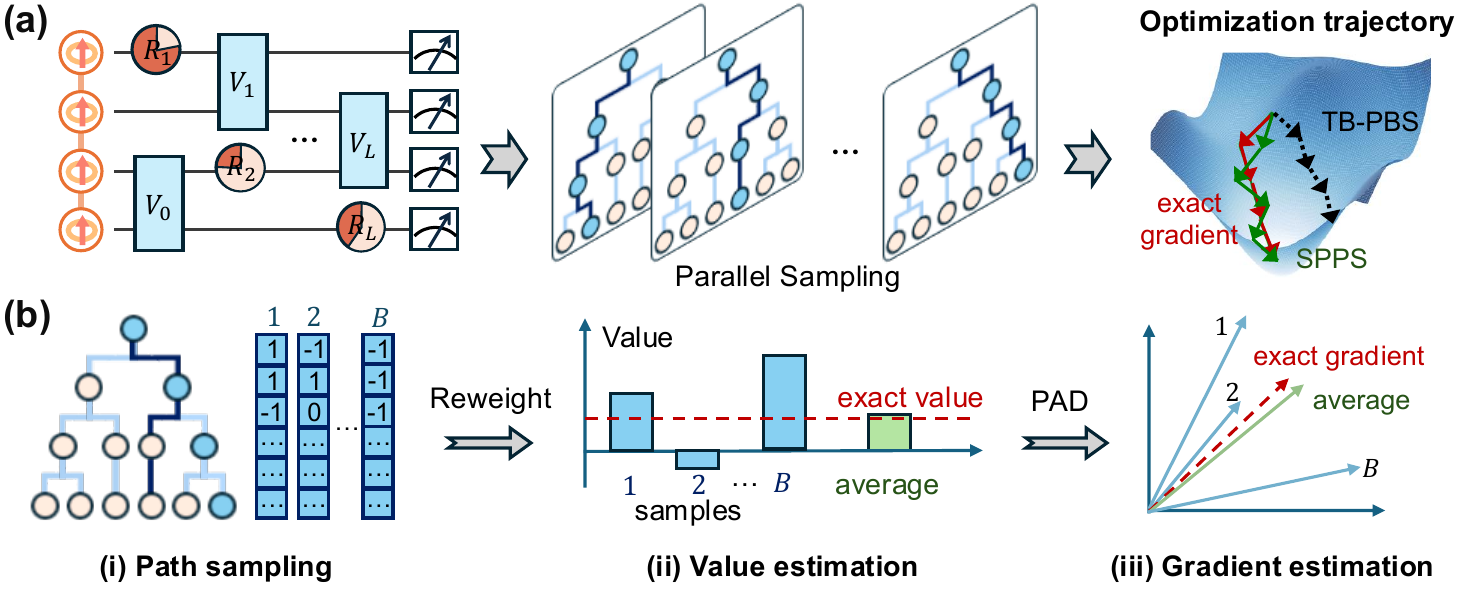}
\caption{ 
{\textbf{Stochastic Pauli-path simulator for gradient estimation.}
\textbf{(a)} Overview of the \texttt{SPPS} framework. \texttt{SPPS} samples propagation paths from the full path space and combines them into unbiased estimators of both expectation values and gradients. The sampled paths are generated independently and can therefore be evaluated in parallel.
\textbf{(b)} Core algorithm of \texttt{SPPS}. A group of sampled paths is first converted into an unbiased estimator of the target expectation value by importance reweighting. The same path-sampling mechanism is then coupled with path automatic differentiation to obtain unbiased stochastic gradient estimators. } }
    \label{dpp_fig2_main_fig}
\end{figure*}

\textbf{Classical simulation of quantum optimization.} 
Many quantum optimization tasks, e.g., VQE~\cite{Peruzzo2014,PhysRevX.6.031007,Kandala2017}, circuit compiling~\cite{Khatri2019quantumassisted,Jones2022robustquantum,wang2023symmetric}, quantum state preparation~\cite{PhysRevResearch.4.023136,PhysRevA.109.052423,zhang2026aqer}, and quantum machine learning (QML)~\cite{PhysRevA.98.032309,Havlicek2019,PhysRevA.101.032308,cheng2023offline,10138016,du2025quantummachinelearninghandson}, can be unified through objectives constructed from expectation values, i.e.,
\begin{align}
C(\bmt)
&=\sum \nolimits_{k=1}^{K}\ell_k\left(f(\bmt;O_k, \rho_k)\right) =\sum \nolimits_{k=1}^{K}\ell_k\left(\Tr\left[O_k U(\bmt)\rho_k U(\bmt)^\dag\right]\right),
\label{dypp_f_loss_eq}
\end{align}
where the input states $\rho_k$, observables $O_k$, and scalar functions $\ell_k$ are specified by the concrete task.
The optimization of Eq.~(\ref{dypp_f_loss_eq}) is typically performed using gradient-based classical optimizers, such as gradient descent (GD) $\bmt^{(t+1)}=\bmt^{(t)}-\eta \nabla C(\bmt^{(t)})$ with learning rate $\eta$.
Classical simulation of these optimization tasks therefore requires access to both expectation values and their gradients, since $\nabla C(\bmt)$ is obtained by applying the chain rule to functions $f(\bmt;O_k,\rho_k)$.
Apart from \texttt{PBS} methods analyzed in this work, common classical simulation approaches include exact state-vector~\cite{bergholm2018pennylane,javadiabhari2024quantumcomputingqiskit} and tensor-network methods~\cite{pastaq,broughton2021tensorflowquantumsoftwareframework}. State-vector simulation supports exact auto-differentiation through an explicit state-vector representation, but scales exponentially with the number of qubits. Tensor-network methods can scale to large systems when states and circuit evolution have limited entanglement, but their cost can grow rapidly beyond such regimes.

\textbf{Variational quantum algorithms.} Variational quantum algorithms (VQAs)~\cite{Cerezo2021} are hybrid quantum--classical algorithms that optimize the parameters $\bmt$ in objectives defined in Eq.~\eqref{dypp_f_loss_eq}. Unlike classical simulators that obtain gradients $\nabla C(\bmt)$ by automatic differentiation, VQAs estimate such gradients \textit{through repeated circuit executions and measurements on real quantum hardware}. For example, for gates satisfying the parameter-shift rule~\cite{PhysRevA.99.032331,crooks2019gradients,Wierichs2022generalparameter}, each partial derivative can be obtained from shifted circuits, e.g., $\partial_j f(\bmt;O,\rho)=\frac{1}{2}\left[f(\bmt+\frac{\pi}{2}\bm{e}_j;O,\rho)-f(\bmt-\frac{\pi}{2}\bm{e}_j;O,\rho)\right]$, where $\bm{e}_j$ is the computational basis vector whose $j$-th entry is $1$. Thus, full-gradient estimation requires shifted circuit evaluations for each trainable parameter, leading to a quadratic total gate-operation cost. Despite their potential advantages in learning performance~\cite{Abbas2021,Caro2022,wang2024separable}, the quadratic scaling of gradient evaluation makes VQA training resource-intensive~\cite{liu2023can} and, in many cases, prohibitively expensive.

\textbf{\texttt{PBS} methods.} 
The function $f(\bmt)$ in Eq.~(\ref{dypp_f_loss_eq}) can be reformulated by propagating the observable in the Heisenberg picture~\cite{gottesman1998heisenberg} as $O \mapsto U(\bmt)^\dag O U(\bmt)$. As shown in Fig.~\ref{dpp_fig2_main_fig}, when $O\in\mathcal{P}_n$ and $U(\bmt)= \prod_{j=1}^L R_{P_j}(\bmt_j)$, $f(\bmt)$ admits the following Pauli expansion
\begin{equation}
\label{dypp_f_loss_pp_eq}
f(\bmt)= \sum\nolimits_{\bmo\in\Omega} \Psi_{\bmo}(\bmt) \Tr[P_{\bmo} (O) \rho] := \sum\nolimits_{\bmo\in\Omega} h_{\bmo}(\bmt).
\end{equation}
Here, $\Omega\subseteq\{0,\pm1\}^L$ denotes the set of Pauli paths $\bmo$. For each path $\bmo$, $\Psi_{\bmo}(\bmt)$ is the accumulated trigonometric weight and $P_{\bmo}(O)\in\mathcal{P}_n$ is the resulting Pauli operator. These paths are generated by sequential propagation over $j=1,\ldots,L$. For example, let $P$ be the Pauli operator after the first $j-1$ steps. If $P$ commutes with $P_j$, the propagation gives a single branch $P\mapsto P$ labeled by $\bmo_j=0$. Otherwise, it splits as $P \mapsto \cos(\bmt_j)P+\sin(\bmt_j)\imath P_jP$, where $\bmo_j=1$ and $-1$ label the cosine and sine branches, respectively. Eq.~(\ref{dypp_f_loss_pp_eq}) provides the foundation of \texttt{Tb-PBS} methods~\cite{xct1-7kf2,PRXQuantum.6.020302,Fontana2025}, which truncate the set $\Omega$ to make the simulation practical. Refer to App.~\ref{app_PP_notion} for the details about how to generalize Eq.~(\ref{dypp_f_loss_pp_eq}) to arbitrary observables and circuits interleaved with Clifford gates~\cite{PhysRevA.71.042315}.

\textbf{Related works.}
Classical simulators for quantum optimization include state-vector~\cite{bergholm2018pennylane,javadiabhari2024quantumcomputingqiskit}, tensor-network~\cite{pastaq,javadiabhari2024quantumcomputingqiskit}, and Pauli-based methods~\cite{Lin2026utilityscalequantum,monaco2025symbolic,rudolph2025pauli,bermejo2024quantum}, which are complementary in scope. 
Specifically, state-vector simulators provide exact gradients but scale exponentially with qubit number, tensor-network simulators support large systems in low-entanglement regimes, while Pauli-based methods exploit low-magic structure and can handle large-scale and highly entangled quantum systems. 
In the low-magic regime,  \texttt{Tb-PBS} methods have shown promise for simulating optimization tasks such as VQEs~\cite{Lin2026utilityscalequantum,monaco2025symbolic} and QNNs~\cite{bermejo2024quantum}. 
However, a recent empirical study~\cite{4kyq-n8jb} indicated an intrinsic limitation of these approaches, as the gradient information is highly biased. 
Our proposal addresses this fundamental limitation, pushing the frontier of \texttt{PBS} towards various optimization tasks at scale.

\section{Biased gradient estimation for \texttt{Tb-PBS}}
\label{dpp_method_biased_grad}

Here we systematically analyze the gradient bias induced by truncations used in \texttt{Tb-PBS}s. To this end, we first unify different \texttt{Tb-PBS} methods into the same framework. The unified framework allows us to prove that these methods can always encounter non-vanishing gradient errors.

\textbf{A unified framework of \texttt{Tb-PBS}.} Since the path set $\Omega$ in Eq.~(\ref{dypp_f_loss_pp_eq}) can grow exponentially with the number of parameters in $\bmt$, \texttt{Tb-PBS} methods construct a reduced subset $\hat{\Omega}\subsetneq\Omega$ via some threshold-based rules and approximate $f(\bmt)$ in Eq.~(\ref{dypp_f_loss_pp_eq}), i.e.,
\begin{equation}\label{dypp_f_loss_pp_truncate_eq}
\hat{f}(\bmt)=\sum\nolimits_{\bm{\omega}\in\hat{\Omega}} h_{\bmo}(\bmt).
\end{equation}
There are three truncation strategies to control the error $|\hat f(\bmt)-f(\bmt)|$: 
\emph{coefficient truncation} (\texttt{CT}), which keeps paths with coefficient magnitude at least $\tau$ by using 
$\hat{\Omega}=\{\bmo\in\Omega: |\Psi_{\bmo}(\bmt)|\geq \tau\}$~\cite{PRXQuantum.6.020302,Lin2026utilityscalequantum,doi:10.1126/sciadv.adk4321}; 
\emph{frequency truncation} (\texttt{FT}), which retains paths whose Fourier level is at most $\nu$ by using 
$\hat{\Omega}=\{\bmo\in\Omega: \|\bmo\|_0\leq \nu\}$, where $\|\bmo\|_0$ counts nonzero indices in $\bmo$~\cite{Fontana2025,rudolph2023classical,PhysRevA.108.032406}; 
and \emph{Pauli-weight truncation} (\texttt{WT}), which retains paths whose Pauli weights never exceed $\gamma$ during propagation by using 
$\hat{\Omega}=\{\bmo\in\Omega: \|P_{(\omega_1,\ldots,\omega_j,0,\ldots,0)}\|_0\leq\gamma,\ \forall j\in[L]\}$,
where $\|P\|_0$ is the number of non-identity single-qubit Pauli factors in $P$~\cite{xct1-7kf2,PhysRevLett.133.120603,lh6x-7rc3}.  

However, the above \texttt{Tb-PBS} methods do not control the error in the corresponding gradients in general. The following theorem formalizes this separation between function value and gradient accuracy, where the proof is deferred to App.~\ref{app_bias_grad}. 

\begin{theorem}
\label{thm_toy_bias}
We consider the function $f(\bmt)$ in Eq.~(\ref{dypp_f_loss_pp_eq}), where $\rho= ({|0\>}{\<0|})^{\otimes n}$, $O$ is a Pauli observable, and $f$ is non-constant on $\mathbb{R}^L$. Then for any $\epsilon\in(0,1)$, there exist a set $\hat{\Omega}\subsetneq\Omega$ and a region $\Theta\subset\mathbb{R}^L$, such that $\hat{f}(\bmt)$ in Eq.~(\ref{dypp_f_loss_pp_truncate_eq}) satisfies
$| \hat{f}(\bmt)-f(\bmt) | \le \epsilon$ and $\|\nabla_{\bmt}\hat{f}(\bmt)-\nabla_{\bmt}f(\bmt) \|_2 \ge 1-\epsilon$, $\forall \bmt\in\Theta$.
\end{theorem}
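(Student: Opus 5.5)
The approach is to remove a \emph{single} non-constant Pauli path from $\Omega$ and then choose $\Theta$ near a point where that one path contribution vanishes but its slope in one coordinate is close to $\pm1$. In other words, I take $\hat{\Omega}=\Omega\setminus\{\bmo^\star\}$, so that
\begin{equation*}
f(\bmt)-\hat f(\bmt)=h_{\bmo^\star}(\bmt),
\end{equation*}
and the whole claim reduces to properties of one product of trigonometric factors.

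\textbf{Choosing $\bmo^\star$.} By the propagation rule, each path contributes
\begin{equation*}
h_{\bmo}(\bmt)=c_{\bmo}\prod_{k:\omega_k=1}\cos\bmt_k\prod_{k:\omega_k=-1}\sin\bmt_k ,
\end{equation*}
with $c_{\bmo}=\pm\Tr[P_{\bmo}(O)\rho]$ up to the phase bookkeeping of $\imath P_jP$. For $\rho=(|0\rangle\langle 0|)^{\otimes n}$ and a Pauli $P_{\bmo}(O)$, this trace is $\pm1$ if $P_{\bmo}(O)$ is $Z$-type and $0$ otherwise. Hence every nonzero $c_{\bmo}$ has $|c_{\bmo}|=1$. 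Paths with $\bmo=\0$ give constants. Since $f=\sum_{\bmo}h_{\bmo}$ is non-constant, at least one path $\bmo^\star\neq\0$ must have $c_{\bmo^\star}\neq0$. Fix an index $j$ with $\omega^\star_j\neq0$. The inclusion $\hat{\Omega}\subsetneq\Omega$ is then automatic.

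\textbf{Constructing $\Theta$.} First choose an anchor point $\bmt^0$:
\begin{itemize}
\item set $\bmt^0_k=0$ for every $k\neq j$ with $\omega^\star_k=1$, so $\cos\bmt^0_k=1$;
\item set $\bmt^0_k=\pi/2$ for every $k\neq j$ with $\omega^\star_k=-1$, so $\sin\bmt^0_k=1$;
\item leave coordinates with $\omega^\star_k=0$ arbitrary;
\item set $\bmt^0_j=0$ if $\omega^\star_j=-1$, and $\bmt^0_j=\pi/2$ if $\omega^\star_j=1$.
\end{itemize}
At $\bmt^0$ the $j$-th factor vanishes, its derivative has magnitude $1$, and all other factors equal $1$. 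So $h_{\bmo^\star}(\bmt^0)=0$ and $|\partial_j h_{\bmo^\star}(\bmt^0)|=1$.

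Now let $\Theta$ be an open box of half-width $\delta$ around $\bmt^0$ in the coordinates in the support of $\bmo^\star$, and unrestricted in the others. On $\Theta$, write $g(\bmt)$ for the product of the non-$j$ factors and $s(\bmt_j)$ for the $j$-th factor. Then
\begin{equation*}
|h_{\bmo^\star}|\le|s(\bmt_j)|\le|\sin\delta|\le\delta, \qquad |\partial_j h_{\bmo^\star}|=|g|\,|s'(\bmt_j)|\ge(\cos\delta)^{L}.
\end{equation*}
Choose $\delta\in(0,\pi/2)$ so that $\delta\le\epsilon$ and $(\cos\delta)^L\ge 1-\epsilon$; for example $\delta=\min\{\epsilon,\arccos((1-\epsilon)^{1/L})\}$. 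This gives $|\hat f-f|\le\epsilon$ and
\begin{equation*}
\|\nabla\hat f-\nabla f\|_2\ \ge\ |\partial_j h_{\bmo^\star}|\ \ge\ 1-\epsilon \qquad\text{for all }\bmt\in\Theta .
\end{equation*}

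\textbf{Main obstacle.} The delicate step is the bookkeeping that justifies the product form of $h_{\bmo}$ with $|c_{\bmo}|\in\{0,1\}$. This must hold in the general setting of App.~\ref{app_PP_notion}, where Clifford gates are interleaved and phases such as $\imath P_jP$ appear. I would handle it by noting that Clifford conjugation maps Paulis to signed Paulis, so each path remains a signed Pauli times a monomial in $\cos$ and $\sin$.

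One must also confirm that distinct paths are distinct index sequences. Then removing one path removes exactly that monomial, even if other paths share the same trigonometric monomial. This is harmless, because $f-\hat f$ is by definition exactly $h_{\bmo^\star}$. Finally, the sign convention for $\pm\pi/2$ rotations does not matter, since only $|\cos|$ and $|\sin|$ enter the bounds.
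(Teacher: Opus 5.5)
Your proposal is correct and follows the paper's proof essentially step for step. Like the paper, you remove a single path $\bmo^\star$ whose trace is nonzero (hence $\pm1$) and which has a nonzero entry $\omega^\star_j$. You then take $\Theta$ to be a box around the point where the $j$-th trigonometric factor vanishes and all other active factors equal one. The only cosmetic difference is the box width: you use a uniform half-width $\delta=\min\{\epsilon,\arccos((1-\epsilon)^{1/L})\}$, whereas the paper uses half-width $\epsilon$ in coordinate $j$ and $\sqrt{\epsilon/L}$ in the other active coordinates, and then bounds $(\cos\sqrt{\epsilon/L})^L$ via $(1-\epsilon/L)^L\ge 1-\epsilon$.
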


Theorem~\ref{thm_toy_bias} shows that a truncated path set can approximate $f(\bmt)$ to arbitrary precision while retaining a non-negligible gradient bias, since paths negligible for the expectation value may still contribute dominantly to the gradient. This result holds in general and does not restrict the circuit structure or the way to generate $\hat{\Omega}$. As explained in App.~\ref{app_worst_case_truncation}, \texttt{CT-}, \texttt{FT-}, or \texttt{WT-PBS} methods can generate $\hat{\Omega}$ in Theorem~\ref{thm_toy_bias} for certain circuits and observables, implying their fundamental limitations.

To reflect how the biased gradients influence the optimization trajectory, we conduct the following analysis. Let the initial parameter be $\bmt^{(0)}$ with learning rate $\eta$, and $\bmt_{\rm ex}^{(t)}$ and $\bmt_{\rm tr}^{(t)}$ separately denote the $t$-th iterations driven by the exact gradient and by the biased gradient from \texttt{Tb-PBS}. The following corollary quantifies the difference between the two optimization trajectories.

\begin{corollary}
\label{cor_opt_gap}
Following the notation of Theorem~\ref{thm_toy_bias}, there exist a circuit $U(\bmt)$, an observable $O$, and the corresponding $\hat{\Omega}\subsetneq\Omega$ produced by either \texttt{CT}, \texttt{FT}, or \texttt{WT}, such that for any $\epsilon\in(0,1)$, there is an initialization $\bmt^{(0)}$ satisfying $|\hat{f}(\bmt_{\rm tr}^{(t)}) - f(\bmt_{\rm tr}^{(t)}) | \le \epsilon$ for arbitrary $t \in \mathbb{N}$, while two GD trajectories using exact and \texttt{Tb-PBS} gradients obey $| f(\bmt_{\rm tr}^{(t)}) - f(\bmt_{\rm ex}^{(t)})| \geq {\Omega}(\eta t \|O\|_2^2)$ for $t \leq \O(1/(\eta\|O\|_2))$. The gap converges to $\Omega(\|O\|_2)$ when $t \rightarrow \infty$.
\end{corollary}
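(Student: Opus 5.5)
We will prove Corollary~\ref{cor_opt_gap} with an explicit low-dimensional construction in which the truncated landscape $\hat f$ is flat while the exact landscape $f$ is not. Fix a single qubit (or embed it into $n$ qubits by acting trivially on the rest), take $\rho=|0\>\<0|$ and $O=cZ$ with $c=\|O\|_2$, and use the circuit $U(\bmt)=R_X(\theta_1)$, so that $L=1$. Propagating $O$ gives the two paths
\begin{equation*}
c\cos(\theta_1)Z \quad (\bmo_1=1), \qquad c\sin(\theta_1)\,\imath XZ \quad (\bmo_1=-1),
\end{equation*}
up to the paper's sign and angle convention. Only the first path has nonzero overlap with $\rho$, so $f(\theta)=c\cos\theta$. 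To keep a nontrivial path that truncation can discard, I will work with a two-parameter variant if needed. The essential structure is that $\hat\Omega$ retains only paths whose contribution to $\hat f$ is constant near the initialization. For example, \texttt{FT} with $\nu=0$ keeps only the zero-frequency path. In a circuit whose observable first meets a commuting layer, this yields $\hat f\equiv \text{const}$ and hence $\nabla\hat f\equiv 0$. Concretely, I would use two gates $R_X(\theta_2)R_X(\theta_1)$, or an equivalent design, arranged so that the discarded paths contribute $c(\cos\theta-1)$-type terms. These terms are small in value near $\theta=0$, yet their gradient is $c\sin\theta$.

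\textbf{Step 1: constant truncated trajectory.} With this choice, $\nabla\hat f\equiv0$, so the \texttt{Tb-PBS} iterates never move: $\bmt_{\rm tr}^{(t)}=\bmt^{(0)}$ for all $t$. The value error $|\hat f(\bmt^{(0)})-f(\bmt^{(0)})|=c|1-\cos\theta^{(0)}|$ can be made at most $\epsilon$ by taking $\theta^{(0)}$ small, for instance $\theta^{(0)}\approx\sqrt{2\epsilon/c}$ after rescaling. Because the truncated trajectory is frozen, this bound holds uniformly in $t$. This is the mechanism of Theorem~\ref{thm_toy_bias} made dynamic.

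\textbf{Step 2: the exact trajectory moves.} Minimizing $f=c\cos\theta$ by GD gives $\theta^{(t+1)}=\theta^{(t)}+\eta c\sin\theta^{(t)}$, which drives $\theta$ toward $\pi$. The exact trajectory therefore moves away from the minimum-free plateau of $\hat f$. In the early phase, with $\theta\in[\theta^{(0)},\pi/2]$, each step decreases $f$ by at least $\eta\|\nabla f\|^2\cdot(1-\eta L_f/2)$, where the smoothness constant is $L_f=c$; we take $\eta\le 1/c$. To obtain a gradient of order $c$ immediately rather than of order $c\theta^{(0)}$, I choose $\theta^{(0)}$ so that the truncated part sits where $\hat f$ is flat while $f$ has slope of order $c$. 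This means placing the discarded path at its inflection point, using a shifted parametrization $f=c\cos\theta+\text{(retained constant)}$. The discarded term is then small in value but has maximal slope, exactly as in the proof of Theorem~\ref{thm_toy_bias}. Under that placement, summing the per-step decrease gives $f(\bmt_{\rm ex}^{(0)})-f(\bmt_{\rm ex}^{(t)})\ge \Omega(\eta t c^2)$ while $t\le \O(1/(\eta c))$, since the slope stays of order $c$ over an angle range of order $1$. For $t\to\infty$, GD converges to the minimizer $\theta=\pi$, so the gap tends to $f(\bmt^{(0)})-f(\pi)=\Omega(c)$.

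\textbf{Step 3: realizability by \texttt{CT}/\texttt{FT}/\texttt{WT}.} Finally, I will verify that $\hat\Omega$ arises from a threshold rule. \texttt{FT} with $\nu$ below the discarded path's frequency excludes it. For \texttt{CT}, pick $\tau$ above the discarded path's coefficient at $\bmt^{(0)}$; since the truncated trajectory never moves, this holds for all $t$. For \texttt{WT}, route the discarded branch through a higher-weight Pauli via a CNOT-conjugated rotation and set $\gamma$ below that weight. I will cite App.~\ref{app_worst_case_truncation} for these realizations.

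\textbf{Main obstacle.} The hardest part is reconciling two requirements: a small value error $\epsilon$ at the frozen point, and a gradient of order $c$ there. A single cosine cannot be small near its flat point while having large slope. I will resolve this as Theorem~\ref{thm_toy_bias} presumably does, by making the discarded contribution a product such as $c\sin\theta_1\sin\theta_2$. Evaluated at $\theta_1=0$, it is zero in value while its $\partial_{\theta_1}$ derivative is $c\sin\theta_2$, which is of order $c$.
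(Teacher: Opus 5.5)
\textbf{There is a genuine gap.} As written, your proposal does not prove the statement. It never commits to a single construction, and the pieces it combines are incompatible.

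First, Step~1 misreads the path structure. Each path contributes $c\,\Psi_{\bmo}(\bmt)\Tr[P_{\bmo}(O)\rho]$ with $\Psi_{\bmo}$ a monomial in $\cos\bmt_j$ and $\sin\bmt_j$, so no discarded path is of ``$c(\cos\theta-1)$-type.'' The zero-frequency path also needs $\bmo_j=0$ at \emph{every} layer. For $O=cZ$, the existence of that path would mean the tree never branches, so $f$ would be constant. Hence for nonconstant $f$, \texttt{FT} with $\nu=0$ gives $\hat\Omega=\emptyset$ and $\hat f\equiv 0$, never $\hat f\equiv c$. The value error at the frozen point is therefore $|f(\bmt^{(0)})|$, not $c|1-\cos\theta^{(0)}|$.

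Second, Steps~1 and~2 pull in opposite directions. A small $\theta^{(0)}$ gives slope $c|\sin\theta^{(0)}|=\O(\sqrt{c\epsilon})$, so the early gap is not $\Omega(\eta t\|O\|_2^2)$ uniformly in $\epsilon$. The inflection point $\pi/2$ gives slope $c$, but keeps the value error small only if the retained constant is $0$.

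Third, the product-term fix in your last paragraph (drop $c\sin\theta_1\sin\theta_2$, keep $c\cos\theta_1\cos\theta_2$) gives up $\nabla\hat f\equiv 0$. The frozen-trajectory arguments of Steps~1 and~3 (a value bound uniform in $t$, a fixed \texttt{CT} threshold) then no longer apply, and nothing replaces them. Take your suggested $R_X(\theta_2)R_X(\theta_1)$ on one qubit, where $f=c\cos(\theta_1+\theta_2)$. Truncated GD started at $\theta_1=0$ keeps $\theta_1=0$ and drives $\theta_2\to\pi$, reaching $\hat f=f=-c$. That is the same limit as exact GD, so the asymptotic gap is zero.

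\textbf{How to repair it.} The self-consistent version of your frozen-trajectory idea is $f=c\cos\theta$ with $\hat\Omega=\emptyset$ (\texttt{FT}, $\nu=0$) and $\theta^{(0)}=\pi/2$. The value error is then $0$ for all $t$. Exact GD has slope at least $c\cos(1/2)$ for $t\le 1/(2\eta c)$, which gives an $\Omega(\eta t c^2)$ gap, and the gap tends to $c$. This proves the literal statement, but only because the truncation deletes every path with nonzero trace, so the surrogate is identically zero.

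The paper's proof is nondegenerate and supplies the two ingredients your product-term variant lacks.
(a) It keeps $F_{\rm keep}=\frac{\alpha}{\sqrt2}\cos\bmt_1$ and starts at $(\pi-\delta,0,\ldots,0)$, next to a minimizer of $\hat f$ at which the dropped term $-\frac{\alpha}{\sqrt2}\sin\bmt_1\prod_{j\ge2}\cos\bmt_j$ vanishes. The \emph{moving} truncated trajectory therefore stays trapped with $|\sin\bmt_1|\le\delta$ for all $t$.
(b) A fixed rotation $R_{X_1}(\pi/4)$ makes the exact objective on the invariant subspace equal to $\frac{\alpha}{\sqrt2}(\cos\bmt_1-\sin\bmt_1)$. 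This has slope of order $\alpha$ at $\pi-\delta$, and its minimizer ($3\pi/4$, value $-\alpha$) differs from that of $\hat f$ ($\pi$, value $-\alpha/\sqrt2$).

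Together these give the linear early gap and the asymptotic gap $\alpha(1-1/\sqrt2)$. The CZ fanout lets \texttt{FT} and \texttt{WT} remove the same path.
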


Corollary~\ref{cor_opt_gap} shows that gradient bias can accumulate into a macroscopic optimization error. 
Even when \texttt{Tb-PBS} gives an $\epsilon$-accurate estimate of the objective value at each iterate, its gradient can still move GD in a direction inconsistent with the exact gradient. 
Across $t$ iterations, the biased update direction can accumulate a deviation at the scale of $\eta t \|O\|_2^2$, which can ultimately yield a gap scaling as $\Omega(\|O\|_2)$ at converged points. The proof is deferred to App.~\ref{app_bias_GD}.

\section{\texttt{SPPS}: a faithful simulator for large-scale quantum optimization tasks}
\label{dpp_method_spp}

The optimization gap shown in Corollary~\ref{cor_opt_gap} suggests that simulating quantum optimization requires a design beyond existing \texttt{Tb-PBS} methods. 
To address this issue, here we propose \texttt{SPPS}, a Pauli-propagation-based simulator for estimating gradients faithfully via unbiased sampling of propagation paths from the full path space. For clarity, we first present the implementation details of \texttt{SPPS} in Sec.~\ref{spp_method_implementation}, followed by theoretical analysis in Sec.~\ref{dpp_method_theory}.

\subsection{Algorithm framework of \texttt{SPPS}}
\label{spp_method_implementation}

Instead of constructing a truncated subset $\hat{\Omega}$, \texttt{SPPS} samples propagation paths during Heisenberg evolution and corrects each sampled path by its sampling probability. This construction yields unbiased stochastic gradient estimators, whose error can be systematically reduced by statistical averaging. In the remainder of this subsection, we present the implementation of \texttt{SPPS}.

\textbf{Overview of \texttt{SPPS}.} 
Similar to existing quantum optimization simulators, \texttt{SPPS} is used to simulate optimization trajectories for the objective $C(\bmt)$ in Eq.~(\ref{dypp_f_loss_eq}) by iteratively updating the parameters $\bmt$ with gradient-based optimizers. Without loss of generality, our key focus here is to \textit{exhibit how to use $\mathtt{SPPS}$ to estimate $\nabla_{\bmt} f(\bmt^{(t)})$ at the $t$-th iteration in Eq.~(\ref{dypp_f_loss_pp_eq})}, since $\nabla_{\bmt} C(\bmt^{(t)})$ follows from the gradients $\nabla_{\bmt} f(\bmt)$  via the chain rule. For simplicity, we sometimes write $\bmt:=\bmt^{(t)}$. 

\texttt{SPPS}, as shown in Fig.~\ref{dpp_fig2_main_fig}, consists of three procedures to acquire $\nabla_{\bmt} f(\bmt)$: (i) sampling propagation paths from the full path space through a hierarchical rule; (ii) correcting sampled path contributions by importance reweighting; and (iii) applying path automatic differentiation to obtain unbiased gradient estimators. In the rest of this subsection, we elaborate on these procedures separately.

\textbf{Step (i): Hierarchical path sampling. } The first step of \texttt{SPPS} is to generate propagation paths without explicitly constructing the full propagation tree. 
As illustrated in Fig.~\ref{dpp_fig2_main_fig}(b), this is achieved by sampling from the following distribution {hierarchically}:
\begin{equation}
{\Pr}(\bmo)={\Pr}_1(\bmo_1)\prod\nolimits_{j=2}^{L} {\Pr}_j(\bmo_j\mid \bmo_{1:j-1}).
\label{dypp_method_pr_w}
\end{equation}
Given the unitary $U(\bmt)$ in Eq.~(\ref{dypp_f_loss_pp_eq}), if the current Pauli operator commutes with $P_j$ in $R_{P_j}(\bmt_j)$ and does not branch, the next path variable is deterministic, i.e., ${\Pr}_{j}(\bmo_j=0\mid \bmo_{1:j-1})=1$. Otherwise, \texttt{SPPS} samples branches $\bmo_j=1$ and $\bmo_j=-1$ with probabilities $q_j$ and $1-q_j$, respectively, where $q_j=\frac{|\cos \bmt_j|+a}{|\cos \bmt_j|+|\sin \bmt_j|+2a}$ .
Here, $a>0$ enforces a nonzero sampling probability for both branches, so that branches with small coefficients but non-negligible gradient contributions are not overlooked.

\textbf{Step (ii): Importance reweighting. } After Step (i), \texttt{SPPS} corrects the contribution of each sampled path by its corresponding sampling probability, as illustrated in Fig.~\ref{dpp_fig2_main_fig}(c). The motivation of this step is to convert the sampled propagation paths into an unbiased expectation value estimator, which will subsequently induce unbiased gradients. That is, after $B$ independent sampling rounds in Step (i), \texttt{SPPS} obtains a collection of propagation paths denoted by $\{\bmo^{(b)}\}_{b=1}^{B}$. For each sampled path $\bmo^{(b)}$, we define the calibrated contribution as
$\tilde{h}_{\bmo^{(b)}}(\bmt) = {h_{\bmo^{(b)}}(\bmt)} / {{\Pr}(\bmo^{(b)})}$ with $h_{\bmo^{(b)}}$ in Eq.~(\ref{dypp_f_loss_pp_eq}). Averaging these contributions gives an unbiased estimator $\tilde{f}(\bmt) = \frac{1}{B} \sum_{b=1}^{B} \tilde{h}_{\bmo^{(b)}} (\bmt)$, i.e.,
\begin{align}
\E\!\left[\tilde{f}(\bmt)\right]
&= \E_{\bmo \sim \Pr}\!\left[\tilde{h}_{\bmo}(\bmt)\right] = \sum_{\bmo \in \Omega}\frac{h_{\bmo}(\bmt)}{{\Pr}(\bmo)}{\Pr}(\bmo) = \sum_{\bmo \in \Omega}h_{\bmo}(\bmt) = f(\bmt) .
\label{spp_method_implementation_step2_tildef}
\end{align}
This unbiased expectation value estimator provides the basis for the gradient estimation in Step (iii).

\textbf{Step (iii): Gradient estimation.}
The final step is to calculate gradient estimations based on the sampled paths via a procedure that we named path automatic differentiation (PAD), as visualized in Fig.~\ref{dpp_fig2_main_fig}(d). Since each ${h}_{\bmo^{(b)}}(\bmt)$ in Eq.~(\ref{spp_method_implementation_step2_tildef}) is parameterized by $\bmt$ via the coefficient $\Psi_{\bmo^{(b)}}(\bmt)$, which is a product of trigonometric factors, the corresponding gradient admits the form
\begin{equation}
\tilde{\bm{g}}(\bmt) = \frac{1}{B} \sum\nolimits_{b=1}^{B} \bm{s}(\bmt,\bmo^{(b)}) \tilde{h}_{\bmo^{(b)}}(\bmt),
\label{eq_spp_single_grad}
\end{equation}
where $\bm{s}(\bmt,\bmo^{(b)}) \in \R^L$ and its $j$-th entry is $\cot(\bmt_j) \mathbbm{1}{[\bmo_j^{(b)}=-1]} - \tan(\bmt_j) \mathbbm{1}{[\bmo_j^{(b)}=1]}$. Similar to the case of Eq.~(\ref{spp_method_implementation_step2_tildef}), the estimator $\tilde{\bm g}(\bmt)$ in Eq.~(\ref{eq_spp_single_grad}) is an unbiased estimator of the gradient of $f(\bmt)$.

\noindent\textbf{Remark.}
The derivation of PAD is provided in App.~\ref{app_spp_alg}. 
PAD offers two benefits: (i)~by exploiting the trigonometric structure of $\Psi_{\bmo}(\bmt)$, it obtains all entries of each gradient sample simultaneously, avoiding differentiation through the full propagation tree or parameter-shift evaluations for each entry; (ii) independent PAD samples are naturally parallelizable, enabling acceleration on multi-core CPUs.

\subsection{Theoretical analysis of \texttt{SPPS}}
\label{dpp_method_theory}

We next analyze the theoretical properties of \texttt{SPPS}, including its unbiasedness and sample complexity of the gradient estimation, and the convergence analysis. All proofs are deferred to Apps.~\ref{app_spp_estimator} and \ref{app_spp_sgd}.

\begin{theorem} 
\label{thm_spp_estimator}
$\mathtt{SPPS}$ in Eq.~(\ref{eq_spp_single_grad})  provides unbiased gradient estimation, and the error is bounded by $\epsilon$ with high probability using $\tilde{\mathcal O}((1+2a)^L\kappa(\bmt)L/(a\epsilon^2))$ samples, where $\kappa(\bmt)=\prod_{j=1}^{L}\bigl(1+|\sin(2\bmt_j)|\bigr)$.
\end{theorem}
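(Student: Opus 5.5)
The plan has three parts: show unbiasedness by differentiating the path sum term by term, bound the second moment of a single-path gradient sample by a product of per-gate factors, and finish with a vector concentration inequality.

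\emph{Unbiasedness.} Since $f(\bmt)=\sum_{\bmo\in\Omega}h_{\bmo}(\bmt)$ is a finite sum, $\nabla f=\sum_{\bmo}\nabla h_{\bmo}$. Each $\Psi_{\bmo}(\bmt)$ is a product of factors $\cos\bmt_j$ (when $\bmo_j=1$), $\pm\sin\bmt_j$ (when $\bmo_j=-1$), or $1$ (when $\bmo_j=0$). Differentiating gives $\partial_j h_{\bmo}=s_j(\bmt,\bmo)h_{\bmo}$, with $s_j=-\tan\bmt_j$ on cosine branches and $\cot\bmt_j$ on sine branches. This holds wherever these functions are finite; elsewhere we interpret $s_jh_{\bmo}$ as the analytic derivative of the product, which is well defined.

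A point to check is that, for fixed $\bmt$, the path set $\Omega$ does not depend on $\bmt$. Branching is determined only by commutation relations, so this holds, and $\Pr(\bmo)>0$ for every $\bmo\in\Omega$ because $a>0$. The same importance-sampling computation as in Eq.~\eqref{spp_method_implementation_step2_tildef} then gives $\E[\bm s\,\tilde h_{\bmo}]=\sum_{\bmo}\nabla h_{\bmo}=\nabla f$.

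\emph{Variance bound.} I will bound the second moment of one sample, $\E\|\bm s\,\tilde h_{\bmo}\|_2^2=\sum_{\bmo}\|\nabla h_{\bmo}\|_2^2/\Pr(\bmo)$, where $|\Tr[P_{\bmo}(O)\rho]|\le1$. The gradient is written per coordinate as $\sum_{k}(\partial_k\Psi_{\bmo})^2$. Replacing one factor by its derivative ($\cos\to-\sin$, $\sin\to\cos$) gives $|\partial_k\Psi_{\bmo}|\le\prod_{j\ne k}|c_j(\bmo_j)|$, where $c_j$ is the trigonometric factor at gate $j$. The sum over paths should factorize along the sequential branching structure, bounding it gate by gate.

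At branching gate $j$, the contribution of a non-differentiated coordinate is
\[
\frac{\cos^2\bmt_j}{q_j}+\frac{\sin^2\bmt_j}{1-q_j}.
\]
Using $q_j\ge(|\cos\bmt_j|+a)/(1+\sqrt2+2a)$ together with $|\cos|+|\sin|\le\sqrt{1+|\sin2\bmt_j|}$, this should be bounded by $(1+2a)(1+|\sin2\bmt_j|)$ up to constants absorbed in $\tilde{\mathcal O}$. Concretely, $\cos^2/q\le(|c|+|s|+2a)|c|$, and summing the two branches gives $(|c|+|s|)(|c|+|s|+2a)\le(1+|\sin2\bmt_j|)(1+2a)$, since $|c|+|s|\le(|c|+|s|)^2=1+|\sin2\bmt_j|$ when $|c|+|s|\ge1$.

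The differentiated gate contributes $\bigl(\sin^2/q_j+\cos^2/(1-q_j)\bigr)$. Since $\min(q_j,1-q_j)\ge a/(2+2a)$, this is at most $\mathcal O(1/a)$. Summing over the at most $L$ coordinates gives
\[
\E\|\bm s\,\tilde h\|^2\le \mathcal O\!\left(L(1+2a)^L\kappa(\bmt)/a\right).
\]

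\emph{Concentration.} With this variance bound $V$, Chebyshev's inequality gives $\|\tilde{\bm g}-\nabla f\|_2\le\epsilon$ with constant probability using $B=\mathcal O(V/\epsilon^2)$ samples. A median-of-means amplification then brings the failure probability to $\delta$ at a cost of $\log(1/\delta)$, which is the source of the $\tilde{\mathcal O}$. I prefer median-of-means over Bernstein's inequality because the individual samples are unbounded (they contain $1/\Pr$ and $\tan$ factors), so only variance control is available.

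\emph{Main obstacle.} I expect the main difficulty to be the factorization of the path sum. Whether gate $j$ branches depends on the earlier choices $\bmo_{1:j-1}$, so the bound must be proved by backward induction over the propagation tree. At each node, the expected remaining weighted square is bounded by $\prod_{j'>j}\max(1,\text{per-gate factor})$; non-branching gates contribute factor $1$, so taking the maximum is harmless.
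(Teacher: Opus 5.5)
Your unbiasedness argument and second-moment bound are correct and essentially coincide with the paper's Lemma~\ref{lem_spp_grad_unbiased_variance}. You use the same reduction to $\sum_{\bmo}(\partial_{\bmt_k}\Psi_{\bmo})^2/\Pr(\bmo)$ and the same undifferentiated-gate factor $u_j(u_j+2a)\le(1+2a)u_j^2$, where $u_j=|\cos\bmt_j|+|\sin\bmt_j|$. At the differentiated gate you get a factor $(2+2a)/a$, where the paper has $(1+2a)u_k^2/a$. Your backward induction over the propagation tree is a more careful version of the paper's ``layer by layer'' step. One caution: per-gate constants could not be ``absorbed in $\tilde{\mathcal O}$'', because they would compound to $C^L$. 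Your explicit computation happens to lose none, so drop that hedge.

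\textbf{The gap is in the concentration step, which rests on a false premise.} Single samples are bounded, not unbounded. The score cancels against $\Psi_{\bmo}$: $s_k\tilde h_{\bmo}=\partial_{\bmt_k}\Psi_{\bmo}(\bmt)\Tr[P_{\bmo}(O)\rho]/\Pr(\bmo)$. The smoothing then gives two bounds:
\begin{itemize}
\item at undifferentiated branching gates, $|\cos\bmt_j|/q_j,\ |\sin\bmt_j|/(1-q_j)\le u_j+2a$;
\item at the differentiated gate, $|\sin\bmt_k|/q_k,\ |\cos\bmt_k|/(1-q_k)\le(u_k+2a)/a$.
\end{itemize}
Hence $|\tilde{\bm g}_{\bmo,k}(\bmt)|\le a^{-1}\prod_j(u_j+2a)\le a^{-1}(1+2a)^L\sqrt{\kappa(\bmt)}$ almost surely.

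The paper feeds exactly this range bound, together with the variance bound, into Bernstein's inequality coordinate-wise at deviation $\epsilon/\sqrt L$. It then union-bounds over $k\in[L]$, giving $B=\lceil 4L\kappa(\bmt)(1+2a)^L\ln(2L/\delta)/(a\epsilon^2)\rceil$ for $\epsilon\le\sqrt{L\kappa(\bmt)}$; in that regime the range term is dominated by the variance term.

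Having discarded boundedness, you switch to median-of-means. That is not the estimator in Eq.~\eqref{eq_spp_single_grad}: it is not the empirical average and is not unbiased, so your two conclusions would concern different estimators. For the actual average, Chebyshev alone gives only $B=\mathcal O(V/(\epsilon^2\delta))$. That is polynomial in $1/\delta$ and not within $\tilde{\mathcal O}$. Adding the almost-sure bound above and applying Bernstein to the mean closes the gap.

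If one did accept a modified estimator, your route would avoid needing boundedness and would drop the $\log L$ from the union bound. You would then have to specify a vector median, e.g.\ a geometric median, at a constant-factor cost in $\epsilon$.
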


Theorem~\ref{thm_spp_estimator} indicates that the sample complexity of \texttt{SPPS} is controlled by the effective branching factor $\kappa(\bmt)$, the regularization parameter $a$, and the target accuracy. The bound is meaningful when $a>0$, which prevents derivative-sensitive branches from being overlooked. The factor $\kappa(\bmt)$ aggregates the parameter-induced amplification during propagation and characterizes the intrinsic difficulty of faithfully estimating the corresponding gradient. When $\kappa(\bmt)$ remains moderate, e.g., polynomial in the number of qubits and parameters, a polynomial sampling budget is sufficient. Conversely, rapidly growing $\kappa(\bmt)$ can make the sampling budget exponential in the worst case, which is consistent with the generic hardness to simulate quantum computation.

Since \texttt{SPPS} provides unbiased stochastic estimates of the original gradient, the induced GD dynamics fall within the standard stochastic-gradient framework. This leads to the following corollary.

\begin{corollary}
\label{cor_spp_sgd}
We consider the $\mathtt{SPPS}$-driven GD optimization of the objective $C(\bmt)$ in Eq.~(\ref{dypp_f_loss_eq}) with $K=1$ and $\ell_1(x)=x$. Then, for any $\epsilon>0$, we have $\min_{t\in[T]}\E\|\nabla_{\bmt}f(\bmt^{(t)})\|_2^2\le \epsilon$ within $T=\mathcal{O}(L^3\|O\|_2^4/\epsilon^2)$ iterations by using $\mathcal{O}(N_O L^3\|O\|_2^4\kappa_T/\epsilon^2)$ \texttt{SPPS} path samples, where $\kappa_T=\max_{t\in[T]}\kappa(\bmt^{(t)})$ and $N_O$ is the number of Pauli terms in the observable $O$.
\end{corollary}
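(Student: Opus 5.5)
We plan to reduce Corollary~\ref{cor_spp_sgd} to the standard nonconvex SGD analysis. The inputs are three: unbiasedness and the second-moment bound from Theorem~\ref{thm_spp_estimator}, an $L$-smoothness estimate for $f$, and a uniform gradient bound.

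\emph{Setup.} With $K=1$ and $\ell_1(x)=x$ we have $C=f$. We write $O=\sum_{i=1}^{N_O} c_i Q_i$ with $Q_i\in\mathcal{P}_n$. The gradient estimator is the sum over $i$ of $c_i$ times independent \texttt{SPPS} estimators $\tilde{\bm g}_i$ from Eq.~(\ref{eq_spp_single_grad}), each built from $B$ paths. This accounts for the factor $N_O$ in the total path count $N_O B T$. By linearity and Theorem~\ref{thm_spp_estimator}, $\E[\tilde{\bm g}(\bmt)]=\nabla f(\bmt)$.

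\emph{Smoothness and variance.} Every $\partial_j\partial_k f(\bmt)$ has the form $\Tr[O\,U'\rho U'^\dag]$-type expressions obtained by parameter-shift. Hence $|\partial_j\partial_k f|\le\|O\|_2$, and the Hessian operator norm is at most $L\|O\|_2$, giving an $\ell$-smooth $f$ with $\ell\le L\|O\|_2$. Likewise $\|\nabla f\|_2^2\le L\|O\|_2^2$. For the variance, the second moment of a single-path estimator $\bm s\tilde h_{\bmo}$ is $\sum_{\bmo}\|\bm s\|^2 h_{\bmo}^2/\Pr(\bmo)$. We would reuse the computation behind Theorem~\ref{thm_spp_estimator}: per branching layer, the ratio $(\cos^2+\sin^2\text{-type terms})/q_j$ contributes a factor yielding $\kappa(\bmt)$ (up to the $(1+2a)^L/a$ constants, which we fold into $\mathcal O$). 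We thereby obtain
\[
\E\|\tilde{\bm g}-\nabla f\|_2^2\le \sigma_t^2:=\mathcal O\!\left(L\|O\|_2^2\kappa(\bmt^{(t)})/B\right).
\]
Summing over the $N_O$ terms uses $\sum_i|c_i|^2\le\|O\|_2^2$, or a crude bound that gives the same order.

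\emph{Descent.} The descent lemma with step $\eta\le 1/\ell$ gives
\[
\E f(\bmt^{(t+1)})\le \E f(\bmt^{(t)})-\tfrac{\eta}{2}\E\|\nabla f(\bmt^{(t)})\|^2+\tfrac{\ell\eta^2}{2}\sigma_t^2 .
\]
Telescoping over $T$ steps and using $f\in[-\|O\|_2,\|O\|_2]$ yields
\[
\min_t\E\|\nabla f\|^2\le \frac{4\|O\|_2}{\eta T}+\ell\eta\,\mathcal O\!\left(\frac{L\|O\|_2^2\kappa_T}{B}\right).
\]
We will choose $\eta=\Theta(\epsilon/(\ell L\|O\|_2^2))$, with $B$ chosen so that the variance term is at most $\epsilon/2$. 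Then $T=\mathcal O(\|O\|_2\ell L\|O\|_2^2/\epsilon^2)=\mathcal O(L^2\|O\|_2^4/\epsilon^2)$, which lies within the claimed $L^3\|O\|_2^4/\epsilon^2$. Multiplying $T$ by $N_O$ and by the per-step path count, and taking $B=\mathcal O(\kappa_T)$ with the remaining $L$ absorbed, gives the stated $\mathcal O(N_O L^3\|O\|_2^4\kappa_T/\epsilon^2)$. We expect to adjust $\eta$ and $B$ slightly to match the exponents exactly, for instance $B=\mathcal O(1)$ together with a smaller $\eta$, which trades $B$ against $T$.

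\emph{Main obstacle.} The delicate step is the variance bound. Theorem~\ref{thm_spp_estimator} is stated as a high-probability sample complexity rather than as a second-moment bound. The score entries $\cot\bmt_j,\tan\bmt_j$ blow up, and they must be combined with the $\Psi_{\bmo}$ factors, where $\cot\cdot\sin=\cos$ stays bounded, before the squared quantity is controlled layer by layer. We would first try to prove directly that $\E\|\bm s\tilde h\|^2\le L\|O\|_2^2\prod_j(\text{layer factor})$, with each layer factor at most $c_a(1+|\sin 2\bmt_j|)$. This bound makes $\kappa_T$, the maximum over the trajectory, the relevant quantity. Note that $\kappa(\bmt)\le 2^L$ always, so this maximum is finite.
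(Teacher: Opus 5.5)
Your proof follows the paper's argument step for step. The common steps are:
\begin{itemize}
\item parameter-shift smoothness with constant $L\|O\|_2$;
\item a per-coordinate second-moment bound, obtained by writing the single-path estimator as $\partial_k\Psi_{\bmo}(\bmt)\Tr[P_{\bmo}(O)\rho]/\Pr(\bmo)$ (so the $\cot\cdot\sin=\cos$ cancellation is built in) and summing layer by layer, which the paper carries out to get $\frac{1}{a}(1+2a)^L\kappa(\bmt)$ per coordinate;
\item independence across Pauli terms together with $\sum_m c_m^2=2^{-n}\|O\|_F^2\le\|O\|_2^2$;
\item the descent lemma, telescoped against $|f|\le\|O\|_2$.
\end{itemize}
You do need the independence in the third step. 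Your ``crude bound that gives the same order'' does not exist in general: without independence you get $(\sum_m|c_m|)^2$, which equals $N_O\|O\|_2^2$ for mutually anticommuting terms with equal coefficients.

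The one real difference is the tuning. The paper fixes $B=\lceil\kappa_T\rceil$, so the noise level is $G_T^2=\O(L^2\|O\|_2^2)$. It then balances with $\eta=\sqrt{4/(LG_T^2T)}$, which forces $T=\O(L^3\|O\|_2^4/\epsilon^2)$. You instead take $\eta\propto\epsilon/(\ell L\|O\|_2^2)$ and enlarge $B$ until the noise term is at most $\epsilon/2$. This gives $T=\O(L^2\|O\|_2^4/\epsilon^2)$ with $B=\O(L\kappa_T)$ and the same total path count. That is a legitimate, and slightly stronger, way to meet the corollary.

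The step you must make explicit is the choice of $a$. The factor $(1+2a)^L/a$ cannot be folded into $\O(\cdot)$: for fixed $a$ it is $e^{\Theta(L)}$, so the claimed polynomial complexity would hide an exponential in $L$. Take $a=\Theta(1/L)$, e.g.\ $a=1/(2L)$. Then $(1+2a)^L\le e$ and $1/a=2L$. This $1/a$ is exactly your ``remaining $L$''. In your tuning it forces $B=\Theta(L\kappa_T)$; in the paper's tuning it makes $T$ larger by a factor $L$. Either way the total is $N_OBT=\O(N_OL^3\|O\|_2^4\kappa_T/\epsilon^2)$, and the paper makes the same choice of $a$.

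Drop your alternative of $B=\O(1)$ with a smaller $\eta$. It moves $\kappa_T$ into the iteration count, which contradicts $T=\O(L^3\|O\|_2^4/\epsilon^2)$.

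Both you and the paper choose $B$ from the trajectory-dependent $\kappa_T$, which is circular. You can avoid this because $\kappa(\bmt^{(t)})$ is computable from the current iterate. Take $B_t\propto L\kappa(\bmt^{(t)})$ at each step: the conditional variance bound then holds uniformly, and the total is still at most $N_OT\max_tB_t$.
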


To the best of our knowledge, Corollary~\ref{cor_spp_sgd} provides the first convergence guarantee for quantum optimization driven by a classical simulator beyond the exact regime. It shows that \texttt{SPPS}-driven GD finds an approximate stationary point with iteration complexity polynomial in $L$ and $\|O\|_2$. The overall efficiency is governed by the path sample complexity, which is primarily controlled by the trajectory-dependent factor $\kappa_T$. Moderate $\kappa_T$ yields efficient classical simulation, whereas rapidly growing $\kappa_T$ in the worst case reflects the hardness of simulating generic quantum optimization.

\section{Experiments}
\label{spp_experiment}

We conduct systematic experiments on three quantum optimization tasks: pre-training VQEs~\cite{Peruzzo2014,PhysRevResearch.5.043217,khan2023pre} and QML models~\cite{bermejo2024quantum,Benedetti_2019,verdon2019learning}, and preparing quantum encoding circuits~\cite{PhysRevResearch.4.023136,PhysRevA.109.052423}. These tasks have broad downstream applications~\cite{bauer2020quantum,schuld2021effect} and form suitable benchmarks for evaluating the performance of \texttt{SPPS} across different objective functions and observables. 
Here, pre-training uses classical computation to find initial parameters close to high-quality solutions, which can reduce the quantum resource cost of subsequent training on quantum computers~\cite{rudolph2023synergistic,Egger2021warmstartingquantum,electronics12020347,liu2023mitigating,martin2026pre}.
Further implementation details and additional results on these tasks are provided in Apps.~\ref{app_spp_alg} and \ref{app_exp}, respectively.

\begin{figure*}[t]
    \centering
    \includegraphics[width=0.99\linewidth]{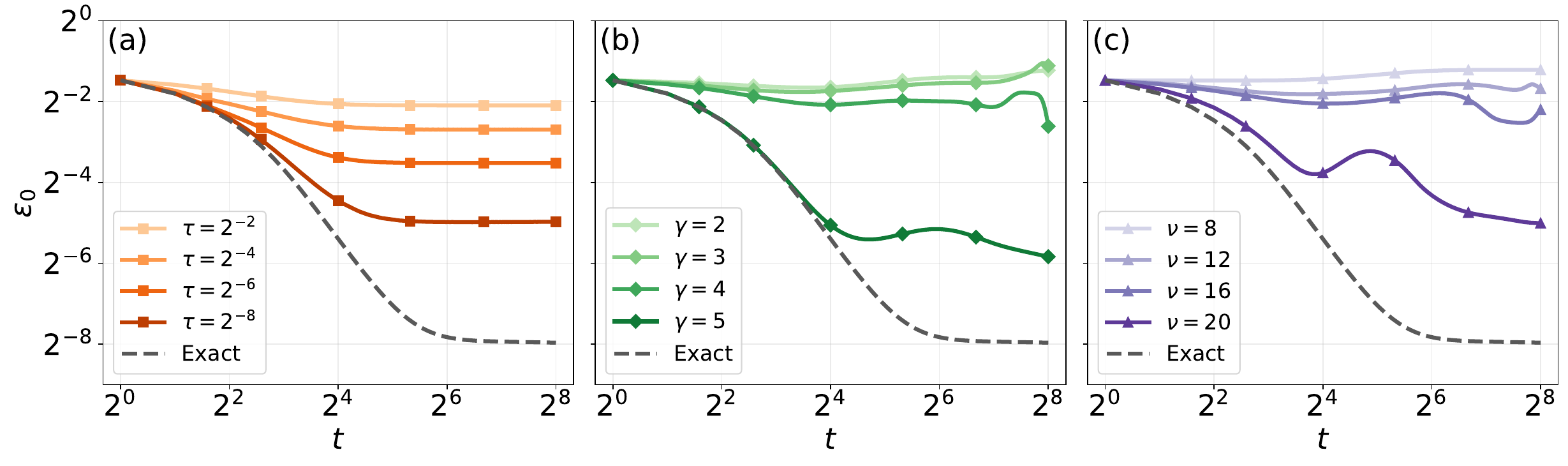}
    \caption{ 
{\textbf{Biased optimization trajectories of \texttt{Tb-PBS} methods.}
Subplots (a), (b), and (c) show the energy error $\epsilon_0$ versus the optimization step $t$ of pre-training VQE for the $15$-qubit TFIM, which are simulated by using \texttt{CT}-, \texttt{WT}-, and \texttt{FT-PBS} with different truncation thresholds $\tau$, $\gamma$, and $\nu$, respectively. The physical meanings of these thresholds are given in Sec.~\ref{dpp_method_biased_grad}. The dashed curve denotes the exact GD baseline obtained using \texttt{PennyLane}. 
}}
    \label{dpp_fig2}
\end{figure*}

\textbf{Pre-training VQE.} 
We benchmark different classical simulators on the pre-training of VQE, where the goal is to find a set of parameters in Eq.~(\ref{dypp_f_loss_eq}) with $K=1$, $\ell_1(x)=x$, and $O=H$ that approximate the ground-state energy of $H$ through classical simulation. Specifically, we consider the one-dimensional transverse-field Ising model (TFIM) with the Hamiltonian
$H=-J\sum_{i=1}^{n-1} Z_i Z_{i+1}-g\sum_{i=1}^{n} X_i$,
which is a prototypical quantum many-body model and a standard benchmark for VQE. We evaluate $g\in\{0.6,0.8,1.0,1.2,1.4\}$ and $J=1.0$ with system sizes up to $100$ qubits, covering both non-critical and critical regimes. The variational ansatz is a $5$-layer circuit consisting of an initial Hadamard layer, repeated $R_Y$ and $R_Z$ rotation layers, and nearest-neighbor CNOT chains.

\textbf{Pre-training QNN.}
We benchmark different simulators for supervised quantum machine learning by pre-training QNNs. In particular, the dataset is $\{(|\psi_i\rangle,y_i)\}_{i=1}^{200}$, where each input $|\psi_i\rangle\in\{|0\rangle,|1\rangle\}^{\otimes n}$ is a computational-basis state with $n=40$ qubits. The label is generated by applying a quantum circuit $V$ to the input and measuring the observable $O=\sum_{j=1}^{n}Z_j/n$, i.e.,
$y_i=\langle\psi_i|V^\dag O V|\psi_i\rangle$. In our experiments, we use $160$ training samples and $40$ test samples. A $4$-layer hardware-efficient VQC $U(\bmt)$ is trained to learn these labels by minimizing the mean-squared error (MSE) loss on the training set, i.e. ${\rm MSE}(\bmt)=\frac{1}{160}\sum_{i=1}^{160}(z_i(\bmt)-y_i)^2$,
where $z_i(\bmt)=\langle\psi_i|U(\bmt)^\dag O U(\bmt)|\psi_i\rangle$.

\subsection{Experimental Settings}

\textbf{Reference methods.}
We compare \texttt{SPPS} with \texttt{CT}-, \texttt{FT}-, and \texttt{WT-PBS} methods introduced in Sec.~\ref{dpp_method_biased_grad}, implemented using \texttt{PauliPropagation.jl}~\cite{rudolph2025pauli}. 
For small-scale VQE tasks with $n\leq 15$, we include a vanilla VQE baseline with exact GD, implemented using \texttt{PennyLane}~\cite{bergholm2018pennylane}, and compute the reference ground-state energy $E_0$ by exact diagonalization. 
For larger systems, we approximate $E_0$ using the thermodynamic-limit expression~\cite{PFEUTY197079}.

\textbf{Evaluation metrics.}
For VQE tasks, we evaluate the simulation quality by using the energy error normalized by the qubit number, $\epsilon_0=|E-E_0|/n$, where $E$ is the pre-trained VQE energy and $E_0$ is the ground-state energy. For small systems with $n\leq 15$, $E$ is obtained exactly by state-vector simulation. For larger systems, $E$ is estimated using \texttt{CT-PBS} with $\tau=2^{-12}$. For QML tasks, we report the MSE$(\cdot)$ on the training set and the $R^2$ score on the test set after training, where $R^2=1-\sum_i(z_i-y_i)^2/\sum_i(y_i-\bar y)^2$. Here, $\bar y$ is the mean target label on the test set, and each prediction $z_i$ is estimated using \texttt{CT-PBS} with $\tau=2^{-12}$. For all tasks, we record the optimization runtime to evaluate the efficiency of different classical simulators.

\textbf{Hyperparameter settings.}
\texttt{SPPS} employs a proxy to estimate the gradient error in practice and to allocate sample budgets adaptively. 
We compare this proxy with a prescribed threshold $\delta\in\{2^{-1},2^{-2},2^{-3}\}$, where a smaller $\delta$ induces larger sample budgets and yields more accurate gradients. For \texttt{Tb-PBS} baselines, we tune the \texttt{CT} threshold $\tau$, \texttt{FT} threshold $\nu$, and \texttt{WT} threshold $\gamma$ according to the task settings. All classical simulation methods share the same optimization settings. Specifically, for pre-training VQEs, we use GD with learning rate $\eta=0.05$ for $T=256$ optimization steps by default. For pre-training QNNs, we use mini-batch GD with learning rate $\eta=5.0$ and batch size $16$ for $T=2000$ optimization steps.
Further implementation details, including the formulation of the gradient-error proxy and the parameter $a$ in the sampling distribution, are provided in App.~\ref{app_spp_alg}.

\subsection{Experimental results}

\begin{figure*}[t]
    \centering
    \includegraphics[width=0.99\linewidth]{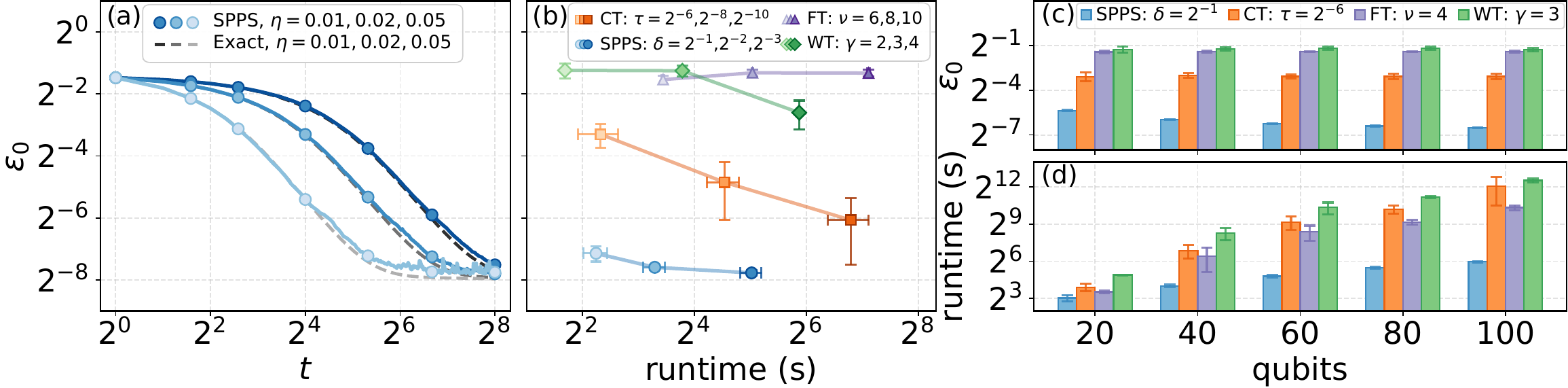}
\caption{
{
\textbf{Performance of different \texttt{PBS} methods for pre-training VQE on the TFIM.}
(a) Energy error $\epsilon_0$ versus optimization step $t$ for \texttt{SPPS}-driven GD and exact GD on the $15$-qubit TFIM with different learning rates.
(b) Final energy error $\epsilon_0$ versus optimization runtime for \texttt{SPPS} and three \texttt{Tb-PBS} methods on the $15$-qubit TFIM with different thresholds.
(c,d) Final energy error and optimization runtime for system sizes from $20$ to $100$ qubits.
Error bars indicate the standard deviation over $10$ independent runs.
} }
    \label{dpp_fig3}
\end{figure*}

\textbf{\texttt{Tb-PBS} leads to biased optimization trajectories.}
We first examine how truncations in \texttt{Tb-PBS}s affect the simulation of quantum optimization. To this end, we pre-train VQE for the $15$-qubit TFIM with $g=1.0$ using \texttt{CT}-, \texttt{WT}-, and \texttt{FT-PBS} under different truncation thresholds, and compare the resulting optimization trajectories with the exact GD trajectory, as shown in Fig.~\ref{dpp_fig2}. Although using less aggressive truncation generally improves the energy accuracy, all three truncation rules can still deviate substantially from the exact trajectory. For example, the optimization driven by \texttt{CT-PBS} with $\tau=2^{-6}$ plateaus above $\epsilon_0=2^{-4}$, while exact GD optimization continues decreasing to around $2^{-8}$. The deviations are more pronounced for \texttt{WT}- and \texttt{FT-PBS}, where the optimization can drive non-monotone energy trajectories and lead to non-converging $\epsilon_0$ far above the exact GD baseline. These deviations empirically confirm that  \texttt{Tb-PBS} can induce structural gradient bias, which distorts the optimization trajectory.

\textbf{\texttt{SPPS} faithfully simulates pre-training VQE trajectories.}
We next evaluate the reliability of \texttt{SPPS} in simulating quantum optimization. Specifically, we pre-train VQE for the $15$-qubit TFIM with $g=1.0$ using GD driven by \texttt{SPPS} gradients and exact gradients with learning rates $\eta\in\{0.01,0.02,0.05\}$ and record the energy error $\epsilon_0$ during the optimization, with the results illustrated in Fig.~\ref{dpp_fig3}(a). Across all learning rates, \texttt{SPPS} closely tracks the corresponding exact-GD baseline. For example, at the largest learning rate $\eta=0.05$, \texttt{SPPS} reduces $\epsilon_0$ from above $2^{-2}$ at initialization to approximately $2^{-8}$ within about $64$ steps, matching the convergence behavior of exact GD. For smaller learning rates, both methods converge more slowly, while the \texttt{SPPS} curves remain aligned with exact GD. These consistent trajectories demonstrate that \texttt{SPPS} provides sufficiently accurate stochastic gradients to pre-train VQE faithfully.

\textbf{\texttt{SPPS} outperforms all reference \texttt{Tb-PBS} methods with superior accuracy-runtime trade-offs.}
We further compare the performance of \texttt{SPPS} with reference \texttt{Tb-PBS} methods, measured by the final energy error $\epsilon_0$ after optimization and the optimization runtime. In particular, on the task of pre-training VQE for the $15$-qubit TFIM with $g=1.0$, we sweep the gradient-error threshold $\delta\in\{2^{-1},2^{-2},2^{-3}\}$ for \texttt{SPPS}, and the truncation thresholds $\tau\in\{2^{-6},2^{-8},2^{-10}\}$, $\nu\in\{6,8,10\}$, and $\gamma\in\{2,3,4\}$ for \texttt{CT}-, \texttt{FT}-, and \texttt{WT-PBS}, respectively. The resulting accuracy-runtime trade-offs are summarized in Fig.~\ref{dpp_fig3}(b). \texttt{SPPS} achieves lower energy errors than all reference \texttt{PBS} methods under comparable or smaller runtime budgets. As $\delta$ is tightened from $2^{-1}$ to $2^{-3}$, \texttt{SPPS} progressively reduces the energy error from around $2^{-7}$ to $2^{-8}$, while keeping the runtime from a few seconds to less than one minute. By contrast, \texttt{Tb-PBS}s incur much larger runtime while still remaining less accurate than \texttt{SPPS}. For example, the most accurate \texttt{CT} setting with $\tau=2^{-10}$ takes roughly $2^7$--$2^8$ seconds but only reaches an error on the order of $2^{-6}$.

\textbf{Scalability of \texttt{SPPS}.}
We then test the scalability of \texttt{SPPS} by increasing the TFIM system size from $20$ to $100$ qubits with $g=1.0$, and record the final energy error and optimization runtime in Figs.~\ref{dpp_fig3}(c) and~\ref{dpp_fig3}(d), respectively. Across this range, the error of \texttt{SPPS} stays between $2^{-5}$ and $2^{-7}$ and decreases as the system size grows. This trend is consistent with using the thermodynamic-limit expression as the reference ground-state energy, which introduces a finite-size discrepancy that decreases for larger systems. In contrast, \texttt{Tb-PBS} methods remain substantially less accurate, with errors higher than $2^{-4}$. The runtime advantage of \texttt{SPPS} also becomes more pronounced for larger systems. For example, \texttt{SPPS} is about twice as fast as \texttt{CT-PBS} at $20$ qubits, while the speedup increases to more than eightfold at $100$ qubits. Overall, \texttt{SPPS} shows more favorable scaling behaviors than existing \texttt{Tb-PBS} methods, in terms of both accuracy and runtime.

\begin{table*}[t]
\caption{ { 
{Energy error $\epsilon_0$ ($\downarrow$) and runtime ($\downarrow$) of different \texttt{PBS} methods on the pre-training of VQE for the $15$-qubit TFIM. We compare \texttt{SPPS} with $\delta=2^{-1}$ against reference \texttt{PBS} methods based on \texttt{CT} ($\tau=2^{-8}$), \texttt{FT} ($\nu=10$), and \texttt{WT} ($\gamma=4$), under different field strengths $g$. Each entry reports the error on the first line and the runtime on the second line with the mean and standard deviation over 10 independent runs.
The \blue{best} and \orange{second-best} results are highlighted in \blue{blue} and \orange{orange}, respectively.
 }} }
\label{tab_relerr_vs_g}
\centering
\begingroup
\fontsize{9pt}{10.5pt}\selectfont
\setlength{\tabcolsep}{4pt}
\renewcommand{\arraystretch}{1.12}
\newcommand{\errtime}[2]{\makecell[c]{#1\\#2}}
\begin{tabular}{l|ccccc}
\toprule
 & $g=0.6$ & $g=0.8$ & $g=1.0$ & $g=1.2$ & $g=1.4$ \\
\midrule
\texttt{CT-PBS} 
& \errtime{\orange{$0.077 \pm 0.035$}}{\orange{$(48.4 \pm 28.8)$ s}}
& \errtime{\orange{$0.051 \pm 0.025$}}{\orange{$(34.3 \pm 10.7)$ s}}
& \errtime{\orange{$0.035 \pm 0.020$}}{\orange{$(23.2 \pm 4.5)$ s}}
& \errtime{\orange{$0.024 \pm 0.015$}}{\orange{$(18.3 \pm 4.8)$ s}}
& \errtime{\orange{$0.021 \pm 0.009$}}{\orange{$(15.9 \pm 2.2)$ s}} \\
\midrule
\texttt{FT-PBS}
& \errtime{$0.401 \pm 0.027$}{$(150.1 \pm 4.6)$ s}
& \errtime{$0.383 \pm 0.028$}{$(153.9 \pm 4.9)$ s}
& \errtime{$0.399 \pm 0.031$}{$(138.5 \pm 2.1)$ s}
& \errtime{$0.412 \pm 0.058$}{$(136.6 \pm 0.6)$ s}
& \errtime{$0.438 \pm 0.063$}{$(143.4 \pm 4.0)$ s} \\
\midrule
\texttt{WT-PBS}
& \errtime{$0.298 \pm 0.068$}{$(60.0 \pm 1.2)$ s}
& \errtime{$0.213 \pm 0.055$}{$(61.0 \pm 1.7)$ s}
& \errtime{$0.165 \pm 0.051$}{$(58.7 \pm 0.7)$ s}
& \errtime{$0.200 \pm 0.062$}{$(57.8 \pm 0.4)$ s}
& \errtime{$0.250 \pm 0.108$}{$(59.7 \pm 1.0)$ s} \\
\midrule
\texttt{SPPS} (ours)
& \errtime{\blue{$0.004 \pm 0.001$}}{\blue{$(10.0 \pm 3.7)$ s}}
& \errtime{\blue{$0.011 \pm 0.001$}}{\blue{$(6.3 \pm 1.5)$ s}}
& \errtime{\blue{$0.007 \pm 0.001$}}{\blue{$(4.7 \pm 0.7)$ s}}
& \errtime{\blue{$0.004 \pm 0.001$}}{\blue{$(3.8 \pm 0.8)$ s}}
& \errtime{\blue{$0.004 \pm 0.001$}}{\blue{$(3.6 \pm 0.7)$ s}} \\
\bottomrule
\end{tabular}
\endgroup
\end{table*}

\textbf{Larger gains of \texttt{SPPS} on harder optimization instances.}
We further evaluate the performance of \texttt{SPPS} on harder quantum optimization instances. Here, we pre-train the VQE on the $15$-qubit TFIM with $g\in\{0.6,0.8,1.0,1.2,1.4\}$ and compare \texttt{SPPS} with \texttt{CT}-, \texttt{FT}-, and \texttt{WT-PBS} methods, with results summarized in Tab.~\ref{tab_relerr_vs_g}. Across all field strengths, \texttt{SPPS} achieves the lowest energy error and shortest runtime. The advantage becomes larger when \texttt{Tb-PBS} methods incur stronger optimization bias. For example, moving from $g=1.0$ to $g=0.6$ roughly doubles the runtime of \texttt{SPPS} and the best \texttt{Tb-PBS}, i.e., \texttt{CT-PBS}. However, the error of \texttt{SPPS} decreases from $0.007$ to $0.004$, whereas the error of \texttt{CT-PBS} doubles. As a result, the error reduction of \texttt{SPPS} over \texttt{CT-PBS} increases from about $5\times$ to about $20\times$, which shows that \texttt{SPPS} provides larger benefits than \texttt{Tb-PBS} methods on harder cases.

\begin{figure}[t]
  \centering
  \includegraphics[width=0.6\linewidth]{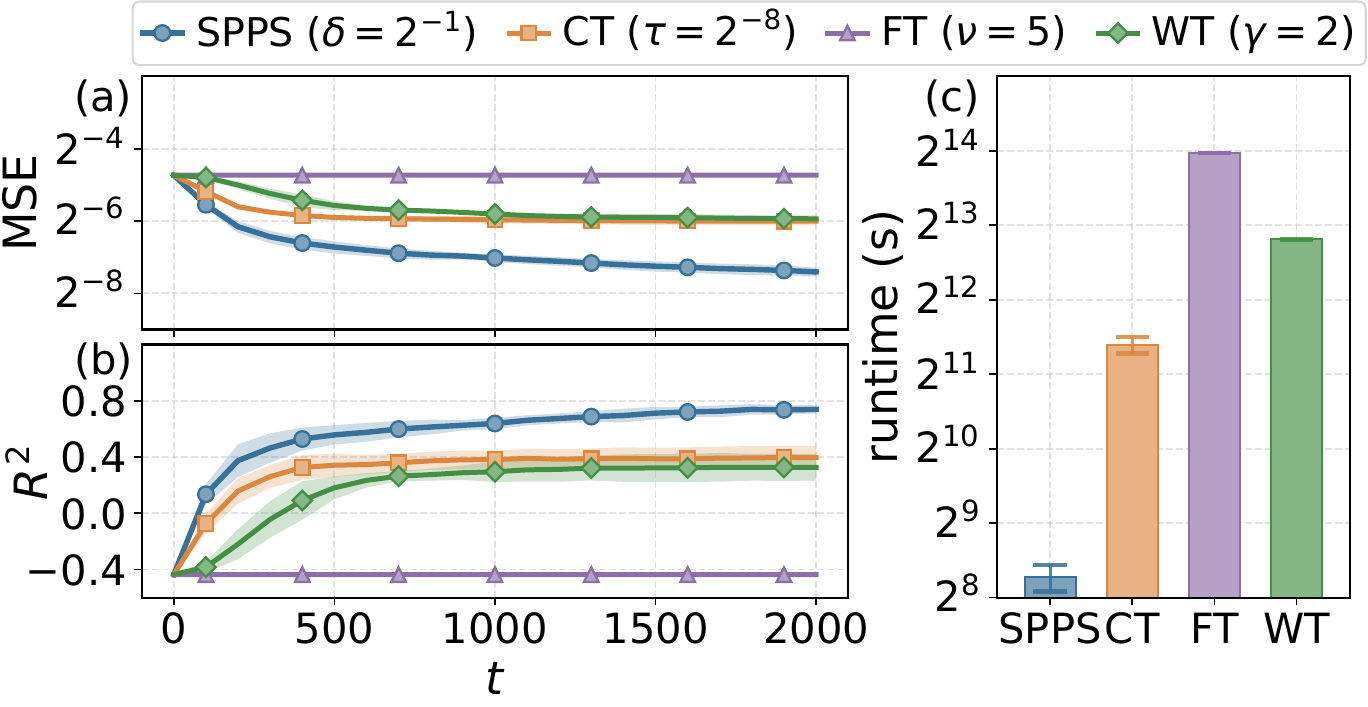}
  \caption{
{
\textbf{Performance of different \texttt{PBS} methods for pre-training QNN on the $40$-QML task.}
(a,b) Training MSE and test $R^2$ score versus optimization step $t$ for \texttt{SPPS}, \texttt{CT}-, \texttt{FT}-, and \texttt{WT-PBS}, respectively. (c) Optimization runtime of the four methods. We use $\delta=2^{-1}$ for \texttt{SPPS}, $\tau=2^{-8}$ for \texttt{CT}, $\nu=5$ for \texttt{FT}, and $\gamma=2$ for \texttt{WT}. Error bars show the standard deviation over $5$ independent runs.}
}
  \label{dpp_fig4}
\end{figure}
\textbf{\texttt{SPPS} achieves better convergence points in pre-training QNN.}
We finally evaluate the applicability of \texttt{SPPS} to QML by pre-training QNNs on the $40$-qubit supervised learning task introduced above, where computational-basis input states are labeled by a given quantum circuit. 
We compare \texttt{SPPS} with reference \texttt{Tb-PBS} methods, and record the training MSE, test $R^2$ score, and optimization runtime, with the results summarized in Fig.~\ref{dpp_fig4}. \texttt{SPPS} converges to a substantially lower training loss and a higher test score than all \texttt{Tb-PBS} methods. In particular, \texttt{SPPS} reduces the training MSE from around $2^{-5}$ to below $2^{-7}$ and reaches a test $R^2$ score around $0.8$. By contrast, \texttt{CT} and \texttt{WT} saturate around the $2^{-6}$ MSE level with test $R^2$ scores around $0.4$, while \texttt{FT} fails to obtain a positive $R^2$ score. The runtime comparison further shows that \texttt{SPPS} is the most efficient method, requiring less than $2^9$ seconds, compared with runtimes above $2^{11}$ seconds for \texttt{Tb-PBS} methods. These comparisons demonstrate that the advantages of \texttt{SPPS} extend to QML tasks, yielding better convergence points at substantially lower simulation cost than \texttt{Tb-PBS} methods.

\section{Conclusion}

In this work, we studied Pauli-propagation simulation for quantum optimization and identified a key limitation of \texttt{Tb-PBS} methods: accurate expectation estimation does not necessarily imply faithful gradient estimation. To address this issue, we proposed \texttt{SPPS}, a stochastic Pauli-path simulator that samples from the full propagation path space and combines importance reweighting with path automatic differentiation to obtain unbiased stochastic gradients. We established gradient accuracy and convergence guarantees for \texttt{SPPS}-driven optimization, showing that faithful classical simulation is possible when the trajectory-dependent path complexity remains moderate. Extensive experiments on pre-training VQE and QNN show that \texttt{SPPS} tracks exact optimization dynamics, improves accuracy-runtime trade-offs, and scales to VQE simulations with up to $100$ qubits. These results provide a practical route to reliable classical simulation for large-scale quantum optimization.

\bibliography{ref}
\bibliographystyle{unsrt}

\clearpage
\onecolumngrid
\appendix

\tableofcontents

\section{More preliminaries and related work}

\label{spp_app_pre}

In this appendix, we first introduce notations used in this work and the basics of quantum computing. Subsequently, we briefly outline the implementation of 
the \texttt{PBS} method. Finally, we provide a brief review of the literature on classical simulation of quantum circuits using \texttt{PBS} method.

\subsection{Notations}
\label{app_notations}

We summarize the notation used throughout this work.
For a positive integer $N$, we denote $[N]=\{1,\ldots,N\}$.
Vectors are written in bold font, for example, $\bm{a}_j$ denotes the $j$-th component of a vector $\bm{a}$.
The tensor product is denoted by $\otimes$, the conjugate transpose of $A$ by $A^\dagger$, and the trace by $\Tr[A]$.
We use $\E[\cdot]$ and $\Var[\cdot]$ for expectation and variance.
The notation $\O(\cdot)$ and $\tilde{\O}(\cdot)$ is used for asymptotic complexity, with $\tilde{\O}(\cdot)$ suppressing logarithmic factors.
For vectors, $\|\cdot\|_2$ denotes the Euclidean norm.
For matrices or observables, $\|\cdot\|_2$ denotes the spectral norm, i.e., the largest singular value.

\subsection{Basics of quantum computing}

\textbf{Basics of quantum computation.} The elementary unit of quantum computation is the qubit (or quantum bit), which is the quantum mechanical analog of a classical bit. A qubit is a two-level quantum-mechanical system described by a unit vector in the Hilbert space $\mathbb{C}^2$. In Dirac notation, a qubit state is defined as $\ket{\phi}=c_0\ket{0}+c_1\ket{1}\in \mathbb{C}^2$ where $\ket{0}=[1,0]^{\top}$ and $\ket{1}=[0,1]^T$ specify two unit bases and the coefficients $c_0,c_1\in\mathbb{C}$ yield $|c_0|^2+|c_1|^2=1$. Similarly, the \textit{quantum state} of $n$ qubits is defined as a unit vector in $\mathbb{C}^{2^n}$, i.e., $\ket{\psi}=\sum_{j=1}^{2^n}c_j\ket{e_j}$, where $\ket{e_j}\in \mathbb{R}^{2^n}$ is the computational basis whose $j$-th entry is $1$ and other entries are $0$, and $\sum_{j=1}^{2^n}|c_j|^2=1$ with $c_j \in \mathbb{C}$. Besides Dirac notation, the density matrix can be used to describe more general qubit states. For example, the density matrix of the state $\ket{\psi}$ is $\rho=\ket{\psi}\bra{\psi} \in \mathbb{C}^{2^n \times 2^n}$, where $\bra{\psi}=\ket{\psi}^{\dagger}$ refers to the complex conjugate transpose of $\ket{\psi}$. For a set of qubit states $\{p_j, \ket{\psi_j}\}_{j=1}^m$ with $p_j>0$, $\sum_{j=1}^m p_j=1$, and $\ket{\psi_j}\in \mathbb{C}^{2^n}$ for $j \in [m]$, its density matrix is $\rho=\sum_{j=1}^m p_j \rho_j$ with $\rho_j=\ket{\psi_j}\bra{\psi_j}$ and $\Tr(\rho)=1$.
	
	A \textit{quantum gate} is a unitary operator that can evolve a quantum state $\rho$ to another quantum state $\rho^{\prime}$. Namely, an $n$-qubit gate $U\in\mathcal{U}({2^n})$ obeys $UU^{\dagger}=U^{\dagger}U=I_{2^n}$, where $\mathcal{U}({2^n})$ refers to the unitary group in  dimension $2^n$. Typical single-qubit quantum gates include the Pauli gates, which can be written as Pauli matrices:
		\begin{equation}
			X = \left[ \begin{array}{ccc}
				0 & 1 \\
				1 & 0 \\
			\end{array}
			\right], \quad 
			Y = \left[ \begin{array}{ccc}
				0 & -\imath \\
				\imath & 0 \\
			\end{array}
			\right], \quad 
			Z = \left[ \begin{array}{ccc}
				1 & 0 \\
				0 & -1 \\
			\end{array}
			\right]. \quad  \label{eq_pauli}
		\end{equation}
		The more general quantum gates are their corresponding rotation gates $R_X(\theta)=e^{-\imath \frac{\theta}{2}X}, R_Y(\theta)=e^{-\imath \frac{\theta}{2}Y}$, and $R_Z(\theta)=e^{-\imath \frac{\theta}{2}Z}$ with a tunable parameter $\theta$, which can be written in the matrix form as
		\begin{equation}
			R_X(\theta)=\left[\begin{array}{cc}
				\cos \frac{\theta}{2} & -\imath \sin \frac{\theta}{2} \\
				-\imath \sin \frac{\theta}{2} & \cos \frac{\theta}{2}
			\end{array}\right], 
			R_Y(\theta)=\left[\begin{array}{cc}
				\cos \frac{\theta}{2} & -\sin \frac{\theta}{2} \\
				\sin \frac{\theta}{2} & \cos \frac{\theta}{2}
			\end{array}\right],  
			R_Z(\theta)=\left[\begin{array}{cc}
				e^{-\imath \frac{\theta}{2}} & 0 \\
				0 & e^{\imath \frac{\theta}{2}}
			\end{array}\right]. \label{eq_pauli_rot}
		\end{equation}
		They are equivalent to rotating a tunable angle $\theta$ around $x$, $y$, and $z$ axes of the Bloch sphere, and recovering the Pauli gates $X$, $Y$, and $Z$ when $\theta=\pi$. Moreover, a multi-qubit gate can be either an individual gate (e.g., CNOT gate) or a tensor product of multiple single-qubit gates.  
	
	The \textit{quantum measurement} refers to the procedure of extracting classical information from the quantum state. It is mathematically specified by a Hermitian matrix $H$ called the \textit{observable}. Applying the observable $H$ to the quantum state $\ket{\psi}$ yields a random variable whose expectation value is $\bra{\psi}H\ket{\psi}$. 

\textbf{Hamiltonian and ground state}. 
	In quantum computation, a \textit{Hamiltonian} is a Hermitian matrix that is used to characterize the evolution of a quantum system or as an observable to extract the classical information from the quantum system. Specifically, under the Schr\"odinger equation, a quantum gate has the mathematical form of $U=e^{-itH}$, where $H$ is a Hermitian matrix, called the Hamiltonian of the quantum system, and $t$ refers to the evolution time of the Hamiltonian. Typical single-qubit Hamiltonians include the Pauli matrices defined in Eq.~(\ref{eq_pauli}). As a result,  the evolution time $t$ refers to the tunable parameter $\theta$ in Eq.~(\ref{eq_pauli_rot}). Any single-qubit Hamiltonian can be decomposed as the linear combination of Pauli matrices, i.e., $H=a_1I+a_2X+a_3Y+a_4Z$ with $a_j \in \mathbb{C}$. In the same way, a multi-qubit Hamiltonian is denoted by $H=\sum_{j=1}^{4^n}a_jP_j$, where $P_j\in\{I,X,Y,Z\}^{\otimes n}$ is the tensor product of Pauli matrices. In quantum chemistry and quantum many-body physics, the Hermitian matrix that describes the quantum system to be solved is denoted as the \textit{problem Hamiltonian} $H_C$. 
	
	When taking the problem Hamiltonian as the observable, the quantum state $\ket{\psi^*}$ is said to be the \textit{ground state} of problem Hamiltonian $H$ if the expectation value $\bra{\psi^*}H\ket{\psi^*}$ takes the minimum eigenvalue of $H$, which is called the \textit{ground energy}. The ground states encode much essential information about the problem Hamiltonian, such as the critical behavior of quantum many-body systems, or the optimal solution of an optimization problem related to the problem Hamiltonian.

\subsection{Pauli path propagation in the Heisenberg picture}
\label{app_PP_notion}

Let $\mathcal{P}_n= \{sP: s \in \{\pm 1\}, P \in \{I,X,Y,Z\}^{\otimes n}\}$ be the set of $n$-qubit Pauli strings with an overall sign $\pm 1$. Then, any Hermitian observable admits a Pauli expansion
\begin{equation}
O=\sum_{k=1}^{N_O} c_k O_k,\qquad O_k\in\mathcal{P}_n,\ c_k >0.
\end{equation}
We consider a parameterized circuit consisting solely of Pauli rotations $\{R_{G_j}\}_{j=1}^{L}$ and Clifford gates $\{V_{j}\}_{j=1}^{L}$:
\begin{equation}\label{app_PP_U_bmt_eq}
U(\bmt)=\prod_{j=1}^{L} R_{G_j}(\bmt_j) V_j,
\qquad
R_{G_j}(\bmt_j):=\exp\left(-\imath \bmt_j G_j/2\right),\ \ G_j\in\mathcal{P}_n .
\end{equation}
For an input state $\rho$, the expectation value is
\begin{equation}\label{dpp_eq_f_bmt}
f(\bmt)
=\Tr \left[ O U(\bmt)\rho U(\bmt)^\dagger \right]
=\sum_{k=1}^{N_O} c_k \Tr \left[\rho O_{k,L}(\bmt)\right],
\end{equation}
where we define the Heisenberg evolution of each Pauli term by the recursion
\begin{equation}\label{app_PP_PkL_eq}
O_{k,0}:=O_k,\qquad
O_{k,j}(\bmt)
:= V_j^\dagger R_{G_j}(\bmt_j)^\dagger O_{k,j-1}(\bmt) R_{G_j}(\bmt_j) V_j,
\qquad j=1,\dots,L.
\end{equation}

Next, we derive the update rule in Eq.~(\ref{app_PP_PkL_eq}). Since conjugation is linear, it suffices to describe the update of a single Pauli operator $P$ appearing in the expansion of $O_{k,j-1}(\bmt)$ under the $j$-th evolution.
A Pauli operator either commutes or anti-commutes with the generator $G_j$. Thus, the conjugation takes the form
\begin{equation}\label{dpp_eq_rot-map-only}
V_j^\dag R_{G_j}(\bmt_j)^\dagger P R_{G_j}(\bmt_j) V_j =
\begin{cases}
V_j^\dag P V_j , & [P,G_j]=0,\\
\cos(\bmt_j) V_j^\dag P V_j + \sin(\bmt_j)  (\imath V_j^\dag G_j P V_j), & \{P,G_j\}=0,
\end{cases}
\end{equation}
where $V_j^\dag P V_j$ and $\imath V_j^\dag G_j P V_j$ are Pauli strings with a sign $\pm 1$. Applying Eq.~(\ref{dpp_eq_rot-map-only}) repeatedly to Eq.~(\ref{app_PP_PkL_eq}) for $j=1,\cdots,L$ yields that, for every $k$, the Heisenberg-evolved operator admits an explicit pattern formulation
\begin{equation}\label{dpp_eq_Pk-expansion-pattern}
O_{k,L}(\bmt)
=\sum_{\bm{\omega}\in\Omega_{k}}
P_{\bm{\omega}} (O_k) \Psi_{\bm{\omega}}(\bmt)
\end{equation}
with 
\begin{align}
\Psi_{\bmo}(\bmt) ={}& \prod_{j=1}^{L} (\cos \bmt_j)^{\mathbb{I}[\bmo_j=1]} (\sin \bmt_j)^{\mathbb{I}[\bmo_j=-1]}, \label{dypp_pp_eq_psi_app} \\
P_{\bmo}(O_k) ={}& \left( \prod_{j=L}^{1} V_j^\dag (\imath G_j)^{\mathbb{I}[\bmo_j=-1]}  \right) O_k \left( \prod_{j=1}^{L} V_j \right), \label{dypp_pp_eq_Pomega_app}
\end{align}
where the pattern $\bm{\omega}=(\bm{\omega}_1,\dots,\bm{\omega}_L)\in\{0,\pm1\}^{L}$ records the propagation mode. Concretely, for each layer $j\in\{1,\dots,L\}$,
\begin{equation}
\bm{\omega}_j =
\begin{cases}
0, & \text{if the current Pauli string commutes with } G_j,\\
+1, & \text{if it anti-commutes with } G_j \text{ and the }\cos\text{-branch is chosen},\\
-1, & \text{if it anti-commutes with } G_j \text{ and the }\sin\text{-branch is chosen}.
\end{cases}
\end{equation}
Here $\Omega_{k}\subseteq\{0,\pm1\}^L$ denotes the set of legal patterns starting from $O_{k,0}=O_k$, i.e., those that are consistent with whether each intermediate Pauli operator commutes or anti-commutes with the corresponding generator. 
Applying Eq.~(\ref{dpp_eq_Pk-expansion-pattern}) in Eq.~(\ref{dpp_eq_f_bmt}) yields the Pauli expansion of the expectation value:
\begin{equation}\label{dpp_eq_app_f_bmt_in_pp}
f(\bmt) =\sum_{k=1}^{N_O} \sum_{\bm{\omega}\in\Omega_{k}} c_k \Psi_{\bm{\omega}}(\bmt) \Tr \left[ P_{\bm{\omega}} (O_k) \rho \right] = \sum_{k=1}^{N_O} \sum_{\bm{\omega}\in\Omega_{k}} c_k f_{\bmo} (\bmt; O_k) ,
\end{equation}
where we denote 
\begin{equation}\label{dpp_eq_f_bmo}
f_{\bmo} (\bmt; O_k) := \Psi_{\bm{\omega}}(\bmt) \Tr \left[ P_{\bm{\omega}} (O_k) \rho \right] .
\end{equation}

\subsection{Related work}

We review prior \texttt{PBS} methods that are most relevant to truncation-based \texttt{PBS} (\texttt{Tb-PBS}). \texttt{PBS} provides a Heisenberg-picture framework for estimating expectation values by expanding the evolved observable in the Pauli basis~\cite{PhysRevA.99.062337,PRXQuantum.6.020302,10.1063/5.0269149,Fontana2025,GonzalezGarcia2025paulipath,PhysRevLett.133.120603,Lin2026utilityscalequantum}. Since the number of propagation paths can grow exponentially with the number of non-Clifford gates, practical simulators typically control the expansion by discarding a subset of paths or Pauli strings during propagation. Existing \texttt{Tb-PBS} methods can be broadly categorized according to their truncation criterion: frequency truncation (\texttt{FT-PBS}), coefficient truncation (\texttt{CT-PBS}), and weight truncation (\texttt{WT-PBS}).

\textbf{\texttt{FT-PBS} method.}
One line of work truncates the Pauli-path expansion by restricting the Fourier level of the propagated observable, equivalently retaining only terms with a bounded number of nonzero sine or cosine factors~\cite{10.1063/5.0269149,Fontana2025,rudolph2023classical,PhysRevA.108.032406,lerch2024efficient,Du2025,liao2025sample,Liao2026}. Such approaches are well motivated in near-Clifford or noisy regimes, where high-frequency components are suppressed and low-frequency surrogates can approximate expectation values with controlled error~\cite{10.1063/5.0269149,Fontana2025}. LOWESA-type methods, for example, exploit noise-induced damping to obtain efficient classical surrogates for noisy variational circuits~\cite{Fontana2025,rudolph2023classical}.
Learning-based surrogate models based on frequency truncations have also
been theoretically characterized for noiseless bounded-gate circuits~\cite{
Du2025} and experimentally demonstrated on quantum processors~\cite{liao2025sample,Liao2026}.
However, these guarantees are primarily designed for expectation-value estimation. For optimization, the relevant object is the gradient field, and Fourier analyses of variational quantum circuits suggest that derivative information can depend on modes that are different from, and often less compressed than, those dominating the function value~\cite{PhysRevA.108.032406}. Consequently, a frequency cutoff that accurately approximates the objective value may still remove paths that make non-negligible contributions to the gradient.

\textbf{\texttt{CT-PBS} method.}
A second class of methods, often described as sparse Pauli dynamics or Pauli-path simulation, prunes Pauli strings or paths whose instantaneous coefficients fall below a prescribed threshold~\cite{PRXQuantum.6.020302,Lin2026utilityscalequantum,doi:10.1126/sciadv.adk4321,gharibyan2025practical,shao2025pauli}. These methods are simple and effective when the Heisenberg-evolved observable remains sparse in the Pauli basis, and they have been successfully applied to spin dynamics, utility-scale state-preparation benchmarks, and practical Pauli-path simulations with empirical convergence diagnostics~\cite{PRXQuantum.6.020302,Lin2026utilityscalequantum,doi:10.1126/sciadv.adk4321}. Nevertheless, the coefficient magnitude is not a reliable proxy for gradient contribution. A path with a small coefficient can have a large derivative with respect to a circuit parameter; for instance, when a final branching angle is close to zero, the sine branch has coefficient $\sin(\bmt_j)$ but derivative $\cos(\bmt_j)$, which remains order one. Thus, coefficient truncation can preserve the dominant value contributions while discarding terms that determine the local descent direction. This mismatch becomes particularly problematic in iterative optimization, where the relevant coefficients change after every parameter update.

\textbf{\texttt{WT-PBS} method.}
A third family of methods restricts the operator space by retaining only Pauli strings or Pauli paths with bounded Pauli weight, namely those supported on at most a prescribed number of non-identity single-qubit factors~\cite{xct1-7kf2,GonzalezGarcia2025paulipath,j1gg-s6zb,PhysRevLett.133.120603,lh6x-7rc3,angrisani2025simulating}. The motivation is that local noise, scrambling, or average-case randomness can suppress high-weight components of the Heisenberg-evolved observable. In noisy circuits, recent results establish polynomial or quasi-polynomial simulation guarantees by showing that high-weight Pauli components are exponentially damped under suitable local noise models~\cite{GonzalezGarcia2025paulipath,PhysRevLett.133.120603,angrisani2025simulating}. Related noiseless average-case analyses show that low-weight Pauli propagation can approximate expectation values for certain locally scrambling circuit ensembles~\cite{xct1-7kf2,lh6x-7rc3}. However, these results typically concern expectation estimation under structural assumptions such as local noise, sparse observables, average-case circuit distributions, or restricted non-Clifford resources. They do not directly provide guarantees for gradient-based optimization of fixed variational circuits with non-uniform and correlated parameters, such as QAOA, Hamiltonian variational ansätze, or trained quantum neural networks. Empirical evidence further indicates that low-weight Pauli propagation can yield biased gradients in variational optimization~\cite{4kyq-n8jb}.

Despite their different motivations, the above methods share a common principle: they replace the full Pauli-path expansion by a deterministic truncated subset chosen mainly to control expectation-value error. This design is suitable for forward simulation, but it does not in general control the bias of the gradient estimator or the optimization trajectory induced by that bias. Our work addresses this gap by avoiding deterministic path truncation. \texttt{SPPS} samples from the full legal Pauli-path space and uses importance reweighting together with path automatic differentiation to construct unbiased stochastic estimators of both expectation values and gradients. This shifts \texttt{PBS} from static expectation estimation to faithful simulation of gradient-based quantum optimization.

\section{\texttt{Tb-PBS} leads to biased gradient (Proof of Theorem~\ref{thm_toy_bias})}
\label{app_bias_grad}

In this appendix, we prove that \texttt{Tb-PBS} approximate the expectation value with arbitrarily small error while inducing a non-vanishing gradient bias. In particular, Theorem~\ref{thm_toy_bias_app} establishes this separation by showing that removing a single value-small but derivative-large path can yield an $\epsilon$-accurate value estimator with order-one gradient error. Sec.~\ref{app_worst_case_truncation} then shows that \texttt{CT}-, \texttt{FT}-, and \texttt{WT-PBS} can realize this removal in explicit circuits.

\begin{theorem}[Formal statement of Theorem~\ref{thm_toy_bias}]
\label{thm_toy_bias_app}
Using the notation introduced in Section~\ref{app_PP_notion}, consider the function
\begin{equation}
\label{thm_toy_bias_app_eq1}
f(\bmt)
= \Tr \big[ O U(\bmt)\rho U^\dagger(\bmt)\big]
= \sum_{\bmo\in\Omega}\Psi_{\bmo}(\bmt)
\Tr \big[P_{\bmo}(O)\rho\big],
\end{equation}
where $\rho= ({|0\>}{\<0|})^{\otimes n}$, $O$ is a Pauli observable, and $f$ is non-constant on $\mathbb{R}^L$. Then, for any $\epsilon\in(0,1)$, there exist a set $\hat{\Omega}\subsetneq\Omega$ and a region $\Theta\subset\mathbb{R}^L$ such that, for any $\bmt\in\Theta$,
\begin{align}
\big| \hat{f}(\bmt)-f(\bmt) \big|
&\le \epsilon,
\label{thm_toy_bias_app_eq2} \\
\big\|\nabla_{\bmt}\hat{f}(\bmt)-\nabla_{\bmt}f(\bmt)\big\|_2
&\ge 1-\epsilon,
\label{thm_toy_bias_app_eq3}
\end{align}
where
\begin{equation}
\hat{f}(\bmt)
=
\sum_{\bmo\in\hat{\Omega}}
\Psi_{\bmo}(\bmt)
\Tr \big[P_{\bmo}(O)\rho\big].
\end{equation}
\end{theorem}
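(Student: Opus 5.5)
The plan is to remove a single path $\bmo^\star$ from $\Omega$ and work near a point where that path has small value but large derivative. Since $f$ is non-constant, some path carries a non-zero contribution that actually varies with $\bmt$. I first rule out degenerate cases.

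\textbf{Step 1: choose the path $\bmo^\star$.} Each $\Tr[P_{\bmo}(O)\rho]$ lies in $\{0,\pm1\}$, because $\rho=(|0\>\<0|)^{\otimes n}$ and $P_{\bmo}(O)$ is a signed Pauli string. Each $\Psi_{\bmo}$ is a monomial in $\cos\bmt_j$ and $\sin\bmt_j$. Distinct legal paths differ in some coordinate where both branch, so the monomials are distinct. Hence the expansion is a genuine trigonometric expansion, and non-constancy forces some path $\bmo^\star$ with $\Tr[P_{\bmo^\star}(O)\rho]=\pm1$ and $\bmo^\star\neq\bm 0$. I would pick $\bmo^\star$ with at least one sine branch, $\bmo^\star_{j^\star}=-1$.

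If every nonzero contributing path uses only cosine branches, I would fall back on any cosine branch $j^\star$. In that case I swap the roles, working near $\bmt_{j^\star}=\pi/2$, where $\cos$ vanishes and its derivative is $-1$. This case split is routine.

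\textbf{Step 2: set $\hat\Omega$ and $\Theta$.} Let $\hat\Omega=\Omega\setminus\{\bmo^\star\}$. Then
\[
\hat f-f=-h_{\bmo^\star}=\mp\Psi_{\bmo^\star}(\bmt).
\]
I choose $\Theta$ as a small box where every coordinate $j\neq j^\star$ with $\bmo^\star_j\neq0$ is pinned near a value making its factor's magnitude close to $1$: $\bmt_j\approx0$ for cosine factors and $\bmt_j\approx\pi/2$ for sine factors. The coordinate $\bmt_{j^\star}$ ranges over $|\bmt_{j^\star}|\le\epsilon$. Concretely, let every other coordinate lie within $\delta$ of its target, with $\delta$ small enough that each factor has magnitude at least $(1-\epsilon)^{1/L}$.

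\textbf{Step 3: verify the two bounds.} The value error is
\[
|\Psi_{\bmo^\star}|\le|\sin\bmt_{j^\star}|\le\epsilon,
\]
since every other factor has magnitude at most $1$. For the gradient, I only need the $j^\star$-th component:
\[
\|\nabla\hat f-\nabla f\|_2\ge|\partial_{j^\star}\Psi_{\bmo^\star}|=|\cos\bmt_{j^\star}|\prod_{j\neq j^\star}|\text{factor}_j|.
\]
The product over $j\neq j^\star$ is at least $1-\epsilon$ by the choice of $\delta$. The factor $|\cos\bmt_{j^\star}|\ge\cos\epsilon$ costs a little extra, so I shrink both radii to absorb it. For example, take $|\bmt_{j^\star}|\le\epsilon/2$ and require the product over the other coordinates to be at least $(1-\epsilon/2)$, using $\cos(\epsilon/2)(1-\epsilon/2)\ge1-\epsilon$.

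\textbf{Main obstacle.} The hard part is Step 1: showing rigorously that non-constancy yields a contributing path whose monomial can be driven to value $\le\epsilon$ with derivative $\approx1$. This requires that the legal-path set $\Omega$, which depends only on the commutation structure and not on $\bmt$, be valid uniformly in $\bmt$. It also requires distinct paths to give linearly independent monomials, so no cancellation hides $\bmo^\star$. Fortunately the removal argument never needs other paths to vanish: $\hat f-f$ is exactly $-h_{\bmo^\star}$. So only the existence of a single contributing path with a sine (or cosine) factor is needed, and that follows from $f$ not being constant.
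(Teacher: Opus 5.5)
Your proposal is correct and follows the paper's proof essentially step for step. Both remove a single path $\bmo^\star$ with $\Tr[P_{\bmo^\star}(O)\rho]=\pm1$ and $\bmo^\star_j\neq0$. Both use a product box that pins the other active coordinates near $0$ (cosine factors) or $\pi/2$ (sine factors) and puts $\bmt_j$ near the zero of its own factor, with the same cosine/sine case split. The paper fixes explicit radii $\sqrt{\epsilon/L}$ and $\epsilon$ instead of your continuity-chosen $\delta$ and $\epsilon/2$. Your Step 1 is also simpler than you feared, and the linear-independence detour is unnecessary. The paper picks $j$ with $\partial_{\bmt_j} f\not\equiv 0$ and observes that some path with $\bmo_j\neq0$ must then have nonzero trace.
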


\begin{proof}
Since $f$ is non-constant, there exists at least one coordinate $j\in[L]$ such that $\partial_{\bmt_j} f$ is not identically zero. Differentiating Eq.~\eqref{thm_toy_bias_app_eq1} with respect to $\bmt_j$ gives
\begin{equation}
\partial_{\bmt_j} f(\bmt)
=
\sum_{\bmo\in\Omega:\,\bmo_j\neq 0}
\partial_{\bmt_j}\Psi_{\bmo}(\bmt)\,
\Tr[P_{\bmo}(O)\rho].
\end{equation}
If $\Tr[P_{\bmo}(O)\rho]=0$ for every path with $\bmo_j\neq 0$, then $\partial_{\bmt_j} f$ would vanish identically, contradicting the choice of $j$. Hence, there exists a path $\bmo^*\in\Omega$ such that $\bmo_j^*\in\{\pm1\}$ and
\begin{equation}
\label{eq_nonzero_coeff}
\Tr \big[P_{\bmo^*}(O)\rho\big]\neq 0.
\end{equation}

We construct the truncated path set by removing this single path,
\begin{equation}
\label{thm_toy_bias_app_truncated_set}
\hat{\Omega}=\Omega\setminus\{\bmo^*\}.
\end{equation}
Let
\begin{equation}
r(\bmt):=f(\bmt)-\hat f(\bmt)
=
\Psi_{\bmo^*}(\bmt)
\Tr \big[P_{\bmo^*}(O)\rho\big].
\label{thm_toy_bias_app_r_theta}
\end{equation}
Since $\rho= ({|0\>}{\<0|})^{\otimes n}$ and $P_{\bmo^*}(O)$ is a Pauli operator, Eq.~\eqref{eq_nonzero_coeff} implies
\begin{equation}
\label{thm_toy_bias_app_trace_pm1}
\Tr \big[P_{\bmo^*}(O)\rho\big]=\pm1.
\end{equation}
Indeed, the expectation is nonzero only when $P_{\bmo^*}(O)\in\{\pm1\}\times\{I,Z\}^{\otimes n}$. Therefore,
\begin{align}
\left| r(\bmt) \right|
&=
\big|\Psi_{\bmo^*}(\bmt)\big|,
\label{thm_toy_bias_app_abs_r_theta} \\
\left\| \nabla_{\bmt}r(\bmt) \right\|_2
&=
\left\| \nabla_{\bmt}\Psi_{\bmo^*}(\bmt) \right\|_2 .
\label{thm_toy_bias_app_norm_nabla_r_theta}
\end{align}

We next construct a region where the removed path has a small value contribution but a large derivative contribution. Let
\begin{equation}
\Theta=\Theta_1\times\Theta_2\times\cdots\times\Theta_L,
\label{thm_toy_bias_app_theta_region}
\end{equation}
where, for each $k\in[L]$,
\begin{equation}
\label{thm_toy_bias_app_Theta_k}
\Theta_k=
\begin{cases}
[-\pi,\pi],
& \bmo_k^*=0,\\[4pt]
\left[-\sqrt{\frac{\epsilon}{L}}, \sqrt{\frac{\epsilon}{L}}\right],
& k\neq j \text{ and } \bmo_k^*=1,\\[8pt]
\left[\frac{\pi}{2}-\sqrt{\frac{\epsilon}{L}}, 
\frac{\pi}{2}+\sqrt{\frac{\epsilon}{L}}\right],
& k\neq j \text{ and } \bmo_k^*=-1,\\[8pt]
\left[\frac{\pi}{2}-\epsilon, 
\frac{\pi}{2}+\epsilon\right],
& k=j \text{ and } \bmo_k^*=1,\\[8pt]
[-\epsilon,\epsilon],
& k=j \text{ and } \bmo_k^*=-1.
\end{cases}
\end{equation}

We first prove the expectation value error bound. Suppose $\bmo_j^*=1$. Then, for any $\bmt\in\Theta$, we have $\bmt_j\in[\pi/2-\epsilon,\pi/2+\epsilon]$, and hence
\begin{equation}
|\cos\bmt_j|\le \sin\epsilon\le \epsilon.
\end{equation}
Since all other trigonometric factors in $\Psi_{\bmo^*}(\bmt)$ have magnitude at most one,
\begin{align}
|r(\bmt)|
&=
\prod_{k=1}^{L}
\left| \cos\bmt_k \right|^{\mathbbm{1}[\bmo_k^*=1]}
\left| \sin\bmt_k \right|^{\mathbbm{1}[\bmo_k^*=-1]}
\le |\cos\bmt_j|
\le \epsilon.
\label{thm_toy_bias_app_bound_r}
\end{align}
If $\bmo_j^*=-1$, then $\bmt_j\in[-\epsilon,\epsilon]$, and the same argument gives $|r(\bmt)|\le |\sin\bmt_j|\le\epsilon$. Therefore,
\begin{equation}
|\hat f(\bmt)-f(\bmt)|=|r(\bmt)|\le\epsilon.
\end{equation}

It remains to prove the gradient error bound. Again consider the case $\bmo_j^*=1$. From Eq.~\eqref{thm_toy_bias_app_norm_nabla_r_theta},
\begin{align}
\| \nabla_{\bmt}r(\bmt)\|_2
&\ge
\left|
\frac{\partial}{\partial \bmt_j}\Psi_{\bmo^*}(\bmt)
\right|
\notag\\
&=
|\sin\bmt_j|
\prod_{k\neq j}
|\cos\bmt_k|^{\mathbbm{1}[\bmo_k^*=1]}
|\sin\bmt_k|^{\mathbbm{1}[\bmo_k^*=-1]}
\notag\\
&\ge
\cos\epsilon
\left(\cos\sqrt{\frac{\epsilon}{L}}\right)^L
\notag\\
&=
\sqrt{
(1-\sin^2\epsilon)
\left(1-\sin^2\sqrt{\frac{\epsilon}{L}}\right)^L
}
\notag\\
&\ge
\sqrt{
(1-\epsilon^2)
\left(1-\frac{\epsilon}{L}\right)^L
}
\notag\\
&\ge
\sqrt{(1-\epsilon^2)(1-\epsilon)}
\ge 1-\epsilon.
\label{thm_toy_bias_app_nabla_r_theta_value_3}
\end{align}
Here we used $|\sin x|\le x$ for $x\in\{\epsilon,\sqrt{\epsilon/L}\}$ and $(1-\epsilon/L)^L\ge 1-\epsilon$. The case $\bmo_j^*=-1$ is analogous. Since
\begin{equation}
\nabla_{\bmt}r(\bmt)
=
\nabla_{\bmt}f(\bmt)-\nabla_{\bmt}\hat f(\bmt),
\end{equation}
we obtain
\begin{equation}
\big\|\nabla_{\bmt}\hat{f}(\bmt)-\nabla_{\bmt}f(\bmt)\big\|_2
\ge 1-\epsilon.
\end{equation}
\end{proof}

\subsection{Explicit case with standard truncation rules}
\label{app_worst_case_truncation}

We next provide an explicit $n$-qubit construction showing that \texttt{CT}, \texttt{FT}, and \texttt{WT} can realize the value-gradient separation in Theorem~\ref{thm_toy_bias_app}. 
The construction uses a single-qubit observable, while a Clifford fanout layer generates high-weight Pauli strings during Heisenberg propagation. 
Moreover, the truncated estimator retains a nonzero contribution, so the separation is not caused by removing all paths. 

Let $n\ge 3$, $\rho=(|0\>\<0|)^{\otimes n}$, and consider the single-qubit observable
\begin{equation}
\label{app_worst_case_single_observable}
O=Z_1 .
\end{equation}
Let
\begin{equation}
\label{app_worst_case_cz_fanout}
\mathcal C=\prod_{j=2}^{n} CZ_{1j}
\end{equation}
be a CZ fanout layer. We consider the circuit
\begin{equation}
\label{app_worst_case_single_observable_circuit}
U(\bmt)
=
R_{X_1}(\bmt_1)\,
\mathcal C\,
\left(\prod_{j=2}^{n}R_{X_j}(\bmt_j)\right)
R_{X_1}(\bmt_{n+1}),
\end{equation}
where $\bmt=(\bmt_1,\ldots,\bmt_{n+1})$.

Using
\begin{equation}
R_{X_j}^\dagger(x)Z_jR_{X_j}(x)=\cos x\, Z_j+\sin x\,Y_j,
\qquad
R_{X_j}^\dagger(x)Y_jR_{X_j}(x)=\cos x\,Y_j-\sin x\,Z_j,
\end{equation}
and
\begin{equation}
\mathcal C^\dagger Z_1\mathcal C=Z_1,
\qquad
\mathcal C^\dagger Y_1\mathcal C=Y_1Z_2\cdots Z_n,
\end{equation}
the expectation-value function is
\begin{equation}
\label{app_worst_case_single_exact_function}
f(\bmt)
=
\Tr[U^\dagger(\bmt)OU(\bmt)\rho]
=
F_{\rm keep}(\bmt)+F_{\rm drop}(\bmt),
\end{equation}
where
\begin{equation}
\label{app_worst_case_single_terms}
F_{\rm keep}(\bmt)
=
\cos\bmt_1\cos\bmt_{n+1},
\qquad
F_{\rm drop}(\bmt)
=
-\sin\bmt_1
\left(\prod_{j=2}^{n}\cos\bmt_j\right)
\sin\bmt_{n+1}.
\end{equation}
All other branches contain at least one $X$ or $Y$ factor and hence have zero expectation under $\rho$.

We consider the parameter region
\begin{equation}
\label{app_worst_case_single_region}
|\bmt_1|\le \frac{\epsilon}{64},
\qquad
|\bmt_j|\le \frac{\epsilon}{64n},\quad j=2,\ldots,n,
\qquad
\cos\bmt_{n+1}=\frac{\sqrt{\epsilon}}{4},
\end{equation}
where $\epsilon\in(0,1)$. On this region, the contribution $F_{\rm drop}$ is small in value:
\begin{equation}
\label{app_worst_case_single_value}
|F_{\rm drop}(\bmt)|
\le
|\sin\bmt_1|
\le
\frac{\epsilon}{64}
\le
\epsilon .
\end{equation}
However, its derivative with respect to $\bmt_1$ remains order one:
\begin{equation}
\label{app_worst_case_single_grad}
|\partial_{\bmt_1}F_{\rm drop}(\bmt)|
=
|\cos\bmt_1|
\left(\prod_{j=2}^{n}|\cos\bmt_j|\right)
|\sin\bmt_{n+1}|
\ge
1-\epsilon .
\end{equation}
The last inequality follows from Eq.~\eqref{app_worst_case_single_region} and
$\sin\bmt_{n+1}=\sqrt{1-\epsilon/16}$.

We now show that standard truncation rules used in \texttt{Tb-PBS} remove $F_{\rm drop}$ while retaining the nonzero contribution $F_{\rm keep}$.

\textbf{Biased gradients of \texttt{CT-PBS} method.}
The coefficient magnitude of the path contributing to $F_{\rm drop}$ is
\begin{equation}
|\sin\bmt_1|
\left(\prod_{j=2}^{n}|\cos\bmt_j|\right)
|\sin\bmt_{n+1}|
\le
\frac{\epsilon}{64}.
\end{equation}
In contrast, the coefficient magnitude of the path contributing to $F_{\rm keep}$ is
\begin{equation}
|\cos\bmt_1\cos\bmt_{n+1}|
\ge
\frac{\sqrt{\epsilon}}{8}.
\end{equation}
Choose the coefficient threshold
\begin{equation}
\label{app_worst_case_single_ct_threshold}
\tau=\frac{\sqrt{\epsilon}}{16}.
\end{equation}
Then \texttt{CT-PBS}  removes $F_{\rm drop}$ and retains $F_{\rm keep}$, yielding
\begin{equation}
\label{app_worst_case_single_ct_fhat}
\hat f_{\texttt{CT}}(\bmt)=F_{\rm keep}(\bmt).
\end{equation}

\textbf{Biased gradients of \texttt{FT-PBS} method.}
The path contributing to $F_{\rm keep}$ has Fourier level $2$, corresponding to the two nonzero factors $\cos\bmt_1$ and $\cos\bmt_{n+1}$. 
The path contributing to $F_{\rm drop}$ has Fourier level $n+1$, corresponding to the factors
\begin{equation}
\sin\bmt_1,\quad
\cos\bmt_2,\ldots,\cos\bmt_n,\quad
\sin\bmt_{n+1}.
\end{equation}
Choose the cutoff
\begin{equation}
\label{app_worst_case_single_ft_threshold}
\nu=n .
\end{equation}
Then \texttt{FT-PBS} removes $F_{\rm drop}$ and retains $F_{\rm keep}$, yielding
\begin{equation}
\label{app_worst_case_single_ft_fhat}
\hat f_{\texttt{FT}}(\bmt)=F_{\rm keep}(\bmt).
\end{equation}

\textbf{Biased gradients of \texttt{WT-PBS} method.}
The path contributing to $F_{\rm keep}$ stays supported on qubit $1$, so its Pauli weight is at most $1$ throughout the propagation. 
In contrast, after the CZ fanout layer, the path contributing to $F_{\rm drop}$ contains the Pauli string
\begin{equation}
Y_1Z_2\cdots Z_n,
\end{equation}
which has Pauli weight $n$. Choose the Pauli-weight cutoff
\begin{equation}
\label{app_worst_case_single_wt_threshold}
\gamma=n-1 .
\end{equation}
Then \texttt{WT} removes $F_{\rm drop}$ and retains $F_{\rm keep}$, yielding
\begin{equation}
\label{app_worst_case_single_wt_fhat}
\hat f_{\texttt{WT}}(\bmt)=F_{\rm keep}(\bmt).
\end{equation}

Therefore, even with a single-qubit observable, \texttt{CT}, \texttt{FT}, and \texttt{WT} can retain a nonzero truncated estimate while discarding a value-small but gradient-dominant propagation path. This establishes an explicit realization of the value-gradient separation in Theorem~\ref{thm_toy_bias_app}.

\section{\texttt{Tb-PBS} leads to sub-optimal convergence (Proof of Corollary~\ref{cor_opt_gap})}
\label{app_bias_GD}

In this appendix, we prove that truncation-induced gradient bias can accumulate along optimization trajectories, leading to a macroscopic gap between the objective values reached by exact-gradient descent and \texttt{Tb-PBS}-driven descent.

\begin{theorem}[Formal statement of Corollary~\ref{cor_opt_gap}]
\label{cor_opt_gap_app}
There exist a circuit $U(\bmt)$, an observable $O$, and a truncated path set $\hat{\Omega}\subsetneq\Omega$ produced by \texttt{CT}, \texttt{FT}, or \texttt{WT} in suitable truncation regimes, such that the following holds. 
For any $\epsilon\in(0,1)$, there is an initialization $\bmt^{(0)}$ satisfying
\begin{equation}
\big|\hat f(\bmt_{\rm tr}^{(t)})-f(\bmt_{\rm tr}^{(t)})\big|\le \epsilon, 
\qquad 
\forall t\in\mathbb{N},
\end{equation}
while the exact-gradient trajectory $\bmt_{\rm ex}^{(t)}$ and the \texttt{Tb-PBS}-gradient trajectory $\bmt_{\rm tr}^{(t)}$ obey
\begin{equation}
\big|f(\bmt_{\rm tr}^{(t)})-f(\bmt_{\rm ex}^{(t)})\big|
\ge 
\Omega(\eta t\|O\|_2^2), 
\qquad 
1\le t\le \O\!\left(\frac{1}{\eta\|O\|_2}\right).
\end{equation}
Moreover, the asymptotic gap satisfies
\begin{equation}
\liminf_{t\rightarrow\infty}
\big|f(\bmt_{\rm tr}^{(t)})-f(\bmt_{\rm ex}^{(t)})\big|
\ge 
\Omega(\|O\|_2).
\end{equation}
\end{theorem}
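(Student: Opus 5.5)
The plan is to make the truncated GD trajectory exactly stationary while the exact trajectory is driven away by the discarded path. Concretely, I would build a circuit on the CZ-fanout template of App.~\ref{app_worst_case_truncation} in which the exact objective reduces, on a suitable invariant set, to $f(\bmt)=-c\,\sin\bmt_1\sin\bmt_{2}\prod_{j}\cos\bmt_j$ plus surviving terms whose expectation on $\rho$ vanishes. The observable is $O=cZ_1$, so $\|O\|_2=c$.

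The discarded path is the same high-Fourier-level, high-Pauli-weight path $F_{\rm drop}$ as in Eq.~\eqref{app_worst_case_single_terms}. \texttt{CT}, \texttt{FT} and \texttt{WT} remove it for the same reasons and with the same thresholds as there. To ensure that none of the paths retained by the truncation contributes on $\rho$, I would choose the gate placement (e.g., an extra $R_{Y_1}$ or a Clifford before the final rotation) so that every retained Pauli string carries an $X$ or $Y$ factor. If necessary, I would add a single retained path with a nonzero but $\bmt$-independent contribution, so that $\hat f$ is not identically empty. In either case $\nabla\hat f\equiv 0$ on the relevant parameter set.

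The steps are as follows.
\begin{enumerate}
\item Initialize at $\bmt^{(0)}$ with $\bmt_1=0$, $\bmt_2=\pi/2$, and the fanout angles $\bmt_j=0$. The truncated gradient vanishes, so $\bmt_{\rm tr}^{(t)}=\bmt^{(0)}$ for all $t$. Since $\sin\bmt_1=0$ along this trajectory, $|\hat f-f|=|F_{\rm drop}|=0\le\epsilon$ for every $t$, which gives the first claim.
\item For the exact trajectory, compute $\partial_{\bmt_1}f=-c\cos\bmt_1\sin\bmt_2\prod\cos\bmt_j$, which equals $-c$ at initialization. GD therefore increases $\bmt_1$ at rate about $\eta c$ per step. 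The other coordinates have gradients proportional to $\sin\bmt_1$ or to $\sin\bmt_j$, and I would show that they stay, respectively, near $\pi/2$ and near $0$. I would do this by observing that the two-dimensional slice $\{\bmt_j=0\}$ is invariant, since the $\bmt_j$-gradients vanish there. On that slice $f=-c\sin\bmt_1\sin\bmt_2$, and GD on this function is monotone toward $\bmt_1=\bmt_2=\pi/2$.
\item By induction, with $\eta c\le 1$, derive $\bmt_{1,{\rm ex}}^{(t)}\ge \tfrac12\eta c t$ as long as $\bmt_1\le\pi/4$. Then $|f(\bmt_{\rm tr}^{(t)})-f(\bmt_{\rm ex}^{(t)})|=c\sin\bmt_1\sin\bmt_2\ge\Omega(\eta c^2 t)$ for $t\le\O(1/(\eta c))$.
\item For the asymptotics, use that GD with step $\eta<1/c$ on this smooth, $c$-smooth function decreases $f$ monotonically. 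Hence $f(\bmt_{\rm ex}^{(t)})\le f(\bmt_{\rm ex}^{(t_0)})\le -\Omega(c)$ for all $t\ge t_0$, whereas the truncated value stays at $0$. This gives a $\liminf$ gap of $\Omega(\|O\|_2)$.
\end{enumerate}

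The main obstacle is the circuit engineering in the first paragraph. I must arrange simultaneously that (a) all retained paths have zero expectation or are parameter-independent, so that the truncated trajectory is exactly frozen; (b) the dropped path is selected by each of the three rules with an admissible threshold; and (c) $\bmt_1=0$ is a legitimate regime for \texttt{CT}, which requires the dropped coefficient to be below $\tau$ at $\bmt^{(0)}$.

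Point (c) is immediate at $\bmt_1=0$. For \texttt{FT} and \texttt{WT}, the structural argument of App.~\ref{app_worst_case_truncation} transfers verbatim, since the dropped path still reaches level $n+1$ and weight $n$ after the fanout. A secondary technical point is controlling the exact trajectory off the invariant slice. I avoid this entirely by choosing initial fanout angles exactly $0$, which keeps the exact dynamics two-dimensional.
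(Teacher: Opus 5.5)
Your proof is correct, but it takes a different route from the paper's.

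\textbf{The paper's route.} The paper keeps a nontrivial retained path. With the fixed final rotation $R_{X_1}(\pi/4)$ and $O=\alpha Z_1$, it gets $\hat f=\frac{\alpha}{\sqrt2}\cos\bmt_1$ and $f=\hat f-\frac{\alpha}{\sqrt2}\sin\bmt_1\prod_{j\ge2}\cos\bmt_j$. It initializes at $(\pi-\delta,0,\dots,0)$ with $\delta=\O(\epsilon/\alpha)$ and compares two moving trajectories. Truncated GD runs to the spurious minimizer $\bmt_1=\pi$ of $\hat f$, with value error at most $\frac{\alpha}{\sqrt2}\delta\le\epsilon$ throughout. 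Exact GD runs to $3\pi/4$. The resulting gap is $\alpha(1-1/\sqrt2)$.

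\textbf{Your route.} You make every retained path vanish on $\rho$, so $\nabla\hat f=0$ and the truncated trajectory is frozen at a zero of $f$. The exact dynamics then reduces to one-dimensional GD on $\pm c\sin\bmt_1$. Both the linear-in-$t$ gap and the $\Omega(c)$ limit follow from elementary estimates. You also get zero value error and an initialization independent of $\epsilon$.

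\textbf{The circuit you left unspecified.} The engineering you flag as the main obstacle is easy to realize. For example, take
$U(\bmt)=R_{X_1}(\bmt_1)\,\mathcal C\,\big(\prod_{j=2}^{n}R_{X_j}(\bmt_j)\big)\,H_1\,R_{X_1}(\bmt_{n+1})$ with $O=cZ_1$.
\begin{itemize}
\item The retained path ends on $X_1$, which has zero expectation on $\rho$ and commutes with the final rotation.
\item The only surviving term is $f=c\sin\bmt_1\big(\prod_{j=2}^{n}\cos\bmt_j\big)\sin\bmt_{n+1}$.
\item The \texttt{FT} and \texttt{WT} thresholds $\nu=n$ and $\gamma=n-1$ carry over unchanged.
\item At $\bmt_1=0$, any $\tau\in(0,1]$ works for \texttt{CT}.
\end{itemize}
You should write this (or an equivalent) circuit down explicitly rather than leave it as ``e.g.''.

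\textbf{What each route buys.} Your \texttt{Tb-PBS} surrogate is the constant $\hat f\equiv 0$. The formal statement permits this, but it is a weaker illustration than the paper's. There, the truncated estimator retains a nonzero, parameter-dependent contribution and actively steers GD to a wrong minimizer, rather than simply stalling. The paper stresses exactly this point in App.~\ref{app_worst_case_truncation}. The paper pays for it with a slightly more involved two-trajectory comparison.

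\textbf{Your fallback.} Adding a retained path with a nonzero, $\bmt$-independent contribution cannot be done with a single Pauli observable. Every path passes the branching gate $R_{X_1}(\bmt_1)$ and so carries $\cos\bmt_1$ or $\sin\bmt_1$. You would need an extra Pauli term in $O$ that commutes with all generators. The fallback is not needed for the statement in any case.
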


\begin{proof}
Let $n\ge 3$, let $\alpha\ge 1$, and set
\begin{equation}
\rho=(|0\>\<0|)^{\otimes n},
\qquad
O=\alpha Z_1 .
\end{equation}
Then $O$ is a scaled single-qubit Pauli observable and
\begin{equation}
\label{cor_opt_gap_rich_norm_O}
\|O\|_2=\alpha .
\end{equation}

Let
\begin{equation}
\label{cor_opt_gap_rich_cz_fanout}
\mathcal C=\prod_{j=2}^{n}CZ_{1j}
\end{equation}
be a CZ fanout layer. Consider the circuit
\begin{equation}
\label{cor_opt_gap_rich_circuit}
U(\bmt)
=
R_{X_1}(\bmt_1)\,
\mathcal C\,
\left(\prod_{j=2}^{n}R_{X_j}(\bmt_j)\right)
R_{X_1}\!\left(\frac{\pi}{4}\right),
\qquad
\bmt=(\bmt_1,\ldots,\bmt_n),
\end{equation}
where the last rotation is fixed.

Using
\begin{equation}
R_{X_j}^\dagger(x)Z_jR_{X_j}(x)=\cos x\,Z_j+\sin x\,Y_j,
\qquad
R_{X_j}^\dagger(x)Y_jR_{X_j}(x)=\cos x\,Y_j-\sin x\,Z_j,
\end{equation}
and
\begin{equation}
\mathcal C^\dagger Z_1\mathcal C=Z_1,
\qquad
\mathcal C^\dagger Y_1\mathcal C=Y_1Z_2\cdots Z_n,
\end{equation}
the expectation value function is
\begin{equation}
\label{cor_opt_gap_rich_exact_f}
f(\bmt)
=
F_{\rm keep}(\bmt)+F_{\rm drop}(\bmt),
\end{equation}
where
\begin{equation}
\label{cor_opt_gap_rich_terms}
F_{\rm keep}(\bmt)
=
\frac{\alpha}{\sqrt 2}\cos\bmt_1,
\qquad
F_{\rm drop}(\bmt)
=
-\frac{\alpha}{\sqrt 2}\sin\bmt_1
\left(\prod_{j=2}^{n}\cos\bmt_j\right).
\end{equation}
All other branches contain at least one $X$ or $Y$ factor and therefore have zero expectation under $\rho$.

We choose truncation regimes that remove $F_{\rm drop}$ while retaining $F_{\rm keep}$. 
For \texttt{FT}, the retained branch has Fourier level $1$, whereas the dropped branch has Fourier level $n$; choose
\begin{equation}
\label{cor_opt_gap_rich_ft_threshold}
\nu=n-1 .
\end{equation}
For \texttt{WT}, the retained branch stays supported on qubit $1$, whereas the dropped branch contains the Pauli string
\begin{equation}
Y_1Z_2\cdots Z_n
\end{equation}
after the CZ fanout layer and therefore has Pauli weight $n$; choose
\begin{equation}
\label{cor_opt_gap_rich_wt_threshold}
\gamma=n-1 .
\end{equation}
For \texttt{CT}, we will initialize the truncated trajectory in a region where the coefficient of $F_{\rm drop}$ is uniformly small while the coefficient of $F_{\rm keep}$ is order one. 
Specifically, let
\begin{equation}
\label{cor_opt_gap_rich_delta_def}
\delta=\min\left\{\frac{\sqrt 2\,\epsilon}{2\alpha},\frac{1}{16}\right\},
\end{equation}
and choose any fixed threshold
\begin{equation}
\label{cor_opt_gap_rich_ct_threshold}
\frac{1}{8}<\tau<\frac{1}{2}.
\end{equation}
Along the \texttt{Tb-PBS} trajectory constructed below, $|\sin\bmt_1|\le \delta$ and $|\cos\bmt_1|\ge \cos\delta>1/2$, so \texttt{CT} removes $F_{\rm drop}$ and retains $F_{\rm keep}$.

Thus, for \texttt{CT}, \texttt{FT}, and \texttt{WT}, the truncated objective along the \texttt{Tb-PBS} trajectory is
\begin{equation}
\label{cor_opt_gap_rich_truncated_f}
\hat f(\bmt)
=
F_{\rm keep}(\bmt)
=
\frac{\alpha}{\sqrt 2}\cos\bmt_1 .
\end{equation}

Initialize both trajectories at
\begin{equation}
\label{cor_opt_gap_rich_initialization}
\bmt_{\rm ex}^{(0)}
=
\bmt_{\rm tr}^{(0)}
=
(\pi-\delta,0,\ldots,0).
\end{equation}
Since $\bmt_j^{(0)}=0$ for $j=2,\ldots,n$ and
\begin{equation}
\partial_{\bmt_j}f(\bmt)=0,\qquad
\partial_{\bmt_j}\hat f(\bmt)=0
\qquad
\text{whenever } \bmt_j=0,\quad j=2,\ldots,n,
\end{equation}
both trajectories remain in the one-dimensional invariant subspace
\begin{equation}
\bmt_{{\rm ex},j}^{(t)}=\bmt_{{\rm tr},j}^{(t)}=0,
\qquad
j=2,\ldots,n .
\end{equation}
Therefore, it suffices to analyze the first coordinate.

The \texttt{Tb-PBS}-driven update is
\begin{equation}
\label{cor_opt_gap_rich_tr_update}
\bmt_{{\rm tr},1}^{(t+1)}
=
\bmt_{{\rm tr},1}^{(t)}
+
\frac{\eta\alpha}{\sqrt 2}\sin\bmt_{{\rm tr},1}^{(t)} .
\end{equation}
Let
\begin{equation}
u_t=\pi-\bmt_{{\rm tr},1}^{(t)} .
\end{equation}
Then $u_0=\delta$ and
\begin{equation}
u_{t+1}
=
u_t-\frac{\eta\alpha}{\sqrt 2}\sin u_t .
\end{equation}
For a sufficiently small learning rate, $u_t$ decreases monotonically to $0$. Hence
\begin{equation}
0\le u_t\le \delta,
\qquad
\bmt_{{\rm tr},1}^{(t)}\rightarrow \pi .
\end{equation}
In particular, the \texttt{Tb-PBS} trajectory changes with $t$ and converges to the nonzero truncated value
\begin{equation}
\label{cor_opt_gap_rich_tr_convergence}
\lim_{t\to\infty}\hat f(\bmt_{\rm tr}^{(t)})
=
-\frac{\alpha}{\sqrt 2}.
\end{equation}

Along this trajectory, the dropped contribution remains uniformly small:
\begin{equation}
\label{cor_opt_gap_rich_value_error}
\big|
\hat f(\bmt_{\rm tr}^{(t)})
-
f(\bmt_{\rm tr}^{(t)})
\big|
=
|F_{\rm drop}(\bmt_{\rm tr}^{(t)})|
=
\frac{\alpha}{\sqrt 2}
|\sin\bmt_{{\rm tr},1}^{(t)}|
\le
\frac{\alpha}{\sqrt 2}\delta
\le
\epsilon,
\qquad
\forall t\in\mathbb N .
\end{equation}

We now compare it with the exact-gradient trajectory. 
On the invariant subspace, the exact objective reduces to
\begin{equation}
\label{cor_opt_gap_rich_one_dim_exact_f}
f(\bmt_{\rm ex}^{(t)})
=
\frac{\alpha}{\sqrt 2}
\left(
\cos\bmt_{{\rm ex},1}^{(t)}
-
\sin\bmt_{{\rm ex},1}^{(t)}
\right).
\end{equation}
Thus the exact-gradient update is
\begin{equation}
\label{cor_opt_gap_rich_exact_update}
\bmt_{{\rm ex},1}^{(t+1)}
=
\bmt_{{\rm ex},1}^{(t)}
+
\frac{\eta\alpha}{\sqrt 2}
\left(
\sin\bmt_{{\rm ex},1}^{(t)}
+
\cos\bmt_{{\rm ex},1}^{(t)}
\right).
\end{equation}
For $\bmt_{{\rm ex},1}^{(t)}\in[7\pi/8,\pi]$, we have
\begin{equation}
\sin\bmt_{{\rm ex},1}^{(t)}+\cos\bmt_{{\rm ex},1}^{(t)}
\le
-c_0
\end{equation}
for a universal constant $c_0>0$. 
Hence, for $1\le t\le c/(\eta\alpha)$ with a sufficiently small universal constant $c>0$, the exact trajectory remains in $[7\pi/8,\pi]$ and satisfies
\begin{equation}
\label{cor_opt_gap_rich_exact_move}
\bmt_{{\rm ex},1}^{(t)}
\le
\pi-\delta
-
c_1\eta\alpha t
\end{equation}
for another universal constant $c_1>0$. 
Meanwhile, the truncated trajectory satisfies
\begin{equation}
\bmt_{{\rm tr},1}^{(t)}\ge \pi-\delta .
\end{equation}
Therefore,
\begin{equation}
\label{cor_opt_gap_rich_param_sep}
\bmt_{{\rm tr},1}^{(t)}-\bmt_{{\rm ex},1}^{(t)}
\ge
c_1\eta\alpha t .
\end{equation}

The one-dimensional function
\begin{equation}
g(x)=\frac{\alpha}{\sqrt 2}(\cos x-\sin x)
\end{equation}
has derivative
\begin{equation}
g'(x)=\frac{\alpha}{\sqrt 2}(-\sin x-\cos x).
\end{equation}
On $[7\pi/8,\pi]$, $g'(x)\ge c_2\alpha$ for a universal constant $c_2>0$. 
Using Eq.~\eqref{cor_opt_gap_rich_param_sep}, we obtain
\begin{equation}
\label{cor_opt_gap_rich_early_gap}
\big|
f(\bmt_{\rm tr}^{(t)})
-
f(\bmt_{\rm ex}^{(t)})
\big|
\ge
c_3\eta t\alpha^2
\end{equation}
for a universal constant $c_3>0$ and all $1\le t\le c/(\eta\alpha)$. 
Since $\|O\|_2=\alpha$ and $\alpha\ge1$, this implies
\begin{equation}
\big|
f(\bmt_{\rm tr}^{(t)})
-
f(\bmt_{\rm ex}^{(t)})
\big|
\ge
\Omega(\eta t\|O\|_2^2),
\qquad
1\le t\le \O\!\left(\frac{1}{\eta\|O\|_2}\right).
\end{equation}

Finally, for a sufficiently small learning rate, Eq.~\eqref{cor_opt_gap_rich_tr_update} converges to the local minimizer $\bmt_{{\rm tr},1}=\pi$ of $\hat f$, while Eq.~\eqref{cor_opt_gap_rich_exact_update} converges to the local minimizer $\bmt_{{\rm ex},1}=3\pi/4$ of the exact objective in Eq.~\eqref{cor_opt_gap_rich_one_dim_exact_f}. Consequently,
\begin{equation}
\lim_{t\to\infty} f(\bmt_{\rm tr}^{(t)})
=
-\frac{\alpha}{\sqrt 2},
\qquad
\lim_{t\to\infty} f(\bmt_{\rm ex}^{(t)})
=
-\alpha .
\end{equation}
Therefore,
\begin{equation}
\liminf_{t\to\infty}
\big|
f(\bmt_{\rm tr}^{(t)})
-
f(\bmt_{\rm ex}^{(t)})
\big|
=
\alpha\left(1-\frac{1}{\sqrt 2}\right)
=
\Omega(\|O\|_2).
\end{equation}
This completes the proof.
\end{proof}

\section{Implementation details of \texttt{SPPS}}
\label{app_spp_alg}

In this appendix, we practical ingredients used in the implementation of \texttt{SPPS}. In particular, we introduce the sequential sampling rule and the importance reweighting procedure, the choice of the smoothing parameter $a$ in the sampling distribution, the derivation and numerically stable implementation of path automatic differentiation (PAD), and the adaptive gradient-error proxy used to determine the number of samples. Throughout this appendix, we describe the estimator for a Pauli observable $O\in\mathcal P_n$ by default; for a general observable $O=\sum_{m=1}^{N_O} \bm{c}_m O_m$, \texttt{SPPS} applies the same procedure to each Pauli term $O_m$ and combines the resulting estimators linearly.

\subsection{Sequential path sampling and importance reweighting}
\label{app_spp_estimator_sample_rule}

We follow the notation introduced in Section~\ref{app_PP_notion}. For a Pauli observable $O\in\mathcal P_n$, the Heisenberg-picture propagation gives
\begin{equation}
\label{app_spp_value_path_expansion}
f(\bmt)
=
\Tr \left[ O U(\bmt)\rho U(\bmt)^\dagger \right]
=
\sum_{\bmo\in\Omega}
\Psi_{\bmo}(\bmt)\Tr\left[P_{\bmo}(O)\rho\right],
\end{equation}
where $\bmo=(\bmo_1,\ldots,\bmo_L)\in\Omega\subseteq\{0,\pm1\}^L$ denotes a legal propagation path. In \texttt{SPPS}, paths are sampled sequentially during propagation instead of being enumerated explicitly. Therefore, the sampling probability factorizes as
\begin{equation}
\label{app_spp_sampling_prob_factorization}
\Pr(\bmo)
=
\prod_{j=1}^{L}
{\Pr}_j(\bmo_j\mid \bmo_{1:j-1}).
\end{equation}

At the $j$-th parametrized gate, the conditional distribution is determined by the commutation relation between the currently propagated Pauli operator and the generator of $R_{P_j}(\bmt_j)$. If they commute, no branching occurs and the path variable is deterministic:
\begin{equation}
\label{app_spp_sampling_rule_commute}
{\Pr}_j(\bmo_j=0\mid \bmo_{1:j-1})=1.
\end{equation}
If they anti-commute, the propagation splits into the cosine and sine branches. \texttt{SPPS} samples these two branches according to
\begin{equation}
\label{app_spp_sampling_rule_local}
{\Pr}_j(\bmo_j=1\mid \bmo_{1:j-1})
=
q_j(\bmt),
\qquad
{\Pr}_j(\bmo_j=-1\mid \bmo_{1:j-1})
=
1-q_j(\bmt),
\end{equation}
where
\begin{equation}
\label{app_spp_q_def}
q_j(\bmt)
=
\frac{|\cos\bmt_j|+a}{|\cos\bmt_j|+|\sin\bmt_j|+2a}.
\end{equation}
Here, $a$ is a smoothing parameter. When $a=0$, the distribution reduces to magnitude-proportional sampling. When $a>0$, it prevents either branch from having vanishing probability when $|\cos\bmt_j|$ or $|\sin\bmt_j|$ is close to zero. This is important for gradient estimation, since a branch with a small value coefficient can still have a large derivative.

For a sampled path $\bmo$, \texttt{SPPS} uses importance reweighting to generate unbiased estimation. Define
\begin{equation}
\label{app_spp_single_path_value_estimator}
\tilde h_{\bmo}(\bmt)
=
\frac{\Psi_{\bmo}(\bmt)\Tr[P_{\bmo} (O) \rho]}{\Pr(\bmo)}.
\end{equation}
Then
\begin{equation}
\label{app_spp_value_unbiased_impl}
\E_{\bmo\sim\Pr}\left[\tilde h_{\bmo}(\bmt)\right]
=
\sum_{\bmo\in\Omega}
\Pr(\bmo)
\frac{\Psi_{\bmo}(\bmt)\Tr[P_{\bmo}(O)\rho]}{\Pr(\bmo)}
=
f(\bmt).
\end{equation}
For $B$ independent samples $\{\bmo^{(b)}\}_{b=1}^{B}$, the expectation value estimator is
\begin{equation}
\label{app_spp_value_estimator_impl}
\tilde f(\bmt)
=
\frac{1}{B}
\sum_{b=1}^{B}
\tilde h_{\bmo^{(b)}}(\bmt).
\end{equation}

\subsection{Choice of the smoothing parameter}
\label{app_spp_smoothing_parameter}

The parameter $a$ in Eq.~(\ref{app_spp_q_def}) controls the trade-off between magnitude-proportional sampling and exploration of derivative-sensitive branches. In practice, \texttt{SPPS} supports two choices of $a$.

\textbf{Fixed smoothing.}
The simplest choice is to use a constant smoothing value
\begin{equation}
\label{app_spp_fixed_a}
a=a_0,
\end{equation}
where $a_0>0$ is a small constant. This choice keeps the sampling distribution fixed across observable terms and optimization steps, and is convenient when a uniform amount of exploration is desired.

\textbf{Adaptive smoothing.}
For observables with multiple Pauli terms during optimization, \texttt{SPPS} can also allows a term-wise adaptive smoothing schedule. Suppose the observable is decomposed as
\begin{equation}
\label{app_spp_observable_decomp_impl}
O=\sum_{m=1}^{N_O} \bm{c}_m O_m,
\qquad
O_m\in\mathcal P_n.
\end{equation}
For the $m$-th Pauli term at optimization step $t$, the local sampling probability is
\begin{equation}
\label{app_spp_termwise_q}
q_{m,j}^{(t)}
=
\frac{|\cos\bmt_j^{(t)}|+a_m^{(t)}}{|\cos\bmt_j^{(t)}|+|\sin\bmt_j^{(t)}|+2a_m^{(t)}}.
\end{equation}
The initial value is set as
\begin{equation}
\label{app_spp_a_init}
a_m^{(0)}=a_{\mathrm{init}},
\end{equation}
and is updated after each estimation step according to the empirical average number of nonzero branch variables in sampled nonzero paths:
\begin{equation}
\label{app_spp_a_update}
a_m^{(t+1)}
=
\frac{b_a}{\max\{\bar r_m^{(t)},\epsilon_a\}}.
\end{equation}
Here, $\bar r_m^{(t)}$ denotes the empirical average number of nonzero entries in the sampled paths for $O_m$, $b_a>0$ is a scaling constant, and $\epsilon_a>0$ is a small numerical floor. This rule increases smoothing when sampled paths are sparse and decreases smoothing when the propagation already explores many active branches. Thus, \texttt{SPPS} avoids overly deterministic sampling in sparse regimes while approaching magnitude-proportional sampling when many nonzero branches are naturally activated.

\subsection{Path automatic differentiation}
\label{app_spp_pad_derivation}

We now derive the PAD estimator used to obtain gradients from the sampled paths. For a fixed path $\bmo$, its contribution after importance reweighting as
\begin{equation}
\label{app_spp_phi_product}
\tilde{h}_{\bmo}(\bmt)= \frac{\Psi_{\bmo}(\bmt)}{\Pr(\bmo)} \Tr[P_{\bmo}(O)\rho] =
\frac{1}{\Pr(\bmo)} \Tr[P_{\bmo}(O)\rho]
\prod_{j\in\mathcal A(\bmo)}
\Psi_{\bmo_j}(\bmt_j),
\end{equation}
where $\mathcal A(\bmo)=\{j:\bmo_j\neq 0\}$ is the set of nontrivial branching positions. For $j\in\mathcal A(\bmo)$,
\begin{equation}
\label{app_spp_phi_def}
\Psi_{\bmo_j}(\bmt_j)
=
\begin{cases}
\cos\bmt_j, & \bmo_j=1,\\[1.0em]
\sin\bmt_j, & \bmo_j=-1.
\end{cases}
\end{equation}
PAD differentiates the path contribution with respect to the trigonometric coefficient. For $j\notin\mathcal A(\bmo)$, the path contribution does not depend on $\bmt_j$ and the derivative is zero. For $j\in\mathcal A(\bmo)$,
\begin{equation}
\label{app_spp_phi_derivative}
\partial_{\bmt_j}
\Psi_{\bmo_j}(\bmt_j)
=
\begin{cases}
-\sin\bmt_j, & \bmo_j=1,\\[1.0em]
\cos\bmt_j, & \bmo_j=-1.
\end{cases}
\end{equation}
Therefore,
\begin{align}
\label{app_spp_pad_direct_derivative}
\partial_{\bmt_j}\tilde h_{\bmo}(\bmt)
=
\frac{1}{\Pr(\bmo)} 
\Tr[P_{\bmo}(O)\rho] 
\partial_{\bmt_j}
\Psi_{\bmo_j}(\bmt_j)
\prod_{k\in\mathcal A(\bmo)\setminus\{j\}}
\Psi_{\bmo_k}(\bmt_k) ,
\qquad j\in\mathcal A(\bmo).
\end{align}
Equivalently, this derivative can be written in a score-function form:
\begin{equation}
\label{app_spp_pad_score_form}
\partial_{\bmt_j}\tilde h_{\bmo}(\bmt)
=
s_j(\bmt,\bmo)\tilde h_{\bmo}(\bmt),
\end{equation}
where
\begin{equation}
\label{app_spp_pad_score}
s_j(\bmt,\bmo)
=
-\tan\bmt_j\,\mathbbm{1}[\bmo_j=1]
+
\cot\bmt_j\,\mathbbm{1}[\bmo_j=-1].
\end{equation}
Combining all components, one \texttt{SPPS} gradient sample is
\begin{equation}
\label{app_spp_single_sample_grad_impl}
\tilde{\bm g}_{\bmo}(\bmt)
=
\left(
s_1(\bmt,\bmo),\ldots,s_L(\bmt,\bmo)
\right)^T
\tilde h_{\bmo}(\bmt),
\end{equation}
and the empirical PAD estimator is
\begin{equation}
\label{app_spp_pad_estimator}
\tilde{\bm g}(\bmt)
=
\frac{1}{B}
\sum_{b=1}^{B}
\tilde{\bm g}_{\bmo^{(b)}}(\bmt).
\end{equation}
Unbiasedness follows by linearity:
\begin{align}
\label{app_spp_pad_unbiased}
\E_{\bmo\sim\Pr}
\left[
\tilde{\bm g}_{\bmo}(\bmt)
\right]
&=
\sum_{\bmo\in\Omega}
\Pr(\bmo)
\nabla_{\bmt}
\left(
\frac{\Psi_{\bmo}(\bmt)\Tr[P_{\bmo}(O)\rho]}{\Pr(\bmo)}
\right) \notag\\
&=
\sum_{\bmo\in\Omega}
\nabla_{\bmt}
\left(
\Psi_{\bmo}(\bmt)\Tr[P_{\bmo}(O)\rho]
\right)
=
\nabla_{\bmt} f(\bmt).
\end{align}

\subsection{Numerically stable PAD implementation}
\label{app_spp_pad_numerical_stability}

The score form in Eq.~\eqref{app_spp_pad_score_form} is algebraically convenient but can be numerically unstable when $|\cos\bmt_j|$ or $|\sin\bmt_j|$ is close to zero. For example, if $\bmo_j=1$ and $|\cos\bmt_j|\approx 0$, the path value $\tilde h_{\bmo}(\bmt)$ can be very small while $-\tan\bmt_j$ is very large. Their product is finite, but evaluating it directly may suffer from overflow or cancellation.

To avoid this issue, the practical implementation of PAD uses the following stable equivalent of Eq.~\eqref{app_spp_pad_direct_derivative}. Define the product excluding the $j$-th active factor as
\begin{equation}
\label{app_spp_prod_excl}
\Pi_{\bmo}^{(-j)}
=
\prod_{k\in\mathcal A(\bmo)\setminus\{j\}}
\Psi_{\bmo_k}(\bmt_k).
\end{equation}
Then
\begin{equation}
\label{app_spp_stable_pad}
\partial_{\bmt_j}\tilde h_{\bmo}(\bmt)
=
\frac{1}{\Pr(\bmo)}
\Tr[P_{\bmo}(O)\rho]\,
\Pi_{\bmo}^{(-j)}
\partial_{\bmt_j} \Psi_{\bmo_j}(\bmt_j).
\end{equation}
In practice, \texttt{SPPS} computes $\Pi_{\bmo}^{(-j)}$ by prefix and suffix products over the active factors. Let $\mathcal A(\bmo)=\{j_1,\ldots,j_{|\mathcal A(\bmo)|}\}$. Then
\begin{equation}
\label{app_spp_prefix_suffix}
\Pi_{\bmo}^{(-j_\ell)}
=
\left(\prod_{r<\ell}
\Psi_{\bmo_{j_r}}(\bmt_{j_r})
\right)
\left(\prod_{r>\ell}
\Psi_{\bmo_{j_r}}(\bmt_{j_r})
\right).
\end{equation}
When the active factor is safely away from zero, the implementation uses the faster score form
\begin{equation}
\label{app_spp_fast_score_impl}
\partial_{\bmt_j}\tilde h_{\bmo}(\bmt)
=
\tilde h_{\bmo}(\bmt)s_j(\bmt,\bmo).
\end{equation}
When the corresponding trigonometric factor or active factor is below a numerical threshold, it switches to the prefix--suffix form in Eq.~\eqref{app_spp_stable_pad}. This switch does not change the estimator; it only replaces an unstable algebraic representation by an equivalent stable one.

\subsection{Adaptive gradient-error proxy}
\label{app_spp_gradient_error_proxy}

The number of samples required by \texttt{SPPS} varies across optimization steps and observable terms. To avoid using a fixed overly conservative budget, the implementation uses an adaptive absolute gradient-error proxy based on two independent macro-replicates.

For a given Pauli term $O_m$, let $\tilde{\bm g}_{m,A}$ and $\tilde{\bm g}_{m,B}$ be two independent \texttt{SPPS} gradient estimates computed with the same sample budget. Both are unbiased estimates of the same term gradient $\bm g_m=\nabla_{\bmt} f_m(\bmt)$. We define
\begin{equation}
\label{app_spp_ab_proxy}
\hat\Delta_m
=
\frac{1}{2}
\left\|
\tilde{\bm g}_{m,A}
-
\tilde{\bm g}_{m,B}
\right\|_2.
\end{equation}
This quantity estimates the stochastic scale of the averaged estimator
\begin{equation}
\label{app_spp_ab_average}
\tilde{\bm g}_m
=
\frac{1}{2}
\left(
\tilde{\bm g}_{m,A}
+
\tilde{\bm g}_{m,B}
\right).
\end{equation}
Indeed, since $\tilde{\bm g}_{m,A}$ and $\tilde{\bm g}_{m,B}$ are independent and unbiased,
\begin{align}
\label{app_spp_proxy_reasoning}
\E
\left[
\left\|
\tilde{\bm g}_{m,A}
-
\tilde{\bm g}_{m,B}
\right\|_2^2
\right]
&=
\E
\left[
\left\|
(\tilde{\bm g}_{m,A}-\bm g_m)
-
(\tilde{\bm g}_{m,B}-\bm g_m)
\right\|_2^2
\right] \notag\\
&=
\E\left[
\left\|
\tilde{\bm g}_{m,A}-\bm g_m
\right\|_2^2
\right]
+
\E\left[
\left\|
\tilde{\bm g}_{m,B}-\bm g_m
\right\|_2^2
\right],
\end{align}
where the cross term vanishes by independence and unbiasedness. If the two replicates use the same budget, then the two terms on the right-hand side are equal, and hence
\begin{equation}
\label{app_spp_proxy_average_relation}
\E\left[\hat\Delta_m^2\right]
=
\frac{1}{4}
\E
\left[
\left\|
\tilde{\bm g}_{m,A}
-
\tilde{\bm g}_{m,B}
\right\|_2^2
\right]
=
\frac{1}{2}
\E\left[
\left\|
\tilde{\bm g}_{m,A}-\bm g_m
\right\|_2^2
\right].
\end{equation}
Moreover, the averaged estimator in Eq.~\eqref{app_spp_ab_average} satisfies
\begin{align}
\label{app_spp_proxy_average_error}
\E
\left[
\left\|
\tilde{\bm g}_m-\bm g_m
\right\|_2^2
\right]
&=
\E
\left[
\left\|
\frac{1}{2}
(\tilde{\bm g}_{m,A}-\bm g_m)
+
\frac{1}{2}
(\tilde{\bm g}_{m,B}-\bm g_m)
\right\|_2^2
\right] \notag\\
&=
\frac{1}{2}
\E\left[
\left\|
\tilde{\bm g}_{m,A}-\bm g_m
\right\|_2^2
\right]
=
\E\left[\hat\Delta_m^2\right].
\end{align}
Thus, $\hat\Delta_m$ has the same second-moment scale as the error of the averaged estimator used by \texttt{SPPS}. This motivates using Eq.~\eqref{app_spp_ab_proxy} as an empirical stopping proxy. The proxy is not used as a theorem-level confidence bound; it is a practical adaptive rule for allocating samples.

For a general observable
\begin{equation}
\label{app_spp_proxy_observable_decomp}
O=\sum_{m=1}^{N_O} \bm{c}_m O_m,
\qquad
O_m\in\mathcal P_n,
\end{equation}
\texttt{SPPS} controls the proxies term by term and then combines the resulting estimates linearly. The implementation supports two stopping mechanisms.

\textbf{Coefficient-normalized proxy.}
When different Pauli terms have different coefficient magnitudes, \texttt{SPPS} can apply the stopping rule to a coefficient-normalized proxy
\begin{equation}
\label{app_spp_term_proxy_unweighted}
\hat\Delta_m^{\rm norm}
=
\frac{\hat\Delta_m}{c_{\rm safe}},
\qquad
c_{\rm safe}=\max\{|\bm{c}_m|,\epsilon_c\},
\end{equation}
where $\epsilon_c>0$ is a small numerical floor. Sampling for term $m$ stops once
\begin{equation}
\label{app_spp_coeff_normalized_stop_rule}
\hat\Delta_m^{\rm norm}
\le
\delta,
\end{equation}
where $\delta$ is the prescribed absolute proxy threshold. If the condition is not satisfied, the sample budget for that term is doubled until the proxy passes the threshold. This rule controls the stochastic error at the level of the unweighted Pauli-term estimator before multiplying by the coefficient $\bm{c}_m$.

\textbf{Root-sum-square proxy allocation.}
Alternatively, when a single overall proxy tolerance $\delta$ is assigned to the full observable, \texttt{SPPS} can distribute it uniformly across Pauli terms as
\begin{equation}
\label{app_spp_term_tol}
\delta_m
=
\frac{\delta}{\sqrt{N_O}},
\qquad
m=1,\ldots,N_O.
\end{equation}
This allocation gives
\begin{equation}
\label{app_spp_global_proxy}
\left(
\sum_{m=1}^{N_O}
\delta_m^2
\right)^{1/2}
=
\delta.
\end{equation}
Each Pauli term doubles its sample budget until
\begin{equation}
\label{app_spp_rss_stop_rule}
\hat\Delta_m
\le
\delta_m
\end{equation}
or the maximum budget is reached. The final observable-level proxy is then reported as
\begin{equation}
\label{app_spp_input_proxy}
\hat\Delta_{\rm obs}
=
\left(
\sum_{m=1}^{N_O}
\hat\Delta_m^2
\right)^{1/2}.
\end{equation}
This mechanism directly controls the aggregate proxy scale of the multi-term observable.

\subsection{Implementation summary}
\label{app_spp_impl_summary}

For each optimization step, \texttt{SPPS} proceeds as follows. First, it constructs the local sampling probabilities in Eq.~\eqref{app_spp_termwise_q} from the current parameters. Second, for each Pauli term, it draws two independent groups of propagation paths and evaluates both value and PAD gradient estimates from the same sampled paths. Third, it computes the A/B proxy in Eq.~\eqref{app_spp_ab_proxy}. If the proxy is above the prescribed tolerance, the sample budget is doubled and the term is re-estimated. Finally, after all active terms either pass the proxy check or reach the maximum budget, the two macro-replicates are averaged and the Pauli-term estimators are combined linearly.

This implementation has three practical consequences. First, \texttt{SPPS} never constructs the full propagation tree. Second, each sampled path contributes all active gradient components through PAD. Third, the A/B proxy allocates more samples only to terms and steps with large stochastic gradient fluctuations, which substantially reduces runtime compared with a fixed worst-case sampling budget.

\section{\texttt{SPPS} generates unbiased gradients (Proof of Theorem~\ref{thm_spp_estimator})}
\label{app_spp_estimator}

In this appendix, we prove the unbiasedness and concentration properties of the \texttt{SPPS} gradient estimator stated in Theorem~\ref{thm_spp_estimator}. We first establish the unbiasedness and variance bound of the single-sample gradient estimator in Section~\ref{app_spp_estimator_grad}. We then average independent samples and apply Bernstein's inequality to obtain a high-probability gradient error bound in Section~\ref{app_spp_estimator_concentration}.

\subsection{Single-sample variance for Gradient estimation}
\label{app_spp_estimator_grad}

For a path $\bmo\in\Omega$, the single-sample gradient estimator for the value function $f(\bmt)$ in Eq.~(\ref{app_spp_value_path_expansion}) is defined as
\begin{equation}\label{app_def_single_grad_estimator}
\tilde{\bm g}_{\bmo}(\bmt) := \frac{\nabla_{\bmt}\Psi_{\bmo}(\bmt)}{\Pr(\bmo)} \Tr \left[P_{\bmo}(O)\rho\right].
\end{equation}
We denote its $k$-th coordinate by $\tilde{\bm g}_{\bmo,k}(\bmt)$. The empirical gradient estimator is then given by
\begin{equation}\label{lem_spp_value_tildeg}
\tilde{\bm g}(\bmt)
=
\frac{1}{B}\sum_{b=1}^{B}\tilde{\bm g}_{\bmo^{(b)}}(\bmt),
\quad \text{where }
\bmo^{(1)},\dots,\bmo^{(B)} \stackrel{\rm i.i.d.}{\sim} \Pr .
\end{equation}

The following lemma establishes the unbiasedness of the single-sample gradient estimator and bounds its variance.

\begin{lemma}\label{lem_spp_grad_unbiased_variance}
Consider the value function $f(\bmt)$ defined in Eq.~(\ref{app_spp_value_path_expansion}) and let $\kappa(\bmt)=\prod_{j=1}^{L}(1+\left| \sin 2\bmt_j \right|)$. Then for every $k\in[L]$, the $k$-th coordinate of the single-sample gradient estimator satisfies
\begin{align}
\E_{\bmo\sim \Pr} \left[\tilde{\bm g}_{\bmo,k}(\bmt)\right]
={}&
\partial_{\bmt_k}f(\bmt), \label{app_grad_unbiased} \\
\Var_{\bmo\sim \Pr} \left[\tilde{\bm g}_{\bmo,k}(\bmt)\right]
\leq{}&
\frac{1}{a}(1+2a)^L \kappa(\bmt).
\label{app_grad_variance_bound}
\end{align}
\end{lemma}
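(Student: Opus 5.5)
The plan is to prove unbiasedness directly, and then bound the variance by the raw second moment. The second moment is a sum over the propagation tree, and it factorizes into per-gate contributions that can be bounded one gate at a time.

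\emph{Unbiasedness.} Because $a>0$, every legal path $\bmo\in\Omega$ has $\Pr(\bmo)>0$. Indeed, each anticommuting step has $q_j\in(0,1)$ and each commuting step has probability one. Hence
\begin{equation*}
\E_{\bmo\sim\Pr}[\tilde{\bm g}_{\bmo,k}(\bmt)]=\sum_{\bmo\in\Omega}\partial_{\bmt_k}\Psi_{\bmo}(\bmt)\Tr[P_{\bmo}(O)\rho].
\end{equation*}
The sum is finite, and $P_{\bmo}(O)$ does not depend on $\bmt$. Differentiating Eq.~(\ref{app_spp_value_path_expansion}) term by term therefore gives $\partial_{\bmt_k}f(\bmt)$. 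This is the same computation as Eq.~(\ref{app_spp_pad_unbiased}).

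\emph{Second moment.} We use $\Var\le\E[\tilde{\bm g}_{\bmo,k}^2]=\sum_{\bmo}(\partial_k\Psi_{\bmo})^2\Tr[P_{\bmo}(O)\rho]^2/\Pr(\bmo)$ and $|\Tr[P_{\bmo}(O)\rho]|\le1$. Paths with $\bmo_k=0$ contribute zero. For the remaining paths, write $c_j=|\cos\bmt_j|$ and $s_j=|\sin\bmt_j|$. The summand is bounded by a product over the active positions $j\in\mathcal A(\bmo)$ of $w_j(\bmo_j)^2/p_j(\bmo_j)$, where $p_j$ is the branch probability. For $j\neq k$ the weight is $w_j=c_j$ on the cosine branch and $s_j$ on the sine branch. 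At $j=k$ the weights are swapped, because $\partial_k\cos=-\sin$ and $\partial_k\sin=\cos$.

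\emph{Per-gate bounds.} Substituting $q_j$, a generic branching node $j\neq k$ contributes
\begin{equation*}
\frac{c_j^2}{q_j}+\frac{s_j^2}{1-q_j}=(c_j+s_j+2a)\Big(\frac{c_j^2}{c_j+a}+\frac{s_j^2}{s_j+a}\Big)\le (c_j+s_j)^2+2a(c_j+s_j).
\end{equation*}
We have $(c_j+s_j)^2=1+|\sin2\bmt_j|$, and $c_j+s_j\ge1$ implies $c_j+s_j\le(c_j+s_j)^2$. Together these give the bound $(1+2a)(1+|\sin2\bmt_j|)$. At the node $j=k$ the weights are swapped, and we bound crudely:
\begin{equation*}
\frac{s_k^2}{c_k+a}+\frac{c_k^2}{s_k+a}\le\frac1a .
\end{equation*}
This gives the factor $\frac1a(c_k+s_k+2a)\le\frac1a(1+2a)(1+|\sin2\bmt_k|)$. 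A commuting (non-branching) node contributes exactly $1$, which is at most $(1+2a)(1+|\sin2\bmt_j|)$.

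\emph{Main obstacle: summing over the tree.} The set $\Omega$ is not a product set, since whether step $j$ branches depends on the prefix $\bmo_{1:j-1}$. So the sum cannot simply be factorized. I would handle this by backward induction on the depth $j=L,L-1,\dots,1$. Define $S_j(\text{node})$ as the sum over all completions of the remaining per-gate factors below that node. The claim is $S_j\le\frac1a^{\mathbbm 1[k\ge j]}\prod_{i\ge j}(1+2a)(1+|\sin2\bmt_i|)$, uniformly over nodes. The inductive step uses the fact that the per-node bound above holds whatever the prefix is. The per-node quantities depend only on $\bmt_j$ and the branch, not on the current Pauli string. Setting $j=1$ yields $\E[\tilde{\bm g}_{\bmo,k}^2]\le\frac1a(1+2a)^L\kappa(\bmt)$, which gives Eq.~(\ref{app_grad_variance_bound}).
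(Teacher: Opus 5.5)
Your proposal is correct and follows the paper's proof: unbiasedness by term-wise differentiation of the finite path sum, the variance bounded by the raw second moment, per-gate bounds from $q_j,\,1-q_j\ge a/(u_j+2a)$ with $u_j=c_j+s_j$, and aggregation into the product $\frac1a(1+2a)^L\kappa(\bm\theta)$. Two small improvements over the paper: your distinguished-layer bound $\frac1a(u_k+2a)$ is slightly tighter than the paper's $\frac{u_k(u_k+2a)}{a}$, and your backward induction over the non-product path tree makes rigorous the step the paper only describes as applying the per-layer bounds ``layer by layer.''
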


\begin{proof}
We first prove unbiasedness. Since $\Omega$ is finite, differentiation commutes with summation. Differentiating Eq.~(\ref{app_spp_value_path_expansion}) with respect to $\bmt_k$ yields
\begin{equation}\label{app_grad_path_expansion}
\partial_{\bmt_k} f(\bmt)
=
\sum_{\bmo\in\Omega}
\partial_{\bmt_k}\Psi_{\bmo}(\bmt)\Tr \left[P_{\bmo}(O)\rho\right].
\end{equation}
Inserting $\Pr(\bmo)/\Pr(\bmo)=1$ termwise, we obtain
\begin{align}
\partial_{\bmt_k} f(\bmt)
={}&
\sum_{\bmo\in\Omega}
\Pr(\bmo)
\frac{\partial_{\bmt_k}\Psi_{\bmo}(\bmt)}{\Pr(\bmo)}
\Tr \left[P_{\bmo}(O)\rho\right]
\notag\\
={}&
\E_{\bmo\sim\Pr}\left[
\frac{\partial_{\bmt_k}\Psi_{\bmo}(\bmt)}{\Pr(\bmo)}
\Tr \left[P_{\bmo}(O)\rho\right]
\right]
\notag\\
={}&
\E_{\bmo\sim\Pr}\left[\tilde{\bm g}_{\bmo,k}(\bmt)\right].
\end{align}
This proves Eq.~(\ref{app_grad_unbiased}).

Next we prove the variance bound. We have
\begin{align}
\Var_{\bmo\sim \Pr} \left[\tilde{\bm g}_{\bmo,k}(\bmt)\right]
={}&
\E_{\bmo\sim\Pr}\left[\tilde{\bm g}_{\bmo,k}(\bmt)^2\right]
-
\left(
\E_{\bmo\sim\Pr}\left[\tilde{\bm g}_{\bmo,k}(\bmt)\right]
\right)^2
\label{lem_spp_grad_unbiased_variance_2_1}\\
\leq{}&
\E_{\bmo\sim\Pr}\left[\tilde{\bm g}_{\bmo,k}(\bmt)^2\right]
\notag\\
={}&
\sum_{\bmo\in\Omega}
\Pr(\bmo)
\left(
\frac{\partial_{\bmt_k}\Psi_{\bmo}(\bmt)}{\Pr(\bmo)}
\Tr \left[P_{\bmo}(O)\rho\right]
\right)^2
\label{lem_spp_grad_unbiased_variance_2_3}\\
\leq{}&
\sum_{\bmo\in\Omega}
\frac{\left(\partial_{\bmt_k}\Psi_{\bmo}(\bmt)\right)^2}{\Pr(\bmo)},
\label{lem_spp_grad_unbiased_variance_2_4}
\end{align}
where Eq.~(\ref{lem_spp_grad_unbiased_variance_2_1}) follows from the definition of statistical variance, Eq.~(\ref{lem_spp_grad_unbiased_variance_2_3}) follows from the definition of the single-sample gradient estimator in Eq.~(\ref{app_def_single_grad_estimator}), and Eq.~(\ref{lem_spp_grad_unbiased_variance_2_4}) is derived by noticing
\begin{equation}
\left|\Tr\left[P_{\bmo}(O)\rho\right]\right|\le 1.
\end{equation}

We now bound the sum on the right-hand side of Eq.~(\ref{lem_spp_grad_unbiased_variance_2_4}). Using the product decomposition from Eq.~(\ref{dypp_pp_eq_psi_app}),
\begin{equation}\label{app_grad_psi}
\Psi_{\bmo}(\bmt)=\prod_{j=1}^{L}\Psi_{\bmo_j}(\bmt_j),
\quad
\Psi_{\bmo_j}(\bmt_j)
=
(\cos \bmt_j)^{\mathbb{I}[\bmo_j=1]}
(\sin \bmt_j)^{\mathbb{I}[\bmo_j=-1]},
\end{equation}
we obtain
\begin{equation}
\partial_{\bmt_k}\Psi_{\bmo}(\bmt)
=
\left(\prod_{j\neq k}\Psi_{\bmo_j}(\bmt_j)\right)
\partial_{\bmt_k}\Psi_{\bmo_k}(\bmt_k).
\end{equation}
Together with Eq.~(\ref{app_spp_sampling_prob_factorization}), this yields
\begin{equation}\label{app_grad_factorization}
\frac{\left(\partial_{\bmt_k}\Psi_{\bmo}(\bmt)\right)^2}{\Pr(\bmo)}
=
\left(
\prod_{j\neq k}
\frac{\Psi_{\bmo_j}(\bmt_j)^2}
{\Pr_j(\bmo_j\mid \bmo_{j-1},\dots,\bmo_1)}
\right)
\frac{\left(\partial_{\bmt_k}\Psi_{\bmo_k}(\bmt_k)\right)^2}
{\Pr_k(\bmo_k\mid \bmo_{k-1},\dots,\bmo_1)}.
\end{equation}

Let $u_j=|\cos \bmt_j|+|\sin \bmt_j|$. Then
\begin{equation}
u_j^2=1+|\sin 2\bmt_j|.
\end{equation}
We first bound the contribution from the distinguished layer $k$. If the currently propagated Pauli operator commutes with the $k$-th parametrized gate, then only the branch $\bmo_k=0$ is present. Since $\Psi_0(\bmt_k)=1$ is independent of $\bmt_k$, we have
\begin{equation}\label{lem_spp_grad_unbiased_variance_3_0}
\sum_{\bmo_k\in\{0\}}
\frac{\left(\partial_{\bmt_k}\Psi_{\bmo_k}(\bmt_k)\right)^2}
{\Pr_k(\bmo_k\mid \bmo_{k-1},\dots,\bmo_1)}
=0.
\end{equation}
If it anti-commutes, then only the branches $\bmo_k=\pm 1$ contribute, and
\begin{align}
\sum_{\bmo_k\in\{\pm 1\}}
\frac{\left(\partial_{\bmt_k}\Psi_{\bmo_k}(\bmt_k)\right)^2}
{\Pr_k(\bmo_k\mid \bmo_{k-1},\dots,\bmo_1)}
={}&
\frac{\sin^2\bmt_k}{q(\bmt_k)}
+
\frac{\cos^2\bmt_k}{1-q(\bmt_k)}
\label{lem_spp_grad_unbiased_variance_3_1}\\
\leq{}&
\frac{|\sin\bmt_k|}{a}(u_k+2a)
+
\frac{|\cos\bmt_k|}{a}(u_k+2a)
\label{lem_spp_grad_unbiased_variance_3_2}\\
={}&
\frac{u_k(u_k+2a)}{a}
\label{lem_spp_grad_unbiased_variance_3_3}\\
\leq{}&
\frac{(1+2a)u_k^2}{a}.
\label{lem_spp_grad_unbiased_variance_3_4}
\end{align}
Here Eq.~(\ref{lem_spp_grad_unbiased_variance_3_1}) follows from Eqs.~(\ref{app_spp_sampling_rule_local}) and (\ref{app_grad_psi}), while Eq.~(\ref{lem_spp_grad_unbiased_variance_3_2}) follows from
\begin{equation}
q(\bmt_k)
=
\frac{|\cos\bmt_k|+a}{u_k+2a}
\ge
\frac{a}{u_k+2a},
\qquad
1-q(\bmt_k)
=
\frac{|\sin\bmt_k|+a}{u_k+2a}
\ge
\frac{a}{u_k+2a},
\end{equation}
which follow directly from Eq.~(\ref{app_spp_q_def}).

Next we bound the contribution from every other layer $j\neq k$. If the currently propagated Pauli operator commutes with the $j$-th parametrized gate, then
\begin{equation}\label{lem_spp_grad_unbiased_variance_4_0}
\sum_{\bmo_j\in\{0\}}
\frac{\Psi_{\bmo_j}(\bmt_j)^2}
{\Pr_j(\bmo_j\mid \bmo_{j-1},\dots,\bmo_1)}
=1.
\end{equation}
If it anti-commutes, then by the sampling rule in Eq.~\eqref{app_spp_q_def},
\begin{align}
\sum_{\bmo_j\in\{\pm 1\}}
\frac{\Psi_{\bmo_j}(\bmt_j)^2}
{\Pr_j(\bmo_j\mid \bmo_{j-1},\dots,\bmo_1)}
&\le
(1+2a)u_j^2,
\label{lem_spp_grad_unbiased_variance_4_1}
\end{align}
which follows from
\begin{equation}
\frac{\cos^2\bmt_j}{q(\bmt_j)}+\frac{\sin^2\bmt_j}{1-q(\bmt_j)}
\le |\cos\bmt_j|(u_j+2a)+|\sin\bmt_j|(u_j+2a)
\le (1+2a)u_j^2,
\end{equation}
where $u_j=|\cos\bmt_j|+|\sin\bmt_j|$ and $u_j\ge 1$.

Applying Eqs.~(\ref{lem_spp_grad_unbiased_variance_3_0}), (\ref{lem_spp_grad_unbiased_variance_3_4}), (\ref{lem_spp_grad_unbiased_variance_4_0}), and (\ref{lem_spp_grad_unbiased_variance_4_1}) to Eq.~(\ref{app_grad_factorization}) layer by layer yields
\begin{align}
\sum_{\bmo\in\Omega}
\frac{\left(\partial_{\bmt_k}\Psi_{\bmo}(\bmt)\right)^2}{\Pr(\bmo)}
\leq{}&
\frac{1}{a}\prod_{j=1}^{L}(1+2a)u_j^2
\notag\\
={}&
\frac{1}{a}(1+2a)^L
\prod_{j=1}^{L}\left(1+|\sin 2\bmt_j|\right)
\notag\\
={}&
\frac{1}{a}(1+2a)^L\kappa(\bmt).
\label{app_grad_iterated_bound}
\end{align}
Combining Eqs.~(\ref{lem_spp_grad_unbiased_variance_2_4}) and (\ref{app_grad_iterated_bound}) proves Eq.~(\ref{app_grad_variance_bound}).
This completes the proof.
\end{proof}

As an immediate corollary, since $\tilde{\bm g}(\bmt)$ is the average of $B$ i.i.d.\
copies of $\tilde{\bm g}_{\bmo}(\bmt)$, we have, for every $k\in[L]$,
\begin{equation}\label{app_grad_avg_variance}
\Var_{\bmo^{(1)}, \cdots, \bmo^{(B)} \sim \Pr}\left[\tilde{\bm g}_k(\bmt)\right]
=
\frac{1}{B}\Var_{\bmo\sim \Pr}\left[\tilde{\bm g}_{\bmo,k}(\bmt)\right]
\le
\frac{1}{aB}(1+2a)^L\kappa(\bmt).
\end{equation}

\subsection{High-probability bounds for gradient estimation}
\label{app_spp_estimator_concentration}

We now combine the variance bound established in Section~\ref{app_spp_estimator_grad} with Bernstein's inequality to obtain a high-probability error bound for the empirical gradient estimator. We first recall the Bernstein inequality used below and then prove the formal gradient-estimation statement of Theorem~\ref{thm_spp_estimator}.

\begin{lemma}[Bernstein's inequality]\label{lem_bernstein}
Let $X_1,\dots,X_B$ be independent zero-mean random variables. Suppose that $|X_i| \leq M$ for all $i$ and $ \sum_{i=1}^{B} \Var[X_i] \leq v$. Then, for any $t>0$,
\begin{equation}\label{app_bernstein_tail}
\Pr\left(
\bigg|\sum_{i=1}^{B}X_i\bigg|\ge t
\right)
\le
2\exp\left(
-\frac{\frac{1}{2} t^2}{ v +\frac{Mt}{3} }
\right).
\end{equation}
\end{lemma}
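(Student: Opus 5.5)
The plan is to prove the one-sided bound $\Pr(\sum_i X_i\ge t)\le \exp\!\big(-\tfrac{t^2/2}{v+Mt/3}\big)$ by the Chernoff (exponential moment) method, and then apply the same bound to $-X_1,\dots,-X_B$. Since these also satisfy $|{-X_i}|\le M$ and have the same variances, a union bound over the two tails gives the factor $2$ in Eq.~\eqref{app_bernstein_tail}.

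\textbf{Step 1: moment generating function of a single term.} Fix $\lambda\in(0,3/M)$ and write $\sigma_i^2=\Var[X_i]=\E[X_i^2]$. Expand $e^{\lambda X_i}=1+\lambda X_i+\sum_{k\ge2}\lambda^kX_i^k/k!$ and take expectations. The linear term vanishes by the zero-mean assumption. For $k\ge2$, use $|X_i|\le M$ to get $\E[X_i^k]\le \sigma_i^2M^{k-2}$. Then use $k!\ge 2\cdot 3^{k-2}$ for $k\ge2$, which is checked by induction. Summing the resulting geometric series gives
\begin{equation*}
\E\big[e^{\lambda X_i}\big]\le 1+\frac{\lambda^2\sigma_i^2}{2}\sum_{k\ge2}\Big(\frac{\lambda M}{3}\Big)^{k-2}=1+\frac{\lambda^2\sigma_i^2}{2(1-\lambda M/3)}\le \exp\!\Big(\frac{\lambda^2\sigma_i^2}{2(1-\lambda M/3)}\Big),
\end{equation*}
where the last inequality is $1+x\le e^x$.

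\textbf{Step 2: Markov's inequality and independence.} By Markov's inequality applied to $e^{\lambda\sum_iX_i}$, independence (so the exponential moment factorizes), and $\sum_i\sigma_i^2\le v$,
\begin{equation*}
\Pr\Big(\sum_{i=1}^B X_i\ge t\Big)\le e^{-\lambda t}\prod_{i=1}^B\E\big[e^{\lambda X_i}\big]\le \exp\!\Big(-\lambda t+\frac{\lambda^2 v}{2(1-\lambda M/3)}\Big).
\end{equation*}

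\textbf{Step 3: choice of $\lambda$.} The only step requiring care is choosing $\lambda$ so that the exponent collapses exactly to the stated form. Rather than optimizing exactly, take $\lambda=t/(v+Mt/3)$. This satisfies $\lambda M/3=\tfrac{Mt/3}{v+Mt/3}<1$, so it lies in the admissible range, and it gives $1-\lambda M/3=v/(v+Mt/3)$. Substituting, the second term becomes $\tfrac{t^2}{2(v+Mt/3)}$ and the first becomes $-\tfrac{t^2}{v+Mt/3}$, so the exponent equals $-\tfrac{t^2/2}{v+Mt/3}$.

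The degenerate case $v=0$ (all $X_i=0$ almost surely) is trivial: the probability is $0$ for every $t>0$. Combining the two tails, as described at the start, completes the argument. I expect no genuine obstacle here. The main points to verify are the factorial inequality and the admissibility $\lambda<3/M$, both of which are immediate.
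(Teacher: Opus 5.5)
Your proof is correct. It is the standard Chernoff-method derivation of Bernstein's inequality: you bound the moments via $k!\ge 2\cdot 3^{k-2}$, sum a geometric series, and then pick the non-optimized value $\lambda=t/(v+Mt/3)$, which reproduces the stated exponent exactly. You also handle the admissibility condition $\lambda M/3<1$ and the degenerate case $v=0$ properly.

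The paper does not prove this lemma at all. It simply recalls it as a known result before applying it coordinatewise in Theorem~\ref{thm_spp_estimator_grad}. So your argument supplies the textbook justification the paper relies on.

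The only routine detail you leave unstated is exchanging the expectation with the exponential series. This is justified by dominated convergence, since $|X_i|\le M$ gives the bound $e^{\lambda M}$.
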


\begin{theorem}[Formal statement of Theorem~\ref{thm_spp_estimator}]\label{thm_spp_estimator_grad}
Consider the value function $f(\bmt)$ defined in Eq.~(\ref{app_spp_value_path_expansion}) and let $\kappa(\bmt)=\prod_{j=1}^{L}\left(1+\left| \sin 2\bmt_j \right|\right)$. Then the gradient estimation $\tilde{\bm g}(\bmt)$ in Eq.~(\ref{lem_spp_value_tildeg}) is unbiased. 
Moreover, for any $\delta\in(0,1)$ and $\epsilon \in (0, \sqrt{L\kappa(\bmt)})$, 
\begin{equation}
B = \left\lceil \frac{4L\kappa(\bmt)}{a\epsilon^2}(1+2a)^L\ln\frac{2L}{\delta} \right\rceil
\end{equation}
independent samples suffice to guarantee that, with probability at least $1-\delta$,
\begin{equation}\label{app_grad_concentration_full}
\big\| \tilde{\bm{g}}(\bmt)- \nabla_{\bmt} f (\bmt) \big\|_2 \leq \epsilon.
\end{equation}
\end{theorem}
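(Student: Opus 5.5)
Unbiasedness needs no new work. By linearity of expectation, $\E[\tilde{\bm g}(\bmt)]=\E_{\bmo\sim\Pr}[\tilde{\bm g}_{\bmo}(\bmt)]$, and Lemma~\ref{lem_spp_grad_unbiased_variance}, Eq.~(\ref{app_grad_unbiased}), gives $\E_{\bmo\sim\Pr}[\tilde{\bm g}_{\bmo,k}(\bmt)]=\partial_{\bmt_k}f(\bmt)$ for every $k\in[L]$. The real content is the sample-complexity statement, which I would prove coordinatewise and then combine with a union bound. Define $\sigma^2:=\frac{1}{a}(1+2a)^L\kappa(\bmt)$, so that Eq.~(\ref{app_grad_variance_bound}) gives $\Var[\tilde{\bm g}_{\bmo,k}(\bmt)]\le\sigma^2$ for each $k$.

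\textbf{Reduction.} If $|\tilde{\bm g}_k(\bmt)-\partial_{\bmt_k}f(\bmt)|\le\epsilon/\sqrt{L}$ holds for every $k$, then $\|\tilde{\bm g}(\bmt)-\nabla_{\bmt}f(\bmt)\|_2\le\epsilon$. It therefore suffices to show that each coordinate event fails with probability at most $\delta/L$; the union bound over $L$ coordinates then gives overall failure probability at most $\delta$, which is the source of the $\ln(2L/\delta)$ factor.

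\textbf{Per-coordinate Bernstein.} For fixed $k$, set $X_b=\tilde{\bm g}_{\bmo^{(b)},k}(\bmt)-\partial_{\bmt_k}f(\bmt)$. These are i.i.d. and zero mean, with $\sum_b\Var[X_b]\le B\sigma^2=:v$. Apply Lemma~\ref{lem_bernstein} with deviation $t=B\epsilon/\sqrt{L}$. If the variance term dominates the denominator, i.e. $Mt/3\lesssim v$, the tail is at most $2\exp(-B\epsilon^2/(4L\sigma^2))$. Requiring this to be at most $\delta/L$ gives exactly $B\ge\frac{4L\sigma^2}{\epsilon^2}\ln\frac{2L}{\delta}$, which matches the stated $B$ after substituting $\sigma^2$. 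The hypothesis $\epsilon<\sqrt{L\kappa(\bmt)}$ is what should keep the deviation in the sub-Gaussian (variance-dominated) regime, so I would use it to absorb the linear term $Mt/3$ into $v$, at the cost of the constant $4$ instead of $2$.

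\textbf{Main obstacle: the almost-sure bound $M$.} Bernstein needs a bound on $|X_b|$. Each active layer has branch probability at least $a/(u_j+2a)$ with $u_j\le\sqrt2$, so $|\tilde{\bm g}_{\bmo,k}|\le\prod_{j\in\mathcal A(\bmo)}(u_j+2a)/a$. This can be exponentially larger than $\sigma$, and then $Mt/3\le v$ is not automatic from $\epsilon<\sqrt{L\kappa}$. I would first try to bound $M\epsilon/\sqrt{L}$ against $\sigma^2$ using $u_j^2=1+|\sin2\bmt_j|$ together with the condition on $\epsilon$. If that does not close, the fallback is a median-of-means estimator built from the same samples: split them into $\O(\ln(2L/\delta))$ blocks, apply Chebyshev within each block using the variance bound, and take the median per coordinate. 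This requires only the variance bound of Lemma~\ref{lem_spp_grad_unbiased_variance}, yields the same $\tilde{\O}((1+2a)^L\kappa(\bmt)L/(a\epsilon^2))$ scaling claimed in Theorem~\ref{thm_spp_estimator}, and changes only the absolute constant.
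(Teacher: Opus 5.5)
Your skeleton matches the paper's proof: unbiasedness from Lemma~\ref{lem_spp_grad_unbiased_variance}, coordinatewise Bernstein at deviation $t=B\epsilon/\sqrt{L}$, and a union bound over $L$ coordinates. However, the step you flag as the main obstacle is left open, and it is the step that carries the proof.

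\textbf{The gap.} Your almost-sure bound lower-bounds every active branch probability by $a/(u_j+2a)$ and every trigonometric factor by $1$. This costs a factor $a^{-|\mathcal A(\bmo)|}$. For small $a$ that is exponentially larger than $\sigma^2=(1+2a)^L\kappa(\bmt)/a$, so no hypothesis of the form $\epsilon<\sqrt{L\kappa(\bmt)}$ can absorb $Mt/3$ into $v$. Using $u_j^2=1+|\sin 2\bmt_j|$ will not repair this, because the loss sits in the powers of $1/a$.

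\textbf{The missing idea.} At every layer $j\neq k$, the numerator of the importance ratio is the same trigonometric factor that sets the sampling probability. Hence
\[
\frac{|\cos\bmt_j|}{q_j}=\frac{|\cos\bmt_j|\,(u_j+2a)}{|\cos\bmt_j|+a}\le u_j+2a,
\]
and likewise $|\sin\bmt_j|/(1-q_j)\le u_j+2a$, with no factor $1/a$. Only at the distinguished layer $k$ does differentiation swap cosine and sine, so that the numerator no longer matches the probability and the ratio costs $(u_k+2a)/a$.

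\textbf{How this closes the argument.} The sharp per-layer bound gives $|\tilde{\bm g}_{\bmo,k}(\bmt)|\le\frac{1}{a}\prod_{j=1}^{L}(u_j+2a)\le\frac{(1+2a)^L}{a}\sqrt{\kappa(\bmt)}$, using $u_j\ge 1$. Adding $|\partial_{\bmt_k}f|\le 1$ (parameter shift), one may take $M=\frac{2(1+2a)^L}{a}\sqrt{\kappa(\bmt)}=2\sigma^2/\sqrt{\kappa(\bmt)}$. Then $Mt/3=\frac{2}{3}\,v\,\frac{\epsilon}{\sqrt{L\kappa(\bmt)}}\le\frac{2}{3}v$ under the hypothesis $\epsilon<\sqrt{L\kappa(\bmt)}$. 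So the Bernstein exponent is at least $\frac{3B\epsilon^2}{10L\sigma^2}\ge\frac{B\epsilon^2}{4L\sigma^2}$, and the stated $B$ gives per-coordinate failure probability at most $\delta/L$. This is exactly how the paper closes the argument.

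\textbf{Why the fallback does not prove the statement.} Median-of-means is a different estimator. The theorem is a claim about the plain empirical average $\tilde{\bm g}(\bmt)$ in Eq.~(\ref{lem_spp_value_tildeg}), with an explicit $B$ and constant $4$. At best, your fallback supports the informal $\tilde{\O}$ version for a modified algorithm. For the plain average, the variance bound alone (Chebyshev plus a union bound) only gives $B\gtrsim L^2\sigma^2/(\delta\epsilon^2)$, i.e.\ a $1/\delta$ rather than $\ln(2L/\delta)$ dependence. So an almost-sure bound of the sharp quality above is genuinely required.
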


\begin{proof}

The unbiasedness of $\tilde{\bm g}(\bmt)$ follows from Lemma~\ref{lem_spp_grad_unbiased_variance}. Next, we prove the gradient estimation error bound in Eq.~(\ref{app_grad_concentration_full}). 

We prove the gradient estimation error bound in Eq.~\eqref{app_grad_concentration_full} by applying Bernstein's inequality in Lemma~\ref{lem_bernstein} to each coordinate of $\tilde{\bm g}(\bmt)$.
Here, the variance term has already been bounded in Eq.~(\ref{app_grad_avg_variance}), so it remains to establish an upper bound for the absolute value of each single-sample gradient estimator $\tilde{\bm g}_{\bmo,k}(\bmt)$. We do so by combining the factorized form of the gradient coefficient with the sequential decomposition of the sampling probability, and then bounding the resulting expression layer by layer. 

Since $P_{\bmo}(O)$ is a Pauli operator and $\rho$ is a quantum state, we have
\begin{equation}\label{app_grad_trace_bound}
\left|\Tr\left[P_{\bmo}(O)\rho\right]\right|\le 1.
\end{equation}
Using Eq.~\eqref{app_grad_trace_bound} and the formulation of the single-sample gradient estimator in Eq.~(\ref{app_def_single_grad_estimator}), the absolute value of each single-sample gradient estimator is bounded as
\begin{equation}\label{app_grad_concentration_pf_2}
\left|\tilde{\bm g}_{\bmo,k}(\bmt)\right|
\le
\frac{\left|\partial_{\bmt_k}\Psi_{\bmo}(\bmt)\right|}{\Pr(\bmo)}.
\end{equation}
Since the gradient coefficient admits the factorized form in Eq.~(\ref{app_grad_factorization}), Eq.~(\ref{app_grad_concentration_pf_2}) can be further formulated as
\begin{equation}\label{app_grad_concentration_pf_3}
\frac{\left|\partial_{\bmt_k}\Psi_{\bmo}(\bmt)\right|}{\Pr(\bmo)}
=
\left(
\prod_{j\neq k}
\frac{|\Psi_{\bmo_j}(\bmt_j)|}{\Pr_j(\bmo_j\mid \bmo_{j-1},\dots,\bmo_1)}
\right)
\frac{\left|\partial_{\bmt_k}\Psi_{\bmo_k}(\bmt_k)\right|}{\Pr_k(\bmo_k\mid \bmo_{k-1},\dots,\bmo_1)}.
\end{equation}

We now bound each factor in Eq.~(\ref{app_grad_concentration_pf_3}) layer by layer. 
For every layer $j\neq k$, if the propagated Pauli operator commutes with the $j$-th parametrized gate, then the only branch is $\bmo_j=0$ and the factor equals one. If it anti-commutes, then
\begin{equation}
\frac{|\Psi_{1}(\bmt_j)|}{q(\bmt_j)}=\frac{|\cos\bmt_j|}{q(\bmt_j)}\le u_j+2a,\qquad
\frac{|\Psi_{-1}(\bmt_j)|}{1-q(\bmt_j)}=\frac{|\sin\bmt_j|}{1-q(\bmt_j)}\le u_j+2a.
\end{equation}
Therefore,
\begin{equation}\label{app_grad_concentration_pf_4}
\frac{|\Psi_{\bmo_j}(\bmt_j)|}{\Pr_j(\bmo_j\mid \bmo_{j-1},\dots,\bmo_1)}\le u_j+2a .
\end{equation}

At the distinguished layer $k$, if the propagated Pauli operator commutes with the $k$-th parametrized gate, then only the branch $\bmo_k=0$ is allowed, and since $\Psi_0(\bmt_k)=1$ is independent of $\bmt_k$, the corresponding derivative vanishes. If it anti-commutes, then Eq.~(\ref{app_spp_q_def}) gives
\begin{align}
\frac{\left|\partial_{\bmt_k}\Psi_{1}(\bmt_k)\right|}{q(\bmt_k)}
={}&
\frac{|\sin\bmt_k|}{q(\bmt_k)}
=
|\sin\bmt_k|\frac{u_k+2a}{|\cos\bmt_k|+a}
\le
\frac{u_k+2a}{a},
\label{app_grad_concentration_pf_5}
\\
\frac{\left|\partial_{\bmt_k}\Psi_{-1}(\bmt_k)\right|}{1-q(\bmt_k)}
={}&
\frac{|\cos\bmt_k|}{1-q(\bmt_k)}
=
|\cos\bmt_k|\frac{u_k+2a}{|\sin\bmt_k|+a}
\le
\frac{u_k+2a}{a},
\label{app_grad_concentration_pf_6}
\end{align}
where the last inequalities follow from $|\cos\bmt_k|+a\ge a$ and $|\sin\bmt_k|+a\ge a$, respectively. Combining Eq.~(\ref{app_grad_concentration_pf_2})--(\ref{app_grad_concentration_pf_6}) yields
\begin{equation}\label{app_grad_concentration_pf_7}
\left|\tilde{\bm g}_{\bmo,k}(\bmt)\right|
\le
\frac{1}{a}\prod_{j=1}^{L}(u_j+2a).
\end{equation}
Since $u_j=|\cos\bmt_j|+|\sin\bmt_j|=\sqrt{1+|\sin 2\bmt_j|}\ge 1$, we have $u_j+2a\le (1+2a)u_j$. Hence
\begin{equation}\label{app_grad_concentration_pf_9}
\left|\tilde{\bm g}_{\bmo,k}(\bmt)\right|
\le
\frac{(1+2a)^L}{a}\prod_{j=1}^{L}u_j = \frac{(1+2a)^L}{a}\sqrt{\kappa(\bmt)}.
\end{equation}

Next, for each fixed $k\in[L]$, we denote the zero-mean random variables
\begin{equation}\label{app_grad_concentration_pf_10}
X_{b,k}:=\tilde{\bm g}_{\bmo^{(b)},k}(\bmt)-\partial_{\bmt_k}f(\bmt).
\end{equation}
Here, the exact gradient is bounded by the parameter-shift rule:
\begin{equation}\label{app_grad_concentration_pf_11}
\left|\partial_{\bmt_k}f(\bmt)\right|
=
\frac{1}{2}\left|f(\bmt_+)-f(\bmt_-)\right|
\le
\frac{1}{2}\left|f(\bmt_+)\right|
+
\frac{1}{2}\left|f(\bmt_-)\right|
\le
1,
\end{equation}
where $\bmt_+$ and $\bmt_-$ are obtained from $\bmt$ by shifting its $k$-th entry by $+\frac{\pi}{2}$ and $-\frac{\pi}{2}$, respectively. Combining Eqs.~(\ref{app_grad_concentration_pf_9}) and (\ref{app_grad_concentration_pf_11}) gives
\begin{equation}\label{app_grad_concentration_pf_12}
|X_{b,k}|
\le
\left|\tilde{\bm g}_{\bmo^{(b)},k}(\bmt)\right|
+
\left|\partial_{\bmt_k}f(\bmt)\right|
\le
\frac{2(1+2a)^L}{a}\sqrt{\kappa(\bmt)}.
\end{equation}
On the other hand, by using Eq.~(\ref{app_grad_variance_bound}), we have
\begin{equation}\label{app_grad_concentration_pf_13}
\sum_{b=1}^{B}\E_{\bmo^{(b)}\sim\Pr}[X_{b,k}^2]
=
\sum_{b=1}^{B}\Var_{\bmo^{(b)}\sim\Pr}\left[\tilde{\bm g}_{\bmo^{(b)},k}(\bmt)\right]
\le
\frac{B}{a}(1+2a)^L\kappa(\bmt).
\end{equation}

We now apply Lemma~\ref{lem_bernstein} to the variables $X_{1,k},\dots,X_{B,k}$. Since
\begin{equation}\label{app_grad_concentration_pf_14}
\frac{1}{B}\sum_{b=1}^{B}X_{b,k}
=
\frac{1}{B}\sum_{b=1}^{B}\tilde{\bm g}_{\bmo^{(b)},k}(\bmt)-\partial_{\bmt_k}f(\bmt)
=
\tilde{\bm g}_k(\bmt)-\partial_{\bmt_k}f(\bmt),
\end{equation}
one may take
\begin{equation}\label{app_grad_concentration_pf_15}
M=\frac{2(1+2a)^L}{a}\sqrt{\kappa(\bmt)},
\qquad
v=\frac{B}{a}(1+2a)^L\kappa(\bmt),
\qquad
t=\frac{B\epsilon}{\sqrt{L}}
\end{equation}
in Lemma~\ref{lem_bernstein}, which yields
\begin{align}
\Pr\left(
\left|\tilde{\bm g}_k(\bmt)-\partial_{\bmt_k}f(\bmt)\right|\ge \frac{\epsilon}{\sqrt{L}}
\right)
&=
\Pr\left(
\left|\sum_{b=1}^{B}X_{b,k}\right|\ge \frac{B\epsilon}{\sqrt{L}}
\right)
\notag\\
&\le
2\exp\left(
-\frac{\frac{1}{2}B^2\epsilon^2/L}{
\frac{B}{a}(1+2a)^L\kappa(\bmt)
+
\frac{2(1+2a)^L}{3a}\sqrt{\kappa(\bmt)}\cdot \frac{B\epsilon}{\sqrt{L}}
}
\right)
\notag\\
&=
2\exp\left(
-\frac{aB\epsilon^2}{
2L(1+2a)^L\kappa(\bmt)
+\frac{4}{3}(1+2a)^L\epsilon\sqrt{L\kappa(\bmt)}
}
\right).
\label{app_grad_concentration_pf_16}
\end{align}
After some straightforward algebra, when $\epsilon \le \sqrt{L\kappa(\bmt)}$, it suffices to choose
\begin{equation}\label{app_grad_concentration_pf_17}
B = \left\lceil
\frac{4L\kappa(\bmt)}{a\epsilon^2}(1+2a)^L\ln\frac{2L}{\delta}
\right\rceil,
\end{equation}
which guarantees, for every $k\in[L]$,
\begin{equation}\label{app_grad_concentration_pf_18}
\Pr\left(
\left|\tilde{\bm g}_k(\bmt)-\partial_{\bmt_k}f(\bmt)\right|\ge \frac{\epsilon}{\sqrt{L}}
\right)
\le
\frac{\delta}{L}.
\end{equation}

Finally, if
\begin{equation}\label{app_grad_concentration_pf_19}
\left|\tilde{\bm g}_k(\bmt)-\partial_{\bmt_k}f(\bmt)\right|
\le
\frac{\epsilon}{\sqrt{L}}
\qquad
\text{for all } k\in[L],
\end{equation}
then
\begin{equation}\label{app_grad_concentration_pf_20}
\left\|\tilde{\bm g}(\bmt)-\nabla_{\bmt}f(\bmt)\right\|_2^2
=
\sum_{k=1}^{L}\left|\tilde{\bm g}_k(\bmt)-\partial_{\bmt_k}f(\bmt)\right|^2
\le
L\cdot \frac{\epsilon^2}{L}
=
\epsilon^2,
\end{equation}
which implies
\begin{equation}\label{app_grad_concentration_pf_21}
\left\|\tilde{\bm g}(\bmt)-\nabla_{\bmt}f(\bmt)\right\|_2
\le
\epsilon.
\end{equation}
Therefore, by a union bound over all $k\in[L]$ and using Eq.~(\ref{app_grad_concentration_pf_18}), we obtain
\begin{equation}\label{app_grad_concentration_pf_22}
\Pr\left(
\left\|\tilde{\bm g}(\bmt)-\nabla_{\bmt}f(\bmt)\right\|_2
\ge
\epsilon
\right)
\le
\sum_{k=1}^{L}
\Pr\left(
\left|\tilde{\bm g}_k(\bmt)-\partial_{\bmt_k}f(\bmt)\right|
\ge
\frac{\epsilon}{\sqrt{L}}
\right)
\le
\delta.
\end{equation}
This completes the proof.

\end{proof}

\section{Convergence guarantee for \texttt{SPPS}-driven gradient descent optimization (Proof of Corollary~\ref{cor_spp_sgd})}
\label{app_spp_sgd}

In this appendix, we establish the convergence guarantee for stochastic gradient descent (SGD) applied to the quantum objective function
\begin{equation}\label{app_spp_sgd_quantum_obj}
f(\bmt)=\Tr\left[O\,U(\bmt)\rho U(\bmt)^\dagger\right]
\end{equation}
for arbitrary quantum observable $O$ and the input state $\rho$, where the gradient is estimated by \texttt{SPPS}. The proof is divided into three steps. We first derive a gradient-variance-based first-order convergence guarantee for stochastic gradient descent (SGD) applied to Eq.~(\ref{app_spp_sgd_quantum_obj}) in Section~\ref{app_spp_sgd_generic}. We then bound the variance of the full \texttt{SPPS} gradient estimator for arbitrary observables under a general Pauli-basis decomposition in Section~\ref{app_spp_sgd_variance}. Finally, we combine these two ingredients to obtain an explicit convergence result in Section~\ref{app_spp_sgd_corollary}, from which Corollary~\ref{cor_spp_sgd} in the main text follows as an informal statement.

\subsection{A variance-based first-order convergence guarantee for quantum optimization}
\label{app_spp_sgd_generic}

We first derive a generic first-order convergence guarantee for SGD when optimizing the objective function in Eq.~(\ref{app_spp_sgd_quantum_obj}). This result is stated in terms of a variance bound on the stochastic gradient, and relies on the global smoothness of the objective function, which we prove below.

\begin{lemma}
\label{lem_spp_sgd_smoothness}
Let the quantum circuit in Eq.~(\ref{app_spp_sgd_quantum_obj}) be $U(\bmt)= V_0 \prod_{j=1}^{L} R_{j}(\bmt_j)V_j$, where $R_{j}(\bmt_j)=\exp\left(-i\bmt_j G_j/2\right)$ with $G_j\in\mathcal P_n$, and where each $V_j$ is a fixed unitary. Then $f(\bmt)$ is globally $L\|O\|_2$-smooth. Equivalently, for any $\bmt,\bmt'\in\mathbb R^L$,
\begin{equation}\label{app_spp_sgd_generic_smooth}
f(\bmt') \le f(\bmt) + (\bmt'-\bmt)^T\nabla_{\bmt}f(\bmt) + \frac{L\|O\|_2}{2}\|\bmt'-\bmt\|_2^2.
\end{equation}
\end{lemma}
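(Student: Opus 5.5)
The approach is to bound the Hessian of $f$ uniformly in operator norm by $L\|O\|_2$. The descent inequality in Eq.~(\ref{app_spp_sgd_generic_smooth}) then follows from Taylor's theorem with integral remainder along the segment $\bmt+s(\bmt'-\bmt)$, $s\in[0,1]$. Since $f$ is a finite trigonometric polynomial in $\bmt$, it is $C^\infty$, and it suffices to show $|\partial_{\bmt_j}\partial_{\bmt_k} f(\bmt)|\le \|O\|_2$ for all $j,k\in[L]$. Granting this, the Frobenius bound gives $\|\nabla^2 f(\bmt)\|_2\le \|\nabla^2 f(\bmt)\|_F\le \sqrt{L^2\|O\|_2^2}=L\|O\|_2$, which yields the claimed smoothness constant.

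To bound each second derivative, we use the parameter-shift rule, exactly as in Eq.~(\ref{app_grad_concentration_pf_11}) but now applied twice. Because $R_j(\bmt_j)=\exp(-\imath\bmt_j G_j/2)$ with $G_j$ a Pauli string having eigenvalues $\pm1$, fix all other parameters. Then $f$ depends on $\bmt_j$ as $A+B\cos\bmt_j+C\sin\bmt_j$. One can see this either by conjugation, as in Eq.~(\ref{dpp_eq_rot-map-only}), or directly by writing $R_j=\cos(\bmt_j/2)I-\imath\sin(\bmt_j/2)G_j$. This structure gives
\[
\partial_{\bmt_j} f(\bmt)=\tfrac12\bigl[f(\bmt+\tfrac{\pi}{2}\bm e_j)-f(\bmt-\tfrac{\pi}{2}\bm e_j)\bigr].
\]
Applying the same identity in coordinate $k$ to each shifted function expresses $\partial_{\bmt_j}\partial_{\bmt_k}f(\bmt)$ as a signed combination of four evaluations $f(\bmt\pm\frac{\pi}{2}\bm e_j\pm\frac{\pi}{2}\bm e_k)$, each with weight $1/4$. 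This also covers the case $j=k$, where we obtain $\frac14[f(\bmt+\pi\bm e_j)-2f(\bmt)+f(\bmt-\pi\bm e_j)]$.

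Every evaluation has the form $\Tr[O\,U\rho U^\dagger]$ for some unitary $U$ and state $\rho$, so by Hölder's inequality $|f(\cdot)|\le\|O\|_2\|\rho\|_1=\|O\|_2$. The triangle inequality then gives $|\partial_{\bmt_j}\partial_{\bmt_k}f|\le 4\cdot\frac14\|O\|_2=\|O\|_2$.

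The main point requiring care is the justification of the shift rule itself, namely that $f$ restricted to one coordinate is a degree-one trigonometric polynomial even in the presence of the arbitrary fixed unitaries $V_j$. This holds because each $\bmt_j$ appears in exactly one gate, and $U\rho U^\dagger$ is bilinear in $R_j$ and $R_j^\dagger$, each of which is linear in $\cos(\bmt_j/2)$ and $\sin(\bmt_j/2)$. Products of these half-angle terms reduce to $1$, $\cos\bmt_j$, and $\sin\bmt_j$. With the Hessian bound established, we conclude via
\[
f(\bmt')-f(\bmt)-(\bmt'-\bmt)^T\nabla f(\bmt)=\int_0^1(1-s)(\bmt'-\bmt)^T\nabla^2 f(\bmt+s(\bmt'-\bmt))(\bmt'-\bmt)\,ds\le \tfrac{L\|O\|_2}{2}\|\bmt'-\bmt\|_2^2.
\]
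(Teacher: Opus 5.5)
Your proposal is correct and follows the paper's proof essentially step for step: a uniform bound $|f|\le\|O\|_2$ via H\"older, then the parameter-shift rule applied twice to get $|\partial_{\bmt_j}\partial_{\bmt_k}f|\le\|O\|_2$, then the Frobenius bound $\|\nabla^2 f\|_2\le L\|O\|_2$, and finally Taylor's theorem. Your explicit handling of the diagonal case $j=k$, and your argument for why the shift rule still holds with the fixed unitaries $V_j$, fill in details that the paper leaves implicit.
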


\begin{proof}

We begin with a uniform bound on the objective value.
By Eq.~(\ref{app_spp_sgd_quantum_obj}),
\begin{equation}\label{app_spp_sgd_smooth_pf_3}
f(\bmt)=\Tr\left[O\,U(\bmt)\rho U(\bmt)^\dagger\right].
\end{equation}
Since $U(\bmt)\rho U(\bmt)^\dagger$ is a density matrix for every $\bmt$, Hölder's inequality gives
\begin{equation}\label{app_spp_sgd_smooth_pf_4}
|f(\bmt)|
=
\left|\Tr\left[O U(\bmt)\rho U(\bmt)^\dagger\right]\right| \leq
\|O\|_2.
\end{equation}

Next we bound the second-order derivatives of $f(\bmt)$. Since each parametrized gate is generated by a Pauli operator, the parameter-shift rule applies to every circuit parameter. Therefore, for any $j\in[L]$,
\begin{equation}\label{app_spp_sgd_smooth_pf_1}
\partial_{\bmt_j}f(\bmt)
=
\frac{1}{2}
f\!\left(\bmt+\frac{\pi}{2}\bm e_j\right)
-
\frac{1}{2} f\!\left(\bmt-\frac{\pi}{2}\bm e_j\right)
,
\end{equation}
where $\bm e_j$ is the $j$-th standard basis vector. Applying the same rule once more to Eq.~(\ref{app_spp_sgd_smooth_pf_1})  with respect to $\bmt_k$ yields
\begin{align}
\partial_{\bmt_j}\partial_{\bmt_k}f(\bmt)
={}&
\frac{1}{4}
f\!\left(\bmt+\frac{\pi}{2}\bm e_j+\frac{\pi}{2}\bm e_k\right)
-
\frac{1}{4} f\!\left(\bmt+\frac{\pi}{2}\bm e_j-\frac{\pi}{2}\bm e_k\right)
\notag \\
-{}& \frac{1}{4} f\!\left(\bmt-\frac{\pi}{2}\bm e_j+\frac{\pi}{2}\bm e_k\right)
+
\frac{1}{4} f\!\left(\bmt-\frac{\pi}{2}\bm e_j-\frac{\pi}{2}\bm e_k\right) .
\label{app_spp_sgd_smooth_pf_2}
\end{align}
Combining Eqs.~(\ref{app_spp_sgd_smooth_pf_4}) and (\ref{app_spp_sgd_smooth_pf_2}), we obtain
\begin{equation}\label{app_spp_sgd_smooth_pf_5}
\left|
\partial_{\bmt_j}\partial_{\bmt_k}f(\bmt)
\right|
\le
\|O\|_2
\qquad
\text{for all } j,k\in[L].
\end{equation}

We now bound the Hessian operator norm. Eq.~(\ref{app_spp_sgd_smooth_pf_5}) implies
\begin{equation}\label{app_spp_sgd_smooth_pf_6}
\|\nabla_{\bmt}^2 f(\bmt)\|_{F}^2
=
\sum_{j,k=1}^{L}
\left(\partial_{\bmt_j}\partial_{\bmt_k}f(\bmt)\right)^2
\le
L^2\|O\|_2^2,
\end{equation}
and hence
\begin{equation}\label{app_spp_sgd_smooth_pf_7}
\|\nabla_{\bmt}^2 f(\bmt)\|_2
\le
\|\nabla_{\bmt}^2 f(\bmt)\|_{F}
\le
L\|O\|_2.
\end{equation}
Therefore, $f(\bmt)$ is globally $L\|O\|_2$-smooth. The inequality in Eq.~(\ref{app_spp_sgd_generic_smooth}) then follows from the second-order Taylor expansion for smooth functions.

\end{proof}

We now use the smoothness property established in Lemma~\ref{lem_spp_sgd_smoothness} to derive a generic first-order convergence guarantee for SGD when optimizing the quantum objective function in Eq.~(\ref{app_spp_sgd_quantum_obj}). Specifically, we consider the iterates
\begin{equation}\label{app_spp_sgd_generic_update}
\bmt^{(t+1)}=\bmt^{(t)}-\eta^{(t)} \tilde{\bm g}^{(t)},
\qquad
t=1,\dots,T,
\end{equation}
where $\eta^{(t)}$ is the learning rate at the $t$-th step, and $\tilde{\bm g}^{(t)}$ is an unbiased estimator of $\nabla_{\bmt}f(\bmt^{(t)})$, namely
\begin{equation}\label{app_spp_sgd_generic_unbiased}
\E \big[\tilde{\bm g}^{(t)} \big]
=
\nabla_{\bmt}f(\bmt^{(t)}).
\end{equation}
The following lemma shows that the convergence rate of the GD optimization in Eq.~(\ref{app_spp_sgd_generic_update}) is controlled by the variance of the gradient noise.

\begin{lemma}\label{lem_spp_sgd_generic}
Suppose that
\begin{equation}\label{app_spp_sgd_generic_var}
\E\left[
\big\|\tilde{\bm g}^{(t)}-\nabla_{\bmt}f(\bmt^{(t)})\big\|_2^2
\right]
\le
G_T^2
\end{equation}
for all $t\in[T]$ and $T \geq \frac{4L \|O\|_2^2}{G_T^2}$. If the learning rate is chosen as a constant $\eta^{(t)}=\eta= \sqrt{\frac{4}{L G_T^2 T}}$ for all $t\in[T]$, then the GD iterates in Eq.~(\ref{app_spp_sgd_generic_update}) satisfy
\begin{equation}\label{app_spp_sgd_generic_result_balanced}
\min_{t\in[T]}
\E\left\|\nabla_{\bmt}f(\bmt^{(t)})\right\|_2^2
\le
4 \|O\|_2 G_T \sqrt{\frac{L}{T}} .
\end{equation}
\end{lemma}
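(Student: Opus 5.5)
The plan is the standard nonconvex SGD descent-lemma argument, using the global smoothness constant $\beta:=L\|O\|_2$ from Lemma~\ref{lem_spp_sgd_smoothness} and the value bound $|f(\bmt)|\le\|O\|_2$ from Eq.~(\ref{app_spp_sgd_smooth_pf_4}).

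\textbf{Step 1 (one-step descent).} Write $\bm g_t:=\nabla_{\bmt}f(\bmt^{(t)})$. Apply Eq.~(\ref{app_spp_sgd_generic_smooth}) with $\bmt'=\bmt^{(t+1)}=\bmt^{(t)}-\eta\tilde{\bm g}^{(t)}$. Then take the expectation conditional on $\bmt^{(t)}$, using unbiasedness Eq.~(\ref{app_spp_sgd_generic_unbiased}) and the bias--variance split
\[
\E\|\tilde{\bm g}^{(t)}\|_2^2=\|\bm g_t\|_2^2+\E\|\tilde{\bm g}^{(t)}-\bm g_t\|_2^2\le \|\bm g_t\|_2^2+G_T^2 .
\]
This gives
\[
\E f(\bmt^{(t+1)})\le \E f(\bmt^{(t)})-\eta\Bigl(1-\tfrac{\beta\eta}{2}\Bigr)\E\|\bm g_t\|_2^2+\tfrac{\beta\eta^2}{2}G_T^2 .
\]

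\textbf{Step 2 (step-size condition).} If $\eta\le 1/\beta$, then $1-\beta\eta/2\ge 1/2$. With $\eta=2/(G_T\sqrt{LT})$, the condition $\eta\le 1/(L\|O\|_2)$ is equivalent to $\sqrt{T}\ge 2\sqrt{L}\|O\|_2/G_T$, that is, $T\ge 4L\|O\|_2^2/G_T^2$. This is exactly the hypothesis of the lemma, so this check explains where that assumption comes from.

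\textbf{Step 3 (telescoping).} Sum the one-step inequality over $t=1,\dots,T$. Bound $f(\bmt^{(1)})-\E f(\bmt^{(T+1)})\le 2\|O\|_2$ using $|f|\le\|O\|_2$. Divide by $\eta T/2$ and bound the minimum by the average:
\[
\min_{t\in[T]}\E\|\bm g_t\|_2^2\le \frac{1}{T}\sum_{t=1}^T\E\|\bm g_t\|_2^2\le \frac{4\|O\|_2}{\eta T}+L\|O\|_2\,\eta\, G_T^2 .
\]

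\textbf{Step 4 (plugging in $\eta$).} Substituting $\eta=\sqrt{4/(LG_T^2T)}$, each of the two terms becomes $2\|O\|_2G_T\sqrt{L/T}$. Their sum is the claimed bound in Eq.~(\ref{app_spp_sgd_generic_result_balanced}).

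There is no real obstacle here. The only points needing care are the conditioning in Step 1, where the noise variance bound is applied conditionally on $\bmt^{(t)}$ and then via the tower property, and the constant bookkeeping in Step 2 that links the step-size condition to the assumed lower bound on $T$.
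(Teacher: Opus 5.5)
Your proposal is correct and matches the paper's proof step for step: the smoothness descent inequality, the bias--variance split of $\E\|\tilde{\bm g}^{(t)}\|_2^2$, the observation that $T\ge 4L\|O\|_2^2/G_T^2$ is exactly $\eta\le 1/(L\|O\|_2)$, telescoping with $|f|\le\|O\|_2$, and substituting $\eta$ so that each term equals $2\|O\|_2G_T\sqrt{L/T}$. Your explicit treatment of conditioning on $\bmt^{(t)}$ improves on the paper, which leaves it implicit. One small point: the variance hypothesis is stated unconditionally, so apply it after taking the total expectation (the cross term is removed by conditional unbiasedness), not inside the conditional expectation.
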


\begin{proof}
Applying Lemma~\ref{lem_spp_sgd_smoothness} with $\bmt=\bmt^{(t)}$ and $\bmt'=\bmt^{(t+1)}$ gives
\begin{align}
f(\bmt^{(t+1)})
\le{}&
f(\bmt^{(t)})
+
(\bmt^{(t+1)}-\bmt^{(t)})^T\nabla_{\bmt}f(\bmt^{(t)})
+
\frac{L\|O\|_2}{2}\|\bmt^{(t+1)}-\bmt^{(t)}\|_2^2.
\label{app_spp_sgd_generic_pf_1}
\end{align}
Substituting the update rule in Eq.~(\ref{app_spp_sgd_generic_update}) into Eq.~(\ref{app_spp_sgd_generic_pf_1}) yields
\begin{align}
f(\bmt^{(t+1)})
\le{}&
f(\bmt^{(t)})
-
\eta\left\langle \tilde{\bm g}^{(t)},\nabla_{\bmt}f(\bmt^{(t)})\right\rangle
+
\frac{L\|O\|_2}{2}\eta^2\|\tilde{\bm g}^{(t)}\|_2^2.
\label{app_spp_sgd_generic_pf_2}
\end{align}
Taking expectation on both sides of Eq.~(\ref{app_spp_sgd_generic_pf_2}) and using Eq.~(\ref{app_spp_sgd_generic_unbiased}), we obtain
\begin{align}
\E\left[f(\bmt^{(t+1)})\right]
\le{}&
\E\left[f(\bmt^{(t)})\right]
-
\eta\E\left[\left\|\nabla_{\bmt}f(\bmt^{(t)})\right\|_2^2\right]
+
\frac{L\|O\|_2}{2}\eta^2\E\left[\|\tilde{\bm g}^{(t)}\|_2^2\right].
\label{app_spp_sgd_generic_pf_3}
\end{align}

Next, by expanding the second moment around the mean, we have
\begin{align}
\E\left[\|\tilde{\bm g}^{(t)}\|_2^2\right]
={}&
\E\left[
\left\|
\tilde{\bm g}^{(t)}-\nabla_{\bmt}f(\bmt^{(t)})+\nabla_{\bmt}f(\bmt^{(t)})
\right\|_2^2
\right]
\notag\\
={}&
\E\left[\left\|\nabla_{\bmt}f(\bmt^{(t)})\right\|_2^2\right]
+
\E\left[
\left\|\tilde{\bm g}^{(t)}-\nabla_{\bmt}f(\bmt^{(t)})\right\|_2^2
\right]
\label{app_spp_sgd_generic_pf_4}\\
\le{}&
\E\left[\left\|\nabla_{\bmt}f(\bmt^{(t)})\right\|_2^2\right]
+
G_T^2,
\notag
\end{align}
where Eq.~(\ref{app_spp_sgd_generic_pf_4}) follows from Eq.~(\ref{app_spp_sgd_generic_unbiased}), and the last inequality follows from Eq.~(\ref{app_spp_sgd_generic_var}). Substituting this bound into Eq.~(\ref{app_spp_sgd_generic_pf_3}) gives
\begin{align}
\E\left[f(\bmt^{(t+1)})\right]
\le{}&
\E\left[f(\bmt^{(t)})\right]
-
\left(\eta-\frac{L\|O\|_2}{2}\eta^2\right)
\E\left[\left\|\nabla_{\bmt}f(\bmt^{(t)})\right\|_2^2\right]
+
\frac{L\|O\|_2}{2}\eta^2 G_T^2.
\label{app_spp_sgd_generic_pf_5}
\end{align}

Next, the condition
\begin{equation}\label{app_spp_sgd_generic_pf_6}
T \ge \frac{4L\|O\|_2^2}{G_T^2}
\end{equation}
together with the choice
\begin{equation}\label{app_spp_sgd_generic_pf_7}
\eta=\sqrt{\frac{4}{L G_T^2 T}}
\end{equation}
implies
\begin{equation}\label{app_spp_sgd_generic_pf_8}
\eta
\le
\frac{1}{L\|O\|_2}.
\end{equation}
Therefore,
\begin{equation}\label{app_spp_sgd_generic_pf_9}
\eta-\frac{L\|O\|_2}{2}\eta^2 = \eta \left( 1 - \frac{L\|O\|_2}{2}\eta \right)
\ge
\frac{\eta}{2}.
\end{equation}
Applying Eq.~(\ref{app_spp_sgd_generic_pf_9}) to Eq.~(\ref{app_spp_sgd_generic_pf_5}) yields
\begin{align}
\E\left[f(\bmt^{(t+1)})\right]
\le{}&
\E\left[f(\bmt^{(t)})\right]
-
\frac{\eta}{2}\E\left[\left\|\nabla_{\bmt}f(\bmt^{(t)})\right\|_2^2\right]
+
\frac{L\|O\|_2}{2}\eta^2 G_T^2.
\label{app_spp_sgd_generic_pf_10}
\end{align}

Summing Eq.~(\ref{app_spp_sgd_generic_pf_10}) over $t=1,\dots,T$ gives
\begin{align}
\frac{\eta}{2}\sum_{t=1}^{T}\E\left[\left\|\nabla_{\bmt}f(\bmt^{(t)})\right\|_2^2\right]
\le{}&
\E\left[f(\bmt^{(1)})-f(\bmt^{(T+1)})\right]
+
\frac{L\|O\|_2}{2}\eta^2 T G_T^2.
\label{app_spp_sgd_generic_pf_11}
\end{align}
On the other hand, by Eq.~(\ref{app_spp_sgd_smooth_pf_4}) in the proof of Lemma~\ref{lem_spp_sgd_smoothness}, we have
\begin{equation}\label{app_spp_sgd_generic_pf_12}
|f(\bmt)|\le \|O\|_2
\qquad
\text{for all } \bmt,
\end{equation}
which implies
\begin{equation}\label{app_spp_sgd_generic_pf_13}
\E\left[f(\bmt^{(1)})-f(\bmt^{(T+1)})\right]
\le
2\|O\|_2.
\end{equation}
Combining Eqs.~(\ref{app_spp_sgd_generic_pf_11}) and (\ref{app_spp_sgd_generic_pf_13}) yields
\begin{equation}\label{app_spp_sgd_generic_pf_14}
\min_{t\in[T]}
\E\left[\left\|\nabla_{\bmt}f(\bmt^{(t)})\right\|_2^2\right]
\le
\frac{4\|O\|_2}{\eta T}
+
\eta L\|O\|_2 G_T^2.
\end{equation}

Finally, substituting Eq.~(\ref{app_spp_sgd_generic_pf_7}) into Eq.~(\ref{app_spp_sgd_generic_pf_14}) gives
\begin{align}
\min_{t\in[T]}
\E\left[\left\|\nabla_{\bmt}f(\bmt^{(t)})\right\|_2^2\right]
\le{}&
\frac{4\|O\|_2}{T}\sqrt{\frac{L G_T^2 T}{4}}
+
L\|O\|_2 G_T^2\sqrt{\frac{4}{L G_T^2 T}}
\notag\\
={}&
2\|O\|_2\sqrt{\frac{L G_T^2}{T}}
+
2\|O\|_2 G_T \sqrt{\frac{L}{T}}
\notag\\
={}&
4 \|O\|_2 G_T \sqrt{\frac{L}{T}}.
\notag
\end{align}
This completes the proof.

\end{proof}

\subsection{Variance of the full \texttt{SPPS} gradient estimator}
\label{app_spp_sgd_variance}

We now bound the variance of the full \texttt{SPPS} gradient estimator for a general observable expressed in the Pauli basis. Specifically, we suppose that $O=\sum_{m=1}^{N_O} c_m O_m$, where each $O_m \in\mathcal P_n$. For each $m\in[N_O]$, define
\begin{equation}\label{app_spp_sgd_variance_fm}
f^{(m)}(\bmt):=\Tr \left[O_m U(\bmt)\rho U(\bmt)^\dagger\right].
\end{equation}
Then, by linearity,
\begin{equation}\label{app_spp_sgd_variance_f_linear}
f(\bmt)=\sum_{m=1}^{N_O} c_m f^{(m)}(\bmt),
\qquad
\nabla_{\bmt}f(\bmt)=\sum_{m=1}^{N_O} c_m \nabla_{\bmt}f^{(m)}(\bmt).
\end{equation}

For each $m\in[N_O]$, let $\tilde{\bm g}^{(m)}(\bmt)$ denote the empirical \texttt{SPPS} gradient estimator for the Pauli observable $O_m$, obtained by averaging $B$ independent \texttt{SPPS} samples as in Lemma~\ref{lem_spp_grad_unbiased_variance}. We remark that, for different $m$, the estimators $\tilde{\bm g}^{(m)}(\bmt)$ are constructed from independent batches of \texttt{SPPS} samples. We define the full \texttt{SPPS} gradient estimator by
\begin{equation}\label{app_spp_sgd_variance_full_grad}
\tilde{\bm g}(\bmt):=\sum_{m=1}^{N_O} c_m \tilde{\bm g}^{(m)}(\bmt).
\end{equation}
Since each $\tilde{\bm g}^{(m)}(\bmt)$ is an unbiased estimator of $\nabla_{\bmt}f^{(m)}(\bmt)$, linearity immediately yields that
\begin{equation}\label{app_spp_sgd_variance_unbiased}
\E \left[\tilde{\bm g}(\bmt)\right]
=
\nabla_{\bmt}f(\bmt).
\end{equation}

\begin{lemma}\label{lem_spp_sgd_full_variance}
For any $\bmt\in\mathbb R^L$, let $\kappa(\bmt)$ be defined as in Lemma~\ref{lem_spp_grad_unbiased_variance}. Then the full \texttt{SPPS} gradient estimator $\tilde{\bm g}(\bmt)$ satisfies
\begin{equation}\label{app_spp_sgd_variance_bound}
\E \left[
\left\|\tilde{\bm g}(\bmt)-\nabla_{\bmt}f(\bmt)\right\|_2^2
\right]
\le
\frac{L\|O\|_2^2}{aB}(1+2a)^L\kappa(\bmt).
\end{equation}
\end{lemma}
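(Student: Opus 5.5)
The plan is to split the mean-squared error of $\tilde{\bm g}(\bmt)$ into per-term errors, bound each term with Lemma~\ref{lem_spp_grad_unbiased_variance}, and then bound the sum of squared coefficients by $\|O\|_2^2$.

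First, by Eq.~(\ref{app_spp_sgd_variance_f_linear}) and Eq.~(\ref{app_spp_sgd_variance_full_grad}) we have $\tilde{\bm g}(\bmt)-\nabla_{\bmt}f(\bmt)=\sum_{m=1}^{N_O} c_m \bm\xi^{(m)}$, where $\bm\xi^{(m)}:=\tilde{\bm g}^{(m)}(\bmt)-\nabla_{\bmt}f^{(m)}(\bmt)$. Expanding the squared norm gives diagonal terms $c_m^2\E\|\bm\xi^{(m)}\|_2^2$ and cross terms $c_mc_{m'}\E\langle\bm\xi^{(m)},\bm\xi^{(m')}\rangle$ for $m\neq m'$. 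The estimators for different $m$ use independent batches of paths. Each $\bm\xi^{(m)}$ has zero mean by Lemma~\ref{lem_spp_grad_unbiased_variance}. Hence every cross term factorizes as $\langle\E\bm\xi^{(m)},\E\bm\xi^{(m')}\rangle=0$, which leaves
\begin{equation*}
\E\left\|\tilde{\bm g}(\bmt)-\nabla_{\bmt}f(\bmt)\right\|_2^2=\sum_{m=1}^{N_O}c_m^2\,\E\left\|\bm\xi^{(m)}\right\|_2^2 .
\end{equation*}

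Next, I bound each diagonal term coordinate-wise. We have $\E\|\bm\xi^{(m)}\|_2^2=\sum_{k=1}^{L}\Var[\tilde{\bm g}^{(m)}_k(\bmt)]$. Each $O_m$ is a Pauli observable, so Eq.~(\ref{app_grad_avg_variance}) applies to it and gives $\Var[\tilde{\bm g}^{(m)}_k(\bmt)]\le \frac{1}{aB}(1+2a)^L\kappa(\bmt)$ for every $k$. Summing over $k$ contributes the factor $L$, so
\begin{equation*}
\E\left\|\bm\xi^{(m)}\right\|_2^2\le \frac{L}{aB}(1+2a)^L\kappa(\bmt).
\end{equation*}
This bound is uniform in $m$, because $\kappa(\bmt)$ depends only on the angles and not on the Pauli term.

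The remaining step, and the only one that is not routine, is to show $\sum_m c_m^2\le\|O\|_2^2$. Here I take the decomposition to use distinct Pauli strings up to sign. Two such strings $P\neq \pm Q$ are Hilbert--Schmidt orthogonal, $\Tr[PQ]=0$, while $\Tr[P^2]=2^n$. Therefore
\begin{equation*}
\sum_{m=1}^{N_O}c_m^2=2^{-n}\Tr[O^2]\le\|O\|_2^2,
\end{equation*}
since the normalized trace of $O^2$ is an average of squared eigenvalues and so is at most the largest one. If the decomposition repeated a string, I would first merge the repeated terms; this changes neither $O$ nor the estimator's expectation.

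Combining the three steps gives $\E\|\tilde{\bm g}(\bmt)-\nabla_{\bmt}f(\bmt)\|_2^2\le \frac{L\|O\|_2^2}{aB}(1+2a)^L\kappa(\bmt)$, which is the claimed bound.
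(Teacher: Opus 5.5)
Your proof is correct and follows the paper's argument essentially step for step: cross terms vanish by independence of the term-wise batches and unbiasedness, each per-term error is bounded by summing the coordinate-wise variance bound in Eq.~(\ref{app_grad_avg_variance}) over the $L$ coordinates, and $\sum_m c_m^2=2^{-n}\Tr[O^2]\le\|O\|_2^2$ by Hilbert--Schmidt orthogonality of distinct Pauli strings, which is the same implicit assumption the paper makes. Going directly through the variances is cleaner than the paper's intermediate bound via $\E\|\tilde{\bm g}^{(m)}(\bmt)\|_2^2$, whose $\|\nabla_{\bmt}f^{(m)}(\bmt)\|_2^2$ part does not decay with $B$; however, drop your aside about merging repeated strings, because merging keeps the mean but changes the variance of the estimator being bounded, and with terms such as $P$ and $-P$ the stated bound genuinely fails for the unmerged estimator.
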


\begin{proof}

For each $m\in[N_O]$, we define the estimation error
\[
\bm\Delta^{(m)}(\bmt)
:=
\tilde{\bm g}^{(m)}(\bmt)-\nabla_{\bmt}f^{(m)}(\bmt).
\]
Then, by Eqs.~(\ref{app_spp_sgd_variance_f_linear}) and (\ref{app_spp_sgd_variance_full_grad}),
\begin{align}
\tilde{\bm g}(\bmt)-\nabla_{\bmt}f(\bmt)
&=
\sum_{m=1}^{N_O} c_m\tilde{\bm g}^{(m)}(\bmt)
-
\sum_{m=1}^{N_O} c_m\nabla_{\bmt}f^{(m)}(\bmt) \notag\\
&=
\sum_{m=1}^{N_O} c_m\bm\Delta^{(m)}(\bmt).
\label{app_spp_sgd_variance_pf_1}
\end{align}
Therefore, we have
\begin{align}
\E\left[
\left\|
\tilde{\bm g}(\bmt)-\nabla_{\bmt}f(\bmt)
\right\|_2^2
\right]
&=
\E\left[
\left\|
\sum_{m=1}^{N_O} c_m\bm\Delta^{(m)}(\bmt)
\right\|_2^2
\right] \notag\\
&=
\sum_{m,m'=1}^{N_O}
c_mc^{(m')}
\E\left[
\left\langle
\bm\Delta^{(m)}(\bmt),\bm\Delta^{(m')}(\bmt)
\right\rangle
\right].
\label{app_spp_sgd_variance_pf_2}
\end{align}
Since each $\tilde{\bm g}^{(m)}(\bmt)$ is unbiased, we have
\[
\E\!\left[\bm\Delta^{(m)}(\bmt)\right]=\bm 0
\qquad
\text{for all } m\in[N_O].
\]
Moreover, for $m\neq m'$, the random vectors $\bm\Delta^{(m)}(\bmt)$ and $\bm\Delta^{(m')}(\bmt)$ are independent since the corresponding estimators are constructed from independent batches of \texttt{SPPS} samples. Hence
\begin{align}
\E\left[
\left\langle
\bm\Delta^{(m)}(\bmt),\bm\Delta^{(m')}(\bmt)
\right\rangle
\right]
&=
\left\langle
\E\!\left[\bm\Delta^{(m)}(\bmt)\right],
\E\!\left[\bm\Delta^{(m')}(\bmt)\right]
\right\rangle
=0,
\qquad m\neq m'.
\label{app_spp_sgd_variance_pf_3}
\end{align}
Substituting Eq.~(\ref{app_spp_sgd_variance_pf_3}) into Eq.~(\ref{app_spp_sgd_variance_pf_2}) yields
\begin{align}
\E\left[
\left\|
\tilde{\bm g}(\bmt)-\nabla_{\bmt}f(\bmt)
\right\|_2^2
\right]
&=
\sum_{m=1}^{N_O}(c_m)^2
\E\left[
\left\|
\bm\Delta^{(m)}(\bmt)
\right\|_2^2
\right].
\label{app_spp_sgd_variance_pf_4}
\end{align}
For each $m\in[N_O]$, Lemma~\ref{lem_spp_grad_unbiased_variance} applied to the Pauli observable $O_m$ gives
\begin{equation}\label{app_spp_sgd_variance_pf_5}
\E\left[
\left\|
\bm\Delta^{(m)}(\bmt)
\right\|_2^2
\right]
\le
\E\left[
\left\|
\tilde{\bm{g}}^{(m)}(\bmt)
\right\|_2^2
\right]
\le
\frac{L}{aB}(1+2a)^L\kappa(\bmt).
\end{equation}
Applying Eq.~(\ref{app_spp_sgd_variance_pf_5}) in Eq.~(\ref{app_spp_sgd_variance_pf_4}) yields
\begin{align}
\E\left[
\left\|
\tilde{\bm g}(\bmt)-\nabla_{\bmt}f(\bmt)
\right\|_2^2
\right]
&\le
\frac{L}{aB}(1+2a)^L\kappa(\bmt)
\sum_{m=1}^{N_O}(c_m)^2.
\label{app_spp_sgd_variance_pf_6}
\end{align}

It remains to relate the coefficient norm to $\|O\|_2$. Since the Pauli operators form an orthogonal basis under the Hilbert--Schmidt inner product,
\[
\Tr\!\left(O_m O_{m'}\right)=2^n\delta_{m,m'},
\]
and thus
\begin{align}
\|O\|_F^2
&=
\Tr(O^\dagger O)
=
2^n\sum_{m=1}^{N_O}(c_m)^2.
\label{app_spp_sgd_variance_pf_7}
\end{align}
Hence
\begin{equation}\label{app_spp_sgd_variance_pf_8}
\sum_{m=1}^{N_O}(c_m)^2
=
2^{-n}\|O\|_F^2
\le
2^{-n}\,\mathrm{rank}(O)\,\|O\|_2^2
\le
\|O\|_2^2.
\end{equation}
Combining Eqs.~(\ref{app_spp_sgd_variance_pf_6}) and (\ref{app_spp_sgd_variance_pf_8}) proves Eq.~(\ref{app_spp_sgd_variance_bound}).

\end{proof}

\subsection{Convergence complexity of \texttt{SPPS}-based SGD}
\label{app_spp_sgd_corollary}

We are now ready to combine the generic convergence guarantee in Lemma~\ref{lem_spp_sgd_generic} with the variance bound for the full \texttt{SPPS} gradient estimator in Lemma~\ref{lem_spp_sgd_full_variance} to derive the convergence complexity of \texttt{SPPS}-based SGD. The resulting guarantee is stated explicitly in Theorem~\ref{thm_spp_sgd_explicit}. 

\begin{theorem}[Formal statement of Corollary~\ref{cor_spp_sgd}]\label{thm_spp_sgd_explicit}
We follow the notation in Lemma~\ref{lem_spp_sgd_full_variance}. Consider GD applied to the quantum objective $f(\bmt)$ in Eq.~(\ref{app_spp_sgd_quantum_obj}), where the gradient is given by Eq.~(\ref{app_spp_sgd_variance_full_grad}) for a general observable $O$ with $N_O$ Pauli terms. Let $\bmt^{(t)}\in\mathbb R^L$ denote the parameters at iteration $t$, and define $\kappa_T:=\max_{t\in[T]}\prod_{j=1}^{L}(1+|\sin 2\bmt_j^{(t)}|)$. Then, for any $\epsilon>0$, the guarantee $\min_{t\in[T]}\E\|\nabla_{\bmt}f(\bmt^{(t)})\|_2^2\le \epsilon$ holds within $T=\mathcal{O}(L^3\|O\|_2^4/\epsilon^2)$ iterations, using $\mathcal{O}(N_O L^3\|O\|_2^4\kappa_T/\epsilon^2)$ \texttt{SPPS} samples in total.
\end{theorem}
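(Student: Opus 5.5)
The plan is to combine Lemma~\ref{lem_spp_sgd_generic} (the variance-based SGD guarantee, resting on the $L\|O\|_2$-smoothness of Lemma~\ref{lem_spp_sgd_smoothness}) with Lemma~\ref{lem_spp_sgd_full_variance} (the variance of the full \texttt{SPPS} estimator). The budget is a fixed per-term batch size $B$ at each iteration. Lemma~\ref{lem_spp_sgd_full_variance}, applied at $\bmt^{(t)}$ and bounded using $\kappa(\bmt^{(t)})\le\kappa_T$, gives
\begin{equation*}
\E\big\|\tilde{\bm g}^{(t)}-\nabla_{\bmt}f(\bmt^{(t)})\big\|_2^2\le G_T^2:=\frac{L\|O\|_2^2}{aB}(1+2a)^L\kappa_T .
\end{equation*}
Because the iterates are random, this bound has to be applied conditionally on $\bmt^{(t)}$ and then followed by the tower property. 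Lemma~\ref{lem_spp_sgd_generic} already uses the bound in expectation, so I only need to remark that the \texttt{SPPS} samples at step $t$ are drawn fresh, independent of the past. Unbiasedness conditional on $\bmt^{(t)}$ comes from Eq.~(\ref{app_spp_sgd_variance_unbiased}).

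Next I would apply Lemma~\ref{lem_spp_sgd_generic}. It gives $\min_t\E\|\nabla f\|_2^2\le 4\|O\|_2G_T\sqrt{L/T}$. Setting this to at most $\epsilon$ requires $T\ge 16\|O\|_2^2G_T^2L/\epsilon^2$. I would treat $a$ as a constant, and I expect the $(1+2a)^L$ factor to be absorbed (or $a=\Theta(1/L)$ chosen so it is $\O(1)$, at the cost of $1/a=\O(L)$). Substituting $G_T^2=\O(L\|O\|_2^2\kappa_T/B)$ gives $T=\O(L^2\|O\|_2^4\kappa_T/(B\epsilon^2))$.

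The choice of $B$ is what produces the stated forms. Taking $B=\Theta(\kappa_T)$, or more conservatively choosing $B$ so that $G_T^2=\O(L\|O\|_2^2)$, makes $G_T^2$ independent of $\kappa_T$. Then $T=\O(L^2\|O\|_2^4/\epsilon^2)$, and the side condition $T\ge 4L\|O\|_2^2/G_T^2$ in Lemma~\ref{lem_spp_sgd_generic} holds automatically for $\epsilon\lesssim\|O\|_2$. With $a=\Theta(1/L)$, the extra factor $1/a$ gives the stated $T=\O(L^3\|O\|_2^4/\epsilon^2)$. The total sample count is $N_O\cdot B\cdot T$, since there are $N_O$ Pauli terms with independent batches. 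In the bookkeeping the $\kappa_T$ factor has to appear exactly once, giving $\O(N_OL^3\|O\|_2^4\kappa_T/\epsilon^2)$.

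The main obstacle is that $\kappa_T$ depends on the trajectory, which is itself random and determined by the samples. Choosing $B$ in terms of $\kappa_T$ is therefore circular. I would first try to resolve this by letting $B_t$ adapt to $\kappa(\bmt^{(t)})$, which is computable from the current parameters. This gives a per-step variance bound that is deterministic given $\bmt^{(t)}$, and the total cost is then bounded by $N_O\kappa_T$ times the remaining factors. If that is not clean, the fallback is to state the guarantee conditionally with $\kappa_T$ treated as an a priori bound, using the worst case $\kappa_T\le 2^L$. The constant bookkeeping for $a$ and $(1+2a)^L$ is routine and I would track it only up to $\O(\cdot)$.
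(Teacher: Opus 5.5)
Your proposal is correct and follows essentially the paper's route. You combine Lemma~\ref{lem_spp_sgd_full_variance} with Lemma~\ref{lem_spp_sgd_generic}, take $a=\Theta(1/L)$ so that $(1+2a)^L=\mathcal{O}(1)$, and take $B=\lceil\kappa_T\rceil$ so that $G_T^2=\Theta(L^2\|O\|_2^2)$; this yields $T=\mathcal{O}(L^3\|O\|_2^4/\epsilon^2)$ and $N_O B T$ samples in total. Your two extra remarks are refinements the paper's proof passes over silently: conditioning on $\bmt^{(t)}$ with fresh samples, and replacing the circular choice of $B$ via the trajectory-dependent $\kappa_T$ with an adaptive $B_t\propto\kappa(\bmt^{(t)})$.
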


\begin{proof}
By Lemma~\ref{lem_spp_sgd_full_variance}, the full \texttt{SPPS} gradient estimator in Eq.~(\ref{app_spp_sgd_variance_full_grad}) is unbiased and satisfies
\begin{equation}\label{app_spp_sgd_explicit_pf_1}
\E\left[
\left\|\tilde{\bm g}^{(t)}-\nabla_{\bmt}f(\bmt^{(t)})\right\|_2^2
\right]
\le
\frac{2L\|O\|_2^2}{aB}(1+2a)^L\kappa(\bmt^{(t)})
\le
\frac{2L\|O\|_2^2}{aB}(1+2a)^L\kappa_T
\end{equation}
for all $t\in[T]$, where the last inequality follows from the definition of $\kappa_T$. Choosing $a=\O(1/L)$ and $B= \left\lceil \kappa_T \right\rceil$, Eq.~(\ref{app_spp_sgd_explicit_pf_1}) further yields 
\begin{equation}\label{app_spp_sgd_explicit_pf_1_1}
\E\left[ \left\|\tilde{\bm g}^{(t)}-\nabla_{\bmt}f(\bmt^{(t)})\right\|_2^2 \right] \le
{cL^2 \|O\|_2^2} ,
\end{equation}
where $c>0$ is an absolute constant. Therefore, the variance condition in Lemma~\ref{lem_spp_sgd_generic} holds with
\begin{equation}\label{app_spp_sgd_explicit_pf_2}
G_T^2={cL^2 \|O\|_2^2}.
\end{equation}

Following Eq.~(\ref{app_spp_sgd_explicit_pf_2}), we have $T\ge \frac{4L\|O\|_2^2}{G_T^2}$, which is guaranteed by the condition $T \geq \O(1)$ since 
\begin{equation}
\frac{4L\|O\|_2^2}{G_T^2} = \frac{4L\|O\|_2^2}{{cL^2 \|O\|_2^2} } = \frac{4}{{cL} } \leq \O(1).
\end{equation}
Thus, Lemma~\ref{lem_spp_sgd_generic} implies that, if the learning rate $\eta^{(t)}=\eta=\sqrt{\frac{4}{L G_T^2 T}}$ for all $t\in[T]$, then
\begin{equation}\label{app_spp_sgd_explicit_pf_5}
\min_{t\in[T]}
\E\left\|\nabla_{\bmt}f(\bmt^{(t)})\right\|_2^2
\le
C\sqrt{\frac{L\|O\|_2^2 G_T^2}{T}} = C \sqrt{\frac{cL^3\|O\|_2^4}{T}}
\end{equation}
for an absolute constant $C>0$, where the last equation yields from Eq.~(\ref{app_spp_sgd_explicit_pf_2}).

We now derive a sufficient condition on $T$ for the right-hand side of Eq.~(\ref{app_spp_sgd_explicit_pf_5}) to be at most $\epsilon$. Requiring
\begin{equation}\label{app_spp_sgd_explicit_pf_7}
C \sqrt{\frac{cL^3\|O\|_2^4}{T}} \le \epsilon
\end{equation}
is equivalent to
\begin{equation}\label{app_spp_sgd_explicit_pf_8}
T\ge \frac{c C^2L^3\|O\|_2^4}{\epsilon^2}.
\end{equation}
Therefore, the total \texttt{SPPS} sample count is
\begin{equation}
BN_O \cdot T= \O \left( \frac{ N_O L^3\|O\|_2^4 \kappa_T}{\epsilon^2} \right)
\end{equation}
by considering $B$ samples used for each Pauli operator in the decomposition $O=\sum_{k=1}^{N_O} c_k O_k$ and using $B= \left\lceil \kappa_T \right\rceil$. This completes the proof.

\end{proof}

\section{Experimental settings and additional experimental results}
\label{app_exp}

In this appendix, we provide additional experimental settings and numerical results that complement Sec.~\ref{spp_experiment}. 
We first describe the implementation details of experiments, and then report further results on tensor-network comparisons, sensitivity to the sampling parameter, circuit-depth dependence, and state-encoding circuit preparation. In particular, \texttt{SPPS} uses the implementation described in App.~\ref{app_spp_alg}, including sequential path sampling, importance reweighting, PAD, and the adaptive gradient-error proxy in App.~\ref{app_spp_gradient_error_proxy}. 
For all additional experiments, runtime denotes the cumulative time spent on stochastic gradient estimation and parameter updates.

\subsection{Experimental settings}
\label{app_exp_settings}

\textbf{VQE settings.}
The Hamiltonian is
\begin{equation}
H
=
-J\sum_{i=1}^{n-1} Z_i Z_{i+1}
-
g\sum_{i=1}^{n} X_i ,
\end{equation}
with $J=1.0$ and $g=1.0$ unless otherwise specified. 
The variational circuit is initialized from the plus state by default and uses a one-dimensional hardware-efficient ansatz. 
Each layer applies $R_Z$ and $R_Y$ rotations on all qubits, followed by an open-boundary nearest-neighbor CNOT chain. 
For a circuit with depth $L$, each trainable angle is initialized from $\mathrm{Unif}[-0.25\pi/L,0.25\pi/L]$, following the small-angle initialization strategies~\cite{zhang2022escaping,PhysRevApplied.22.054005,mhiri2025unifying,peng2025titan} designed to mitigate barren plateaus~\cite{McClean2018,Larocca2025}.
For \texttt{SPPS}, $\delta$ denotes the adaptive gradient-error proxy threshold. 
The smoothing value in the branch-sampling distribution is term-dependent and follows the adaptive rule in App.~\ref{app_spp_smoothing_parameter}. 
It is initialized as $a_{\mathrm{init}}=0.25/L$ for a circuit with depth $L$, and then updated using the path-statistics-based rule in Eq.~\eqref{app_spp_a_update}. 
Unless otherwise specified, we use $b_a=0.5$.

\textbf{QNN settings.}
For the QNN pre-training experiment, we use the same dataset, circuit, optimizer, and evaluation metrics as described in Sec.~\ref{spp_experiment}. 
The synthetic labels are generated by a fixed $40$-qubit near-Clifford
circuit $V$, which consists of two random Clifford brick-wall
layers. Each layer applies a single-qubit gate sampled uniformly from
$\{H,S,S^\dagger,X,Y,Z\}$ to every qubit, followed by two staggered
nearest-neighbor entangling sublayers covering the odd and even bonds.
Each two-qubit gate is sampled from $\{\mathrm{CNOT},\mathrm{CZ}\}$,
with the control--target direction of each CNOT sampled uniformly.
Four single-qubit rotations, sampled uniformly from
$\{R_X,R_Y,R_Z\}$, are inserted at random qubits and circuit positions.
Their angles are sampled from $\mathrm{Unif}[-\pi,\pi]$.
Candidate inputs are computational-basis product states $|\psi_i\rangle$ whose bits are
sampled independently from $\mathrm{Bernoulli}(0.5)$.
For each input $|\psi_i\rangle$, the label is
\begin{equation}
y_i
=
\langle\psi_i|V^\dagger O V|\psi_i\rangle,
\qquad
O=\frac{1}{n}\sum_{j=1}^{n}Z_j.
\end{equation}
The labels are evaluated using untruncated Pauli propagation. We retain unique inputs satisfying $|y_i|>0.1$. This filtering prevents the target labels from
concentrating near zero and yields a structured learning task rather than the
sufficiently randomized quantum-data regime, in which training
efficiency and generalization can deteriorate exponentially with the
number of qubits~\cite{zhang2024curse}.

The trainable QNN uses a $4$-layer hardware-efficient ansatz. 
Each layer applies an $R_Z$-$R_Y$-$R_Z$ rotation block to every qubit, followed by a nearest-neighbor CNOT chain. 
Thus, for the $40$-qubit QNN, the total number of trainable parameters is $4\times 40\times 3=480$. 
The circuit parameters are randomly initialized with scale $0.05$. 
For \texttt{SPPS}, we use the fixed-smoothing option in App.~\ref{app_spp_smoothing_parameter}, i.e., the smoothing value is kept as a constant $a=a_0=0.01$ throughout training.

\textbf{Tensor-network baseline.}
For the additional scaling comparison, we compare \texttt{SPPS} against a tensor-network simulator implemented with \texttt{PastaQ.jl}~\cite{pastaq}. 
The comparison is performed on the TFIM VQE benchmark with $n\in\{20,40,60,80,100\}$ qubits and $L=6$ ansatz layers. 
The tensor-network baseline uses a bond dimension $\chi=2$. 
Both methods are evaluated under the same VQE objective, and we report the final normalized energy error and the optimization runtime.

\textbf{State-preparation benchmark.}
We further evaluate \texttt{SPPS} on the preparation of quantum encoding circuits. 
Given a normalized classical vector $\bm{v}\in\mathbb{R}^{2^n}$, the target state is the amplitude-encoded state
\begin{equation}
\ket{v}
=
\sum_{j=0}^{2^n-1} v_j \ket{j}.
\end{equation}
The goal is to optimize a parameterized circuit $U(\bmt)$ such that $U(\bmt)\ket{0}^{\otimes n}$ approximates $\ket{v}$. 
Equivalently, we maximize the fidelity
\begin{equation}
F(\bmt)
=
\left|
\bra{v} U(\bmt)\ket{0}^{\otimes n}
\right|^2
=
\Tr\!\left[
O_v U(\bmt) \rho_0 U(\bmt)^\dag
\right],
\quad
O_v=\ket{v}\!\bra{v},
\quad
\rho_0=\ket{0}\!\bra{0}^{\otimes n},
\end{equation}
or minimize the infidelity
\begin{equation}
\mathcal{I}(\bmt)
=
1-F(\bmt).
\end{equation}
This objective is a direct instance of Eq.~\eqref{dypp_f_loss_eq} with a global projector observable. 
Since $O_v$ is not a single Pauli observable, we expand it in the Pauli basis and keep the largest $100$ non-identity Pauli coefficients by magnitude and include the identity contribution as an offset. 
Each Pauli term is then handled by the same \texttt{SPPS} estimator described in App.~\ref{app_spp_alg}, and the final gradient is obtained by linearly combining term-wise stochastic gradients.

For the reported state-preparation experiment, we use MNIST images as target data. 
Each image is resized to a $4\times4$ grayscale vector, normalized to unit $\ell_2$ norm, and amplitude-encoded into a $4$-qubit target state. 
We optimize a hardware-efficient circuit consisting of repeated $R_ZR_YR_Z$ single-qubit rotation layers and nearest-neighbor CNOT chains. 
We vary the circuit depth as $L\in\{1,2,3,4\}$. The circuit parameters are randomly initialized with scale $0.1$. The optimizer is gradient descent with learning rate $0.5$ for $200$ steps.

\textbf{Hardware environment.}
Unless otherwise specified, experiments were run on a personal computer with an Apple M1 Pro chip. 
The large-scale VQE experiments with $n\in\{20,40,60,80,100\}$ qubits, including the tensor-network scaling comparison, were run on CPU compute nodes equipped with dual-socket AMD EPYC 7713 processors at 2.0GHz. No GPU acceleration is used in the reported runtimes.

\subsection{Additional VQE results}
\label{app_exp_vqe_results}

\textbf{\texttt{SPPS} remains competitive against tensor-network simulation.}
Fig.~\ref{dpp_fig_app_tn} compares \texttt{SPPS} with the tensor-network baseline on the $L=6$ TFIM VQE benchmark. 
Across $n=20$ to $100$ qubits, \texttt{SPPS} attains a favorable error-runtime trade-off. 
The advantage is especially clear at larger system sizes, where the error of tensor-network simulation grows more rapidly. 
This comparison complements Fig.~\ref{dpp_fig3}(c,d) by showing that \texttt{SPPS} is not only competitive against \texttt{Tb-PBS} baselines, but also provides an efficient alternative to tensor-network simulation in this optimization setting.

\begin{figure}[t]
    \centering
    \includegraphics[width=0.7\linewidth]{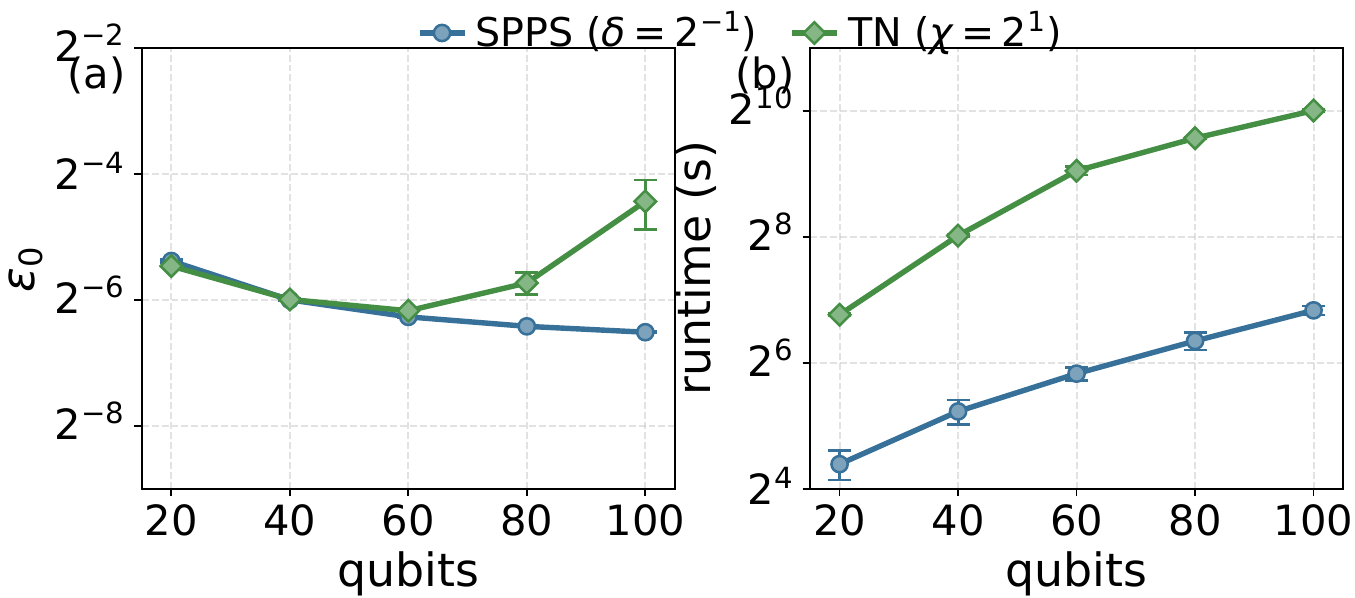}
    \caption{
{
    \textbf{\texttt{SPPS} versus tensor-network simulation on large-qubit VQE pre-training.}
    We compare \texttt{SPPS} and a tensor-network baseline on the $L=6$ TFIM VQE benchmark with $n\in\{20,40,60,80,100\}$ qubits.
    Figures (a) and (b) report the final normalized energy error $\epsilon_0$ and the optimization runtime, respectively.
    Error bars indicate the standard deviation over $10$ independent runs.
    }}
    \label{dpp_fig_app_tn}
\end{figure}

\textbf{Sensitivity to the sampling-smoothing parameter.}
Fig.~\ref{dpp_fig_app_diffa} studies the effect of the smoothing parameter $b_a$ in the \texttt{SPPS} sampling distribution. 
The benchmark uses the $60$-qubit TFIM VQE with ansatz depth $L=8$. 
The results show that \texttt{SPPS} is stable across a broad range of $b_a$ values. 
In particular, moderate values of $b_a$ yield similar final errors, while the runtime varies mildly with the sampling distribution. 
This validates the practical robustness of the smoothed sampling rule introduced in App.~\ref{app_spp_estimator_sample_rule}. 
The result also confirms the role of the smoothing parameter: it should be positive to preserve derivative-sensitive branches, but it does not require delicate fine-tuning in the tested regime.

\begin{figure}[t]
    \centering
    \includegraphics[width=0.7\linewidth]{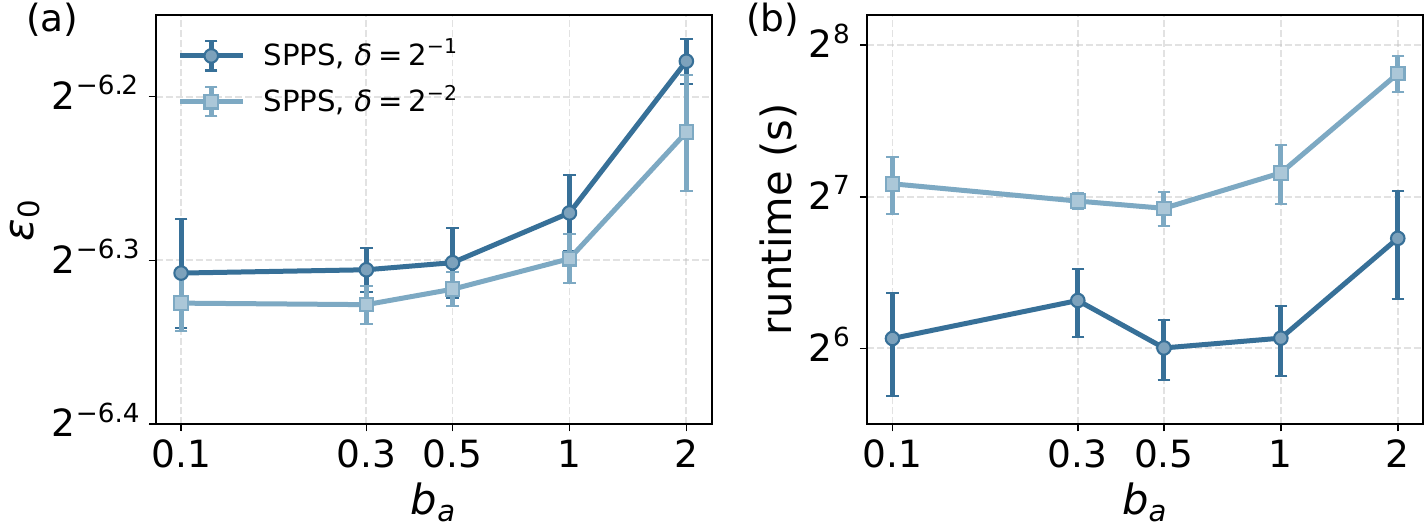}
    \caption{
{
    \textbf{Effect of the sampling-smoothing parameter in \texttt{SPPS}.}
    We evaluate \texttt{SPPS} on the $60$-qubit TFIM VQE benchmark with ansatz depth $L=8$ and vary the smoothing parameter $b_a$ in the path-sampling distribution.
    The two figures show the final normalized energy error $\epsilon_0$ and the optimization runtime.
    The threshold $\delta$ denotes the adaptive gradient-error proxy tolerance described in App.~\ref{app_spp_gradient_error_proxy}.
    Error bars indicate the standard deviation over $5$ independent runs.
    }}
    \label{dpp_fig_app_diffa}
\end{figure}

\textbf{Effect of circuit depth.}
Fig.~\ref{dpp_fig_app_diffL} evaluates \texttt{SPPS} under different ansatz depths for the $60$-qubit TFIM VQE benchmark. 
We fix $b_a=0.5$ and vary $L\in\{4,6,8,10,12\}$. 
The final energy error remains in the same order across different depths, showing that \texttt{SPPS} continues to produce useful stochastic gradients as the circuit becomes deeper. 
Meanwhile, the runtime increases with $L$, which is consistent with the sample-complexity dependence on the path length and the effective branching factor in Theorem~\ref{thm_spp_estimator}. 
This experiment therefore empirically supports the theoretical message that the cost of faithful stochastic Pauli-path simulation is controlled by the trajectory-dependent path complexity.

\begin{figure}[t]
    \centering
    \includegraphics[width=0.7\linewidth]{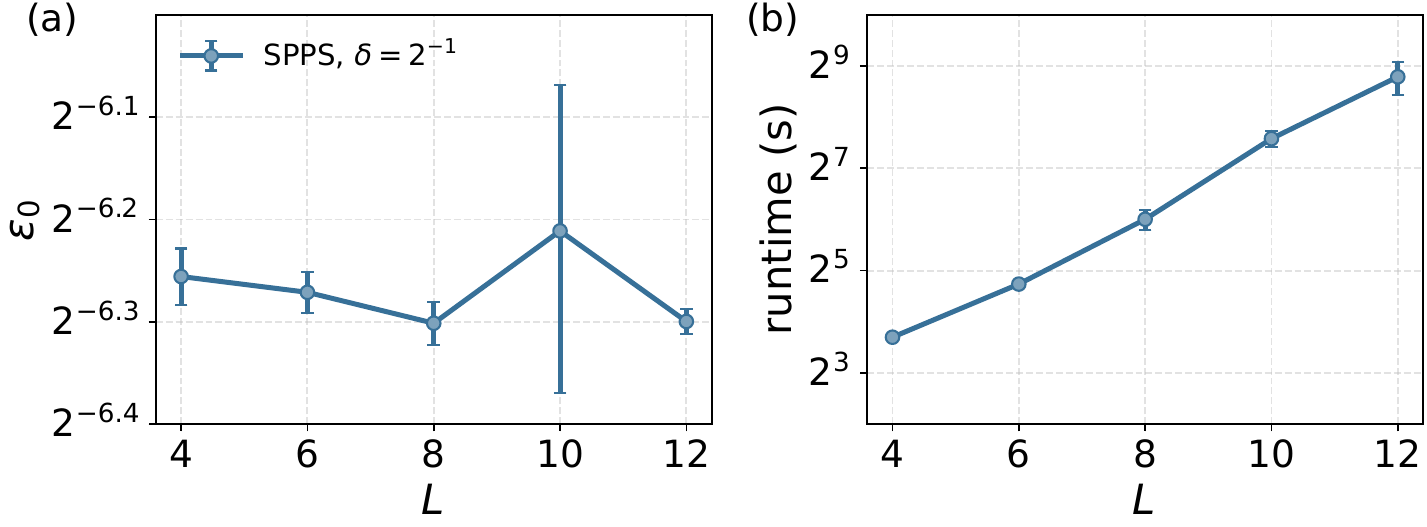}
    \caption{
{
    \textbf{Effect of circuit depth on \texttt{SPPS}-driven VQE pre-training.}
    We evaluate the $60$-qubit TFIM VQE benchmark with fixed sampling-smoothing parameter $b_a=0.5$ and vary the ansatz depth as $L\in\{4,6,8,10,12\}$.
    The figures report the final normalized energy error $\epsilon_0$ and the optimization runtime.
    Error bars indicate the standard deviation over $5$ independent runs.
    }}
    \label{dpp_fig_app_diffL}
\end{figure}

\subsection{Additional state-preparation results}
\label{app_exp_stateprep}

\textbf{\texttt{SPPS} prepares quantum encoding circuits for MNIST amplitude encoding.}
We finally evaluate whether \texttt{SPPS} can be used beyond VQE and QML pre-training by optimizing state-preparation circuits. 
Fig.~\ref{dpp_fig_app_stateprep} summarizes the results. 
Figures (a) and (b) show the endpoint exact infidelity and runtime for depths $L=1,2,3,4$. 
Figure (c) reports the exact infidelity along optimization. 
Increasing $L$ substantially improves the attainable infidelity: shallow circuits with $L=1$ remain far from the target state, whereas depths $L=3$ and $L=4$ reach much lower infidelity after optimization. 
The improvement comes with increased runtime, as deeper circuits contain more parameters and induce more Pauli-path branching. 
The optimization curves further show that \texttt{SPPS} steadily decreases exact infidelity over training, indicating that the stochastic gradients remain informative for global state-preparation objectives.

\begin{figure}[t]
    \centering
    \includegraphics[width=0.98\linewidth]{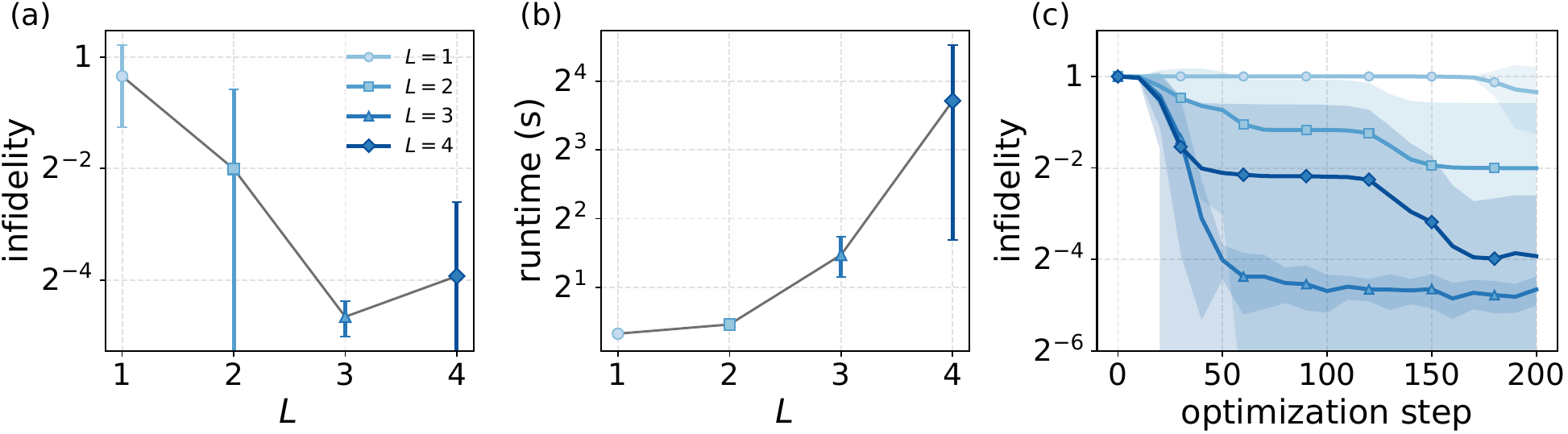}
    \caption{
{
    \textbf{\texttt{SPPS} for preparing MNIST amplitude-encoding circuits.}
    Figures (a) and (b) show the endpoint exact infidelity and cumulative \texttt{SPPS} gradient-estimation runtime, respectively.
    Figure (c) shows the exact infidelity during optimization.
    Error bars and shaded regions indicate the standard deviation over $5$ MNIST images.
    }}
    \label{dpp_fig_app_stateprep}
\end{figure}

\end{document}